\documentclass{article}

\usepackage{amsthm}
\usepackage{amssymb}
\usepackage{amsmath}
\usepackage{graphicx}
\usepackage{url}
\usepackage{paralist}
\usepackage{color}

\hoffset=-.50in
\voffset=-.75in

\newcommand{\Rec}{\mathit{Rec}}
\newcommand{\cof}{\mathit{Cof}}
\newcommand{\card}{\mathit{Card}}
\newcommand{\res}{\upharpoonright}

\newcommand{\Stab}{\mathit{Stab}}

\newcommand{\tree}{\mathit{tree}}
\newcommand{\ipath}{\mathit{path}}
\newcommand{\notpath}{\mathit{notpath}}
\newcommand{\control}{\mathit{control}}
\newcommand{\samelength}{\mathit{samelength}}
\newcommand{\diff}{\mathit{diff}}
\newcommand{\shorter}{\mathit{shorter}}
\newcommand{\length}{\mathit{length}}
\newcommand{\notincluded}{\mathit{notincluded}}
\newcommand{\noteq}{\mathit{noteq}}

\newcommand{\term}{\mathit{term}}

\newcommand{\notfirstO}{\mathit{notfirst0}}
\newcommand{\firstO}{\mathit{first0}}
\newcommand{\notthirdO}{\mathit{notthird0}}
\newcommand{\thirdO}{\mathit{third0}}
\newcommand{\num}{\mathit{num}}
\newcommand{\lengthonetwo}{\mathit{length12}}
\newcommand{\Agreeonetwo}{\mathit{agree12}}
\newcommand{\lengththree}{\mathit{length3}}
\newcommand{\pz}{$\Pi^0_1$\ }
\newcommand{\lar}{\leftarrow}

\newcommand{\la}{\langle}
\newcommand{\ra}{\rangle}
\newcommand{\sg}{\sigma}
\newcommand{\afs}{a.a. FS}
\newcommand{\FS}{\mathit{FS}}
\newcommand{\rfs}{\mathit{rec. FS}}
\newcommand{\arfs}{\mathit{a.a. rec. FS}}

\newtheorem{proposition}{Proposition}[section]
\newtheorem{theorem}{Theorem}[section]
\newtheorem{lemma}{Lemma}[section]
\newtheorem{corollary}{Corollary}[section]
\newtheorem{definition}{Definition}[section]
\newtheorem{example}{Example}[section]

\newcommand{\PS}{\mathbb{S}}

\begin{document}

\author{D. Cenzer\\
Department of Mathematics\\ University of Florida\\
Gainesville, FL 32611\thanks{Corresponding author. Email:cenzer@math.ufl.edu}\\
V.W. Marek \\
Department of Computer Science\\ University of Kentucky,\\
Lexington, KY 40506 \thanks{Email: marek@cs.uky.edu}\\
J.B. Remmel\\
Department of Mathematics\\ University of California at San Diego, \\
La Jolla, CA 92903 \thanks{Email: jremmel@ucsd.edu}
}

\title{Index sets for Finite Normal Predicate Logic Programs}

\maketitle

\begin{abstract}
Let $\mathcal{L}$ be a computable first order predicate language with
infinitely many constant symbols  and infinitely many $n$-ary predicate symbols
and $n$-ary functions symbols for all $n \geq 1$ and let 
$Q_0,Q_1,\ldots $  be an effective list
all the finite normal predicate logic programs  over $\cal
L$.   Given some property ${\cal P}$ of finite normal predicate logic programs
over $\cal L$, we define the index set $I_{\cal P}$ to be the set of indices
$e$ such that $Q_e$ has property $\cal P$.  Let $T_0,T_1, \ldots $ be an
effective  list of all primitive recursive trees contained in $\omega^{<
\omega}$.  Then $[T_0],[T_1], \ldots $ is an effective list of all 
$\Pi^0_1$ classes where for any tree $T \subseteq \omega^{< \omega}$, $[T]$
denotes the set of infinite paths through $T$.  We modify constructions of
Marek, Nerode,
and Remmel \cite{MNR} to construct recursive functions $f$ and $g$ such that
for all $e$, (i) there is a one-to-one degree preserving correspondence between
the set of stable models of $Q_e$ and the set of infinite paths through
$T_{f(e)}$ and (ii) there is a one-to-one degree preserving correspondence
between the set of infinite paths through $T_e$ and the set of stable models
$Q_{g(e)}$.  We shall use these two recursive functions  to reduce the problem
of finding the complexity of the index set $I_{\mathcal{P}}$ for 
various properties $\cal P$ of normal finite predicate logic programs 
to the problem of computing index sets for  primitive recursive trees  for 
which there is a large variety of results
\cite{JS71,JS72a,JS72b,jlr91,CSW86,clo85}. 

For example, we use our correspondences to determine the complexity of the
index sets relative to all finite predicate logic programs and relative to
certain special classes of finite predicate logic programs of properties such
as (i) having no stable models, (ii) having at least one stable model, (iii)
having exactly $c$ stable models for any given positive integer $c$, (iv)
having only finitely many stable models, or (vi) having infinitely many stable
models.
\end{abstract}

\section{Introduction} \label{intro}

Past research demonstrated that logic programming with 
the stable model semantics and, more generally, with the answer-set
semantics, is an expressive
knowledge representation formalism. The availability of the
non-classical negation operator $\neg$ allows the user to model
incomplete information, frame axioms, and default assumptions. Modeling
these concepts in classical propositional logic is less direct and
requires much larger representations.
In this paper, we investigate the complexity of index 
sets of various properties of
finite normal  predicate logic programs  associated with the stable model 
semantics as defined by Gelfond and Lifschitz \cite{gl88}.  There 
are several other semantics of logic programs that have been 
studied in the literature such as the 
well-founded semantics \cite{VG89} and other 3-valued semantics 
\cite{P91}.  An algebraic analysis of well-founded semantics in terms of
four-valued logic and the four-valued van Emden-Kowalski operator has been done
in \cite{dmt03}, see also \cite{dmt10}.

It is generally accepted that the stable models semantics is the 
correct semantics for logic programs. In particular a number of 
implementations 
of the stable semantics of logic programs (usually known as {\em Answer Set
Programming}) are now available \cite{ns97,dlvtocl06,clasp07}. 
These implementations
are, basically, limited to finite propositional programs or finite predicate
programs not admitting function symbols.   In addition, the
well-founded semantics of fragments of first-order logic extended by inductive
definitions has been implemented as well \cite{dt06,et02}.

The main goal of this paper is to study the 
the complexity of various properties  
finite predicate logic programs with respect 
to the stable model semantics.  
To be able to precisely state our results, we must briefly review 
the basic concepts of normal logic programs. 
We shall fix a recursive language $\mathcal{L}$ 
which  has infinitely many constant symbols $c_0,c_1, \ldots $, 
infinitely many variables $x_0,x_1, \ldots $, 
infinitely many propositional letters $A_0, A_1, \ldots $, and for 
each $n \geq 1$, infinitely many $n$-ary relation symbols 
$R^n_0, R^n_1, \ldots $ 
and $n$-ary function symbols $f^n_0, f^n_1, \ldots$.   We note here that we
shall generally use the terminology \emph{recursive} rather than the equivalent
term \emph{computable} and likewise use \emph{recursively enumerable} rather
than \emph{computably enumerable}.  
These terms have the same meaning, but the former are standard 
in the logic programming community which is an important audience for our paper. 

A literal is an atomic formula or its negation. A ground literal is a literal
which has no free variables.  The Herbrand base of $\mathcal{L}$ is the set
$H_{\cal L}$ of all ground atoms (atomic statements) of the language.  

A (normal) logic programming clause $C$ is of the form 
\begin{equation}\label{clause1}
c \leftarrow a_1, \ldots, a_n, \neg b_1, \ldots, \neg b_m
\end{equation} 
where $c,a_1, \ldots, a_n, b_1, \ldots, b_m$ are atoms of $\mathcal{L}$. 
Here we
allow either $n$ or $m$ to be zero.  In such a situation, 
we call $c$ the {\em conclusion} of $C$, $a_1, \ldots, a_n$ the 
{\em premises} of 
$C$, $b_1, \ldots, b_n$ the {\em constraints} of $C$ and 
$a_1, \ldots, a_n, \neg b_1, \ldots, \neg b_m$ the {\em body} of 
$C$ and write  $concl(C) =c$, $prem(C) = \{a_1, \ldots, a_n\}$,  
$constr(C) = \{b_1, \ldots, b_m\}$. A ground clause is 
a clause with no free variables. $C$ is called a Horn clause  
if $constr(C) = \emptyset$, i.e., if $C$ has no negated 
atoms in its body.

A finite normal predicate logic program $P$ is a finite set 
of clauses of the form (\ref{clause1}).  $P$ is said to 
be a Horn program if all its clauses are Horn clauses. 
A ground instance of a clause $C$ is a clause obtained by
substituting ground terms (terms without free variables) for all
the free variables in $C$.  The set of all ground instances of the
program $P$ is called {\it ground}$(P)$.  
The Herbrand base of $P$, $H(P)$, is the set of 
all ground atoms that are instances of atoms that appear 
in $P$.  For any set $S$, we let $2^S$ denote the set of all subsets of 
$S$. 

Given a Horn program $P$, we let $T_P:2^{H(P)} \rightarrow 
2^{H(P)}$ denote the usual one-step provability operator \cite{ll89} 
associated with $ground(P)$. That is, for $S \subseteq H(P)$, 
\[
T_P(S) = \{c: \exists_{C \in \mathit{ground}(P)}( (C = c \leftarrow a_1, \ldots,
a_n ) \land ( a_1, \ldots, a_n \in S))\}.
\]
Then $P$ has a least model Herbrand  $M = T_P\uparrow_\omega(\emptyset) = 
\bigcup_{n\geq 0} T_P^n(\emptyset)$
where for any $S \subseteq H(P)$, $T^0_P(S) =S$ and 
$T^{n+1}_P(S) = T_P(T^n_P(S))$.  We denote the least model of a Horn program
$P$ by $lm(P)$.

Given a normal  predicate logic program $P$  and $M\subseteq H(P)$, we define
the {\em Gelfond-Lifschitz reduct} of $P$, $P_M$, via the following two step
process. In Step 1, we eliminate all clauses  $C = p\leftarrow q_1,\ldots, q_n,
\neg r_1, \ldots, \neg r_m$ of $ground(P)$ such that there exists an atom $r_i
\in M$. In Step 2, for each remaining clause $C = p\leftarrow q_1,\ldots, q_n,
\neg r_1, \ldots, \neg r_m$ of $ground(P)$, we replace $C$ by the Horn clause
$C = p\leftarrow q_1,\ldots, q_n$.  The resulting program $P_{M}$ is a Horn
propositional program and, hence, has a least model. If that least model of
$P_M$ coincides with $M$, then $M$ is called a {\em stable model} for $P$. 

Next we define the notion of $P$-proof scheme of a normal {\em propositional}
logic program $P$.  Given a normal propositional  logic program $P$, a
$P$-proof scheme is defined by induction on its length $n$. Specifically, the
set of $P$-proof schemes is defined inductively by declaring that 
\begin{compactdesc}
\item[(I)] $\la \la C_1, p_1 \ra ,U\ra$ is a $P$-proof scheme of length 1 
if $C_1 \in P$,  $p_1 = concl(C_1)$, $prem(C_1) = 
\emptyset$, and $U = constr(C_1)$ and  
\item[(II)]  for $n > 1$, 
$\la \la C_1,p_1\ra,\ldots, \la C_n , p_n \ra, U \ra$ is a $P$-proof
scheme of length $n$ if $\la \la C_1,p_1\ra,\ldots, \la C_{n-1} , p_{n-1} \ra,
\bar{U} \ra$ is a $P$-proof scheme of length $n-1$ and 
$C_n$ is a clause in $P$ such that $concl(C_n) = p_n$, $prem(C_n) \subseteq
\{p_1,\ldots,p_{n-1}\}$ and $U = \bar{U} \cup constr(C_n)$
\end{compactdesc}
If  $\PS = \la \la C_1,p_1\ra,\ldots, \la C_n , p_n \ra, U \ra$ is a $P$-proof
scheme of length $n$, then we let $supp(\PS) =U$ and $concl(\PS) = p_n$.

\begin{example}\label{ex1}
{\rm 
Let $P$ be the normal propositional  logic  program consisting of the 
following four clauses:\\
$ C_1 = p \leftarrow$,
$ C_2 = q \leftarrow p,\neg r$,
$ C_3 = r \leftarrow \neg q$, and
$ C_4 = s \leftarrow \neg t$. \\
Then we have the following useful examples of $P$-proof schemes:
\begin{compactenum}
\item[(a)] $\la\la C_1,p\ra,\emptyset\ra$ is a $P$-proof scheme of length
$1$ with conclusion $p$ and empty support.
\item[(b)] $\la\la C_1,p\ra,\la C_2, q\ra ,\{ r\}\ra$ is a $P$-proof scheme of 
length $2$ with conclusion $q$ and support $\{r\}$.
\item[(c)] $\la\la C_1,p\ra,\la C_3, r\ra ,\{ q\}\ra$ is a $P$-proof scheme of 
length $2$ with conclusion $r$  and support $\{q\}$.
\item[(d)] $\la\la C_1,p\ra, \la C_2, q \ra , \la C_3, r\ra ,\{ q,r\}\ra$ 
is a $P$-proof scheme of 
length $3$ with conclusion $r$ and support $\{q,r\}$.
\end{compactenum}
In this example we see that the proof scheme in (c) had an unnecessary
item, the first term, while in (d) the proof scheme was supported by
a set containing $q$, one of atoms that were proved on the way to
$r$. $\hfill\Box$}
\end{example}

A $P$-proof scheme differs from the usual Hilbert-style proofs 
in that it carries within itself its own applicability condition. In
effect, a $P$-proof scheme is a {\em conditional} proof of its
conclusion. It becomes applicable when all the constraints collected in
the support are satisfied. Formally, for a set $M$ of atoms, we say 
that a $P$-proof scheme $\PS$ is $M$-{\em applicable} or 
that $M$ {\em admits} $\PS$ if $M \cap \mathit{supp}(\PS) = \emptyset$. The
fundamental connection between proof schemes and stable models is
given by the following proposition.

\begin{proposition}\label{keystable}
For every normal propositional  logic program $P$ and every set $M$ of
atoms, $M$ is a stable model of $P$ if and only if  
\begin{compactenum}
\item[(i)] for every $p \in M$,  there is a $P$-proof scheme $\PS$ with
conclusion $p$ such that $M$ admits $\PS$ and
\item[(ii)] for every $p \notin M$,  there is no $P$-proof scheme $\PS$ 
with conclusion $p$ such that $M$ admits $\PS$.
\end{compactenum}
\end{proposition}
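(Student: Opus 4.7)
The plan is to establish the equivalence by going through the Gelfond--Lifschitz reduct $P_M$. Define $D(P,M)$ to be the set of all conclusions of $P$-proof schemes $\PS$ with $M \cap \mathit{supp}(\PS) = \emptyset$; that is, the set of conclusions of those proof schemes that $M$ admits. The core of the argument is the claim $D(P,M) = lm(P_M)$. Once this is in hand, observe that condition (i) is exactly $M \subseteq D(P,M)$ and condition (ii) is exactly $D(P,M) \subseteq M$, so (i) and (ii) together say $M = D(P,M) = lm(P_M)$, which is the defining property of $M$ being a stable model of $P$.

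For the inclusion $D(P,M) \subseteq lm(P_M)$, I would proceed by induction on the length $n$ of an $M$-admitted proof scheme $\PS = \langle\langle C_1,p_1\rangle,\ldots,\langle C_n,p_n\rangle, U\rangle$. Since $\mathit{constr}(C_i) \subseteq U$ for every $i \leq n$ and $M \cap U = \emptyset$, every clause $C_i$ survives Step 1 of the reduct construction, and its Horn reduct $p_i \leftarrow \mathit{prem}(C_i)$ lies in $P_M$. Moreover the truncation $\langle\langle C_1,p_1\rangle,\ldots,\langle C_i,p_i\rangle, U_i\rangle$ with $U_i \subseteq U$ is itself an $M$-admitted proof scheme, so by induction $p_1,\ldots,p_{n-1} \in lm(P_M)$. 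Because $\mathit{prem}(C_n) \subseteq \{p_1,\ldots,p_{n-1}\}$, one application of $T_{P_M}$ then places $p_n$ into $lm(P_M)$.

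For the reverse inclusion $lm(P_M) \subseteq D(P,M)$, I would argue by induction on $k$ that $T_{P_M}^k(\emptyset) \subseteq D(P,M)$. If $p \in T_{P_M}^{k+1}(\emptyset)$, there is a Horn clause $p \leftarrow q_1,\ldots,q_j$ in $P_M$ coming from some $C \in \mathit{ground}(P)$ with $\mathit{constr}(C) \cap M = \emptyset$, and by the induction hypothesis each $q_i$ has some $M$-admitted $P$-proof scheme $\PS_i$. Concatenating $\PS_1,\ldots,\PS_j$ and appending $\langle C, p\rangle$ at the end yields a $P$-proof scheme whose support is $\mathit{constr}(C) \cup \bigcup_i \mathit{supp}(\PS_i)$, which is disjoint from $M$; hence it witnesses $p \in D(P,M)$.

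The main obstacle is the bookkeeping in this reverse direction: one must verify that the concatenation genuinely satisfies the inductive clause defining a proof scheme, in particular that whenever a clause $C_\ell$ appears, its premises are among the conclusions listed earlier in the combined sequence. This follows because $\mathit{prem}(C)$ is exactly $\{q_1,\ldots,q_j\}$, which is the set of final conclusions of the $\PS_i$, and each $\PS_i$ already satisfies the premise condition internally. With this combinatorial check in place, the equivalence of stability with conditions (i) and (ii) is an immediate consequence of $D(P,M) = lm(P_M)$.
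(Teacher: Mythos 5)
Your proof is correct. Note that the paper itself states Proposition \ref{keystable} without proof (it is quoted as the known ``fundamental connection'' from earlier work on stable models and proof schemes), so there is no in-paper argument to compare against; your route --- showing that the set of conclusions of $M$-admitted proof schemes coincides with $lm(P_M)$, by truncating schemes for one inclusion and concatenating schemes and appending the reduct clause's source clause for the other --- is exactly the standard argument one would expect, and your bookkeeping (supports of prefixes are contained in the full support; premises of each appended clause occur among earlier conclusions, including the degenerate case $prem(C)=\emptyset$) is sound.
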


A $P$-proof scheme may not need all its clauses to prove its
conclusion.  It may be possible to omit some clauses and still
have a proof scheme with the same conclusion.  Thus we 
define a pre-order  on $P$-proof schemes $\PS$, $\mathbb{T}$ 
by declaring that  
$\PS \prec \mathbb{T}$ if

\begin{compactenum}
\item $\PS ,\mathbb{T}$ have the same conclusion,
\item Every clause in $\PS$ is also a clause of $\mathbb{T}$.
\end{compactenum}
The relation $\prec$ is reflexive, transitive, and well-founded. 
Minimal elements of $\prec$ are minimal proof schemes.
A given atom may be the conclusion of no, one, finitely many, or
infinitely many different minimal $P$-proof schemes.  These
differences are clearly computationally significant if one is
searching for a justification of a conclusion.  

If $P$ is a finite normal 
predicate logic program, then we define a $P$-proof scheme 
to be a $ground(P)$-proof scheme. Since we are considering 
finite normal programs over our fixed recursive language $\mathcal{L}$,
we can use standard G\"odel numbering techniques to assign 
code numbers to atomic formulas, clauses, and proof schemes. That is, 
we can effectively assign a natural number to each symbol in 
$\mathcal{L}$.  Strings may be coded by natural numbers 
in the usual fashion. Let $\omega =\{0,1,2,\ldots \}$ denote the set of 
natural numbers and let 
$[x,y]$ denote the standard pairing function $\frac12(x^2+2xy+y^2+3x+y)$ 
and, for $n \geq 2$, we let 
 $[x_0,\ldots,x_n] = [[x_0,\ldots,x_{n-1}],x_n]$. Then 
a string $\sigma =(\sg(0), \ldots, \sg(n-1))$ of length $n$ may be coded by $
c(\sigma) = [n,[\sigma(0),\sigma(1),\ldots,\sigma(n-1)]]$ and also
$c(\emptyset ) = 0$. We define the canonical index 
of any finite set $X= \{x_1 < \cdots < x_n\} \subseteq \omega$ by 
$can(X) = 2^{x_1} + 2^{x_2} + \cdots + 2^{x_n}$. We define 
$can(\emptyset) =0$. 
Then we can think of formulas of $\mathcal{L}$ 
as sequences of natural numbers so that the code of a formula 
is just the code of the sequence of numbers associated with 
the symbols in the formula. 
Then a clause $C$ as in (\ref{clause1}) 
can be assigned the code of the triple $(x,y,z)$ where 
$x$ is the code of the conclusion of $C$, $y$ is the canonical index 
of the set of codes of $prem(C)$, and $z$ is the canonical index 
of the set of codes of $constr(C)$. 
Finally the code of a proof scheme $\PS = \la \la C_1,p_1\ra,\ldots, \la C_n , p_n \ra, U \ra$ consists of the code of a pair $(s,t)$ where 
$s$ is the code of the sequence $(a_1, \ldots, a_n)$ where 
$a_i$ is the code of the pair of codes for $C_i$ and $p_i$ and 
$t$ is the canonical index of the set of codes for elements of $U$. 
It is then not difficult to verify that 
for any given finite normal predicate logic program $P$, the questions 
of whether a given $n$ is the code of a ground atom, a ground 
instance of a clause in $P$, or a $P$-proof 
are  primitive recursive predicates.  The key observation to 
make is that since $P$ is finite and the usual unification 
algorithm is effective, we can explicitly test whether 
a given number $m$ is the code of a ground atom or a ground instance of 
a clause in $P$ without doing any unbounded searches. It is 
then easy to see that, once we can determine if a number 
$m$ is a code of ground instance of a clause of $P$ in a primitive 
recursive fashion, then there is a primitive recursive algorithm 
which determines whether a given number $n$ is the 
code of a minimal $P$-proof scheme.  

If $P$ is a finite normal predicate logic program over $\cal L$, 
we let $N_k(P)$ be the set of all codes of minimal $P$-proof schemes 
$\PS$ such that all the atoms appearing in all the rules used 
in $\PS$ are smaller than $k$. Obviously $N_k(P)$ is finite. 
Since the predicate ``minimal $P$-proof scheme'', which holds only on codes 
of minimal $P$-proof schemes, is a primitive recursive predicate, it 
easily follows that we can uniformly construct 
a primitive recursive function $h_P$ such 
that $h_P(k)$ equals the canonical index for $N_k(P)$.

A finite normal predicate logic program $Q$ over $\cal L$ may be written out as a finite
string over a finite alphabet and thus may be assigned a G\"{o}del number
$e(Q)$ in the usual fashion. The set of G\"{o}del numbers of well-formed
programs is well-known to be primitive recursive (see Lloyd
\cite{ll89}). Thus we may let $Q_e$ be the program with G\"{o}del number
$e$ when this exists and let $Q_e$ be the empty program otherwise. 
For any property $\mathcal{P}$ of finite normal predicate logic programs, let 
$I(\mathcal{P})$ be the set of indices $e$ such that $Q_e$ has property
$\mathcal{P}$. 

Next we define the notions of decidable normal logic programs   and of 
normal logic programs which have the finite support property.  Proposition
\ref{keystable} says that the presence and absence of the atom $p$ in a stable
model of a finite normal predicate logic program $P$ depends {\em only} on the
supports of its $ground(P)$-proof schemes.  This fact naturally leads to a
characterization of stable models in terms of propositional satisfiability.
Given $p \in H(P)$, the {\em defining equation} for $p$ with respect to  $P$ is
the following propositional formula:
\begin{equation}\label{defeq}
p \Leftrightarrow (\neg U_1 \lor \neg U_2 \lor \ldots )
\end{equation}
where $\la U_1,U_2,\ldots \ra$ is the list of all supports of minimal
$ground(P)$-proof schemes. Here for any finite set $S = \{s_1, \ldots, s_n\}$
of atoms, $\neg S = \neg s_1 \wedge \cdots \wedge \neg s_n$. If $U =
\emptyset$, then $\neg U = \top$.  Up to a total ordering of the finite sets of
atoms such a formula is unique. For example, suppose we fix a total order on
$H(P)$, $p_1 < p_2 < \ldots $. Then given two sets of atoms, $U = \{u_1 <
\ldots < u_m\}$ and $V = \{v_1 < \ldots < v_n\}$, we say that $U \prec V$, if
either (i) $u_m < v_n$, (ii) $u_m = v_n$ and $m < n$, or (iii) $u_m = v_n$, $n=
m$, and $(u_1, \ldots, u_n)$ is lexicographically less than $(v_1, \ldots,
v_n)$.  We also define $\emptyset \prec U$ for any finite nonempty set $U$.  We
say that (\ref{defeq}) is the {\it defining equation} for $p$ relative to $P$
if $U_1 \prec U_2 \prec \ldots $. We will denote  the defining equation for $p$
with respect to $P$ by $\mathit{Eq}_p^P$.  When $P$ is a Horn program, an atom
$p$ may have an empty support or no support at all. The first of these
alternatives occurs when $p$ belongs to the least model of $P$,
$\mathit{lm}(P)$. The second alternative occurs when $p \notin \mathit{lm}(P)$.
The defining equations are $p \Leftrightarrow \top$ when $p \in \mathit{lm}(P)$
and $p \Leftrightarrow \bot$  when $p \notin \mathit{lm}(P)$. 

Let $\Phi_P$ be the set $\{\mathit{Eq}_p^P : p \in H(P)\}$. We then have the
following consequence of Proposition \ref{keystable}.
\begin{proposition}\label{p.eq}
Let $P$ be a normal propositional logic program.  Then the stable models of $P$ are
precisely the propositional models of the theory $\Phi_P$.
\end{proposition}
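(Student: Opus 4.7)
The plan is to unfold the definition of $M\models\Phi_P$ atom-by-atom and match it directly to the two conditions of Proposition~\ref{keystable}. Concretely, $M\models\Phi_P$ iff for every $p\in H(P)$ we have $M\models\mathit{Eq}_p^P$, and by the shape of the defining equation this says exactly that
\[
p\in M \iff \text{there exists a support } U_i \text{ of a minimal } ground(P)\text{-proof scheme with conclusion } p \text{ such that } M\cap U_i=\emptyset.
\]
(The case where the disjunction in $\mathit{Eq}_p^P$ is empty, handled via $\bot$, corresponds to there being no minimal proof scheme for $p$ at all; the Horn-style limit case $p\Leftrightarrow\top$ is covered by a single empty support $U_1=\emptyset$, since then $\neg U_1 = \top$.)

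Next I would reconcile the quantification over \emph{minimal} proof schemes with the quantification over \emph{all} proof schemes that appears in Proposition~\ref{keystable}. The key observation is that if $\PS'\prec\PS$ then every clause of $\PS'$ is a clause of $\PS$, so $\mathit{supp}(\PS')\subseteq\mathit{supp}(\PS)$; since $\prec$ is well-founded, every $P$-proof scheme dominates some minimal one with the same conclusion. Therefore, for any $M$,
\[
\exists\,\PS\ (\mathit{concl}(\PS)=p \text{ and } M\cap\mathit{supp}(\PS)=\emptyset)
\iff
\exists\,\PS\ \text{minimal with the same property.}
\]
This lets me freely translate between "$M$ admits some $P$-proof scheme for $p$" and "some minimal support $U_i$ is disjoint from $M$".

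Putting these pieces together gives the two implications. For the forward direction, assume $M\models\Phi_P$; then for $p\in M$ some minimal $U_i$ is disjoint from $M$, yielding a $P$-proof scheme admitted by $M$ with conclusion $p$ (condition (i) of Proposition~\ref{keystable}), while for $p\notin M$ every minimal support $U_i$ meets $M$, hence by the equivalence above \emph{no} $P$-proof scheme for $p$ is admitted by $M$ (condition (ii)). Proposition~\ref{keystable} then declares $M$ to be a stable model of $P$. For the converse, assume $M$ is a stable model; conditions (i) and (ii) of Proposition~\ref{keystable}, combined again with the minimality reduction, immediately yield $M\models\mathit{Eq}_p^P$ for each $p$, so $M\models\Phi_P$.

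The only real subtlety is the handling of the "degenerate" cases of $\mathit{Eq}_p^P$: when there are no minimal proof schemes for $p$ the defining equation must be read as $p\Leftrightarrow\bot$, and when $\emptyset$ itself is a support the equation reduces to $p\Leftrightarrow\top$ (as observed in the paragraph just before Proposition~\ref{p.eq} for Horn programs, and still valid here via the convention $\neg\emptyset=\top$). Once these boundary readings are fixed, the argument is a routine unpacking of definitions plus the one-line reduction to minimal proof schemes.
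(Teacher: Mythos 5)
Your proof is correct and follows exactly the route the paper intends: the paper states Proposition~\ref{p.eq} without proof as an immediate consequence of Proposition~\ref{keystable}, and your argument is just the careful unpacking of that claim, including the needed bridge that a proof scheme admitted by $M$ yields a minimal one (with smaller support) that $M$ also admits. The degenerate cases ($p\Leftrightarrow\bot$ and $\neg\emptyset=\top$) are handled as in the paper's preceding discussion, so nothing is missing.
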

When $P$ is {\em purely negative}, i.e. all clauses $C$ of $P$ have $prem(C) =
\emptyset$, the stable and supported models of $P$ coincide  \cite{dk89} and
the defining equations reduce to Clark's completion \cite{cl78} of $P$.

Let us observe that, in general, the propositional formulas on the
right-hand-side of the defining equations may be infinitary.

\begin{example}\label{e.2}
{\rm Let $P$ be an infinite normal propositional logic program consisting of
clauses $p \lar \neg p_i$, for all $i \in n$. Then the defining equation for
$p$ in $P$ is the infinitary propositional formula 
\[
p \Leftrightarrow (\neg p_1 \lor \neg p_2 \lor \neg p_3 \ldots ).
\]
}
\mbox{ }\hfill $\Box$
\end{example}
The following observation is quite useful. If $U_1, U_2$ are two finite
sets of propositional atoms, then
\[
U_1 \subseteq U_2 \ \mbox{if and only if\ }
\ \neg U_2 \models \neg U_1
\]
Here $\models$ is the propositional consequence relation. The effect
of this observation is that only the inclusion-minimal supports are important.

\begin{example}\label{ex.3}
{\rm Let $P$ be an infinite normal propositional logic program consisting of
clauses $p \lar \neg p_1, \ldots ,\neg p_i$, for all $i \in N$. The defining
equation for $p$ in $P$ is 
\[
p \Leftrightarrow [\neg p_1 \lor (\neg p_1 \land \neg p_2)
 \lor (\neg p_1 \land \neg p_2 \land \neg p_3) \ldots\ ]
\]
which is infinitary.  But our observation above implies that this formula is
{\em equivalent} to the formula
\[
p \Leftrightarrow \neg p_1. 
\]
}
\mbox{ }\hfill $\Box$
\end{example}

Motivated by the Example \ref{ex.3}, we define the {\em reduced defining
equation} for $p$ relative to $P$ to be the formula
\begin{equation}\label{reddefeq}
p \Leftrightarrow (\neg U_1 \lor \neg U_2 \lor \ldots )
\end{equation}
where $U_i$ range over {\em inclusion-minimal} supports of minimal $P$-proof
schemes for the atom $p$ and $U_1 \prec U_2 \prec \cdots$.  We denote this
formula as $\mathit{rEq}_p^P$, and define $r\Phi_P$ to be the theory consisting
of $\mathit{rEq}_p^P$ for all $p \in H(P)$. We then have the following
strengthening of Proposition \ref{p.eq}.

\begin{proposition}\label{p.req}
Let $P$ be a normal propositional program.
Then stable models of $P$ are precisely the propositional models of the
theory $r\Phi_P$.
\end{proposition}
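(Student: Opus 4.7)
The plan is to derive Proposition \ref{p.req} directly from Proposition \ref{p.eq} by showing that $\Phi_P$ and $r\Phi_P$ have exactly the same propositional models. Since $\Phi_P$ and $r\Phi_P$ consist of one biconditional per atom $p \in H(P)$, it suffices to show that for each $p$ the defining equation $\mathit{Eq}_p^P$ is logically equivalent to the reduced defining equation $\mathit{rEq}_p^P$.

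First I would observe that the right-hand side of $\mathit{rEq}_p^P$ propositionally implies the right-hand side of $\mathit{Eq}_p^P$: every inclusion-minimal support is a support, so every disjunct of $\mathit{rEq}_p^P$ already appears as a disjunct in $\mathit{Eq}_p^P$. This direction is immediate.

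For the converse, suppose $M$ satisfies the right-hand side of $\mathit{Eq}_p^P$, so $M \models \neg U_i$ for some support $U_i$ of a minimal $P$-proof scheme for $p$. Supports of proof schemes are finite sets of atoms, so the restriction of the subset ordering to the family of all supports of $p$ is well-founded below $U_i$; consequently there is an inclusion-minimal support $U_j$ with $U_j \subseteq U_i$. Applying the observation highlighted just before Example \ref{ex.3}, $U_j \subseteq U_i$ gives $\neg U_i \models \neg U_j$, so $M \models \neg U_j$ and $M$ satisfies the right-hand side of $\mathit{rEq}_p^P$. Hence $\mathit{Eq}_p^P \equiv \mathit{rEq}_p^P$, and therefore $\Phi_P$ and $r\Phi_P$ share their models. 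Applying Proposition \ref{p.eq} to $\Phi_P$ finishes the proof.

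There is no real obstacle here; the only point that needs a moment of care is the well-foundedness step, which uses the fact that proof-scheme supports are finite so that the descent from an arbitrary support to an inclusion-minimal one inside it must terminate. Once that is in place, the equivalence of $\mathit{Eq}_p^P$ and $\mathit{rEq}_p^P$ follows from the already-noted semantic monotonicity $U_1 \subseteq U_2 \Rightarrow \neg U_2 \models \neg U_1$, and Proposition \ref{p.req} is then just Proposition \ref{p.eq} transported through this equivalence.
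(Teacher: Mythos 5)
Your proof is correct and follows exactly the route the paper intends: the paper states Proposition \ref{p.req} as an immediate strengthening of Proposition \ref{p.eq}, relying on the observation that $U_1 \subseteq U_2$ implies $\neg U_2 \models \neg U_1$ so that only inclusion-minimal supports matter. You have merely made explicit the small finiteness/well-foundedness step (each support is finite, so it contains an inclusion-minimal support) that the paper leaves implicit, which is a worthwhile clarification but not a different argument.
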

In our example \ref{ex.3}, the theory $\Phi_P$ was infinitary, but the
theory $r\Phi_P$ was finitary.

Suppose that $P$ is a  normal propositional  logic program $P$ which consists 
of ground clauses from $\mathcal{L}$ and $a$ is an atom in $H(P)$.  Then we say
that $a$ has the {\em finite support property relative of $P$} if the reduced
defining equation for $a$ is finite.  We say that $P$ has the   {\it finite
support ($\FS$) property}   if for all $a \in H(P)$, the reduced defining
equation for $a$ is a finite propositional formula. Equivalently, a program $P$
has the finite support property  if for every atom $a \in H(P)$, there are only
finitely many inclusion-minimal supports of minimal $P$-proof schemes for $a$.
We say that $P$ has the {\it almost always finite support  ($\afs$) property}
if for all but finitely many atoms $a \in H(P)$, there are only finitely many
inclusion-minimal supports of minimal $P$-proof schemes for $a$.  We say that
$P$ is {\em recursive} if the set of codes of clauses of $P$ is recursive and
the set of codes of atoms in $H(P)$ is recursive.  Note that for any finite
normal predicate logic program $Q$, $ground(Q)$ will automatically be a
recursive normal propositional  logic program.  We say that $P$ has the {\it
recursive finite support ($\rfs$) property} if $P$ is recursive, has the finite
support property, and there is a uniform effective procedure which given any
atom $a \in H(P)$ produces the code of the set of the inclusion-minimal
supports of $P$-proof schemes for $a$.  We say that $P$ has the {\it almost
always recursive finite support ($\afs$) property} if $P$ is recursive, has the
$\afs$ property, and there is a uniform effective procedure which for all but a
finite set of atoms $a \in H(P)$ produces the code of the set of the
inclusion-minimal supports of $P$-proof schemes for $a$.  We say that a finite
normal predicate logic program has the $\FS$ property ($\rfs$ property, $\afs$
property, $\arfs$ property) if $ground(P)$ has the $\FS$ property ($\rfs$
property, $\afs$ property, $\arfs$ property).

Next we define two additional properties of recursive normal propositional
logic programs that have not been previously defined in the literature.
Suppose that $P$ is a recursive normal propositional  logic program consisting
of ground clauses in $\mathcal{L}$ and $M$ is a stable model of $P$.  Then for
any atom $p \in M$, we say that a minimal $P$-proof scheme $\PS$ is the {\em
smallest minimal $P$-proof for $p$ relative to $M$} if $concl(\PS)=p$ and
$supp(\PS) \cap M = \emptyset$ and there is no minimal $P$-proof scheme $\PS'$
such that $concl(\PS')=p$ and $supp(\PS') \cap M = \emptyset$ and the G\"odel
number  of $\PS'$ is less than the G\"odel number of $\PS$.  We say that $P$ is
{\em decidable} if for any finite set of ground atoms $\{a_1, \ldots, a_n\}
\subseteq H(P)$ and any finite set of minimal $P$-proof schemes $\{\PS_1,
\ldots, \PS_n\}$ such that $concl(\PS_i) =a_i$, we can effectively decide
whether there is a stable model of $M$ of $P$ such that \\
(a) $a_i \in M$ and $\PS_i$ is the smallest minimal $P$-proof scheme for $a_i$
such that $supp(\PS_i) \cap M = \emptyset$ and \\
(b) for any ground atom $b \notin  \{a_1, \ldots, a_n\}$ such 
that the code of $b$ is strictly less than the maximum of the 
codes of $a_1, \ldots, a_n$, $b \notin M$. 
 
We now introduce and illustrate a technical concept that will be useful for our
later considerations.   At first glance, there are some obvious differences
between stable models of normal propositional  logic programs and models of
sets of sentences in a propositional logic. For example, if $T$ is a set of
sentences in a propositional logic and $S \subseteq T$, then it is certainly
the case that every model of $T$ is a model of $\PS$. Thus a set of
propositional sentences $T$ has the property that if $T$ has a model, then
every subset of $T$ has a model.  This is certainly not true for normal
propositional  logic programs. That is, consider the following example.

\begin{example}
{\rm Let $P$ consists of the following two clauses:
\begin{quote}
$C_1 = a \leftarrow \neg a, \neg b$ and \\
$C_2 = b \leftarrow$
\end{quote}

Then it is easy to see that $\{b\}$ is a stable model of $P$. However the
subprogram $Q$ consisting of just clause $C_1$ does not have a stable model.
That is, $b$ can not be in any stable model of $Q$ since there is no clause in
$Q$ whose conclusion is $b$.  Thus the only possible stable models of $Q$ are
$M_1 = \emptyset$ and $M_2 = \{a\}$. But it is easy to see that both $M_1$ and
$M_2$ are not stable models of $Q$. That is, the Gelfond-Lifschitz reduct
$Q_{\emptyset} = a \leftarrow$ whose least model is $\{a\}$ and the
Gelfond-Lifschitz reduct $Q_{\{a\}} = \emptyset $ whose least model is
$\emptyset$.}
\end{example}

Next we note that there is no analogue of the Compactness Theorem for stable
models.  That is, the Compactness Theorem for propositional logic says that if
$\Theta$ is a collection of sentences and every finite subset of $\Theta$ has a
model, then $\Theta$ has a model. Marek and Remmel \cite{MRComp} proposed the
following analogue of the Compactness Theorem for normal propositional  logic
programs. \\
\ \\
({\em Comp}) {\em If for any finite normal propositional logic program
$P^\prime \subseteq P$, there exist a finite program $P^{\prime \prime}$ such
that $P^\prime \subseteq P^{\prime \prime} \subseteq P$ such that $P^{\prime
\prime}$ has a stable model, then $P$ has a stable model.}\\
\ \\
However, Marek and Remmel \cite{MRComp} showed that {\em Comp} fails for normal
propositional  logic programs. 

Finally, we observe that a normal propositional  logic program $P$ can fail to
have a stable model for some trivial reasons. That is, suppose that $P_0$ is a
normal propositional  logic program which has a stable model and $a$ is atom
which is not in the Herbrand base of $P_0$, $H(P_0)$. Then if $P$ is the normal
propositional  logic program consisting of $P_0$ plus the clause $C= a
\leftarrow \neg a$, then $P$ automatically does not have a stable model. That
is, consider a potential stable model $M$ of $P$. If $a \in M$, then $C$ does
not contribute to $P_M$ so that there will be no clause of $P_M$ with $a$ in 
the head. Hence, $a$ is not in the least model of $P_M$ so that $M$ is not a
stable model of $P$.  On the other hand, if $a \not \in M$, then $C$ will
contribute the clause $a \leftarrow$ to $P_M$ so that $a$ must be in the least
model of $P_M$ and, again, $M$ is not equal to the least model of $P_M$. For
this reason, we say that a finite normal predicate logic program $Q_e$ over
$\cal L$ has an {\em explicit initial blocking set} if there is an $m$ such
that 
\begin{compactenum}
\item for every $i \leq m$, either $i$ is not the code of an atom of
$ground(P)$ or the atom $a$ coded by $i$ has the finite 
support property relative to $P$ and 
\item for all $S \subseteq \{0,\ldots, m\}$, either 
\begin{compactdesc}
\item[(a)] there exists an $i \in S$ such that $i$ is not 
the code of an atom in $H(P)$,
\item[(b)] there is an $i \not \in S$ such that 
there exists a minimal $P$-proof scheme $p$ such that $concl(p) =a$ where 
$a$ is the atom of $H(P)$ with code $i$ and $supp(p) \subseteq 
\{0,\ldots, m\} -S$, or 
\item[(c)] there is an $i \in S$ such that every minimal $P$-proof scheme 
$\PS$ of the atom $a$ of $H(P)$ with code $i$ has 
$supp(\PS) \cap S \neq \emptyset$.  
\end{compactdesc}
\end{compactenum}
The definition of a finite normal predicate logic program $Q_e$ over $\cal L$
having an {\it initial blocking set} is the same as the definition of $Q_e$
having an explicit initial blocking set except that we drop the condition that 
for every $i \leq m$ which is the code of an atom $a \in H(P)$, $a$ must have
the finite support property relative to $P$.

If $\Sigma \subseteq \omega$, then $\Sigma^{< \omega}$ denotes the set of
finite strings of letters from $\Sigma$ and $\Sigma^\omega$ denotes the set of
infinite sequences of letters from $\Sigma$. For a string $\sigma =
(\sigma(0),\sigma(1),\ldots,\sigma(n-1))$, we let $|\sigma|$ denote the length
$n$ of $\sigma$.  The empty string has length 0 and will be denoted by
$\emptyset$.  A constant string $\sigma$ of length $n$ consisting entirely of
$k$'s  will be denoted by $k^n$.  For $m < |\sigma|$, $\sigma \res m$ is the
string $(\sigma(0),\ldots,\sigma(m-1))$. We say $\sigma$ is an {\em initial
segment} of $\tau$ (written $\sigma \prec \tau$) if $\sigma = \tau\res m$ for
some $m < |\sg|$.  The concatenation  $\sigma^\smallfrown \tau$ (or sometimes
just $\sigma\tau$) is defined by
\[
\sigma^\smallfrown \tau = (\sigma(0),\sigma(1),\ldots,\sigma(m-
1),\tau(0),\tau(1),\ldots,\tau(n-1)) 
\]
where $|\sigma| = m$ and $|\tau| = n$. We write $\sigma^\smallfrown a$ for
$\sigma^\smallfrown (a)$ and $a^\smallfrown \sigma$ for $(a)^\smallfrown
\sigma$.  For any $x \in \Sigma^{\omega}$ and any finite $n$, the
{\it initial segment} $x\res n$ of $x$ is $(x(0),\ldots , x(n-1))$. We write
$\sigma \prec x$ if $\sigma = x\res n$ for some $n$. For any $\sigma \in
\Sigma^n$ and any $x \in \Sigma^{\omega}$, we let  $\sigma^\smallfrown x =
(\sigma(0),\ldots,\sigma(n-1),x(0),x(1),\ldots)$. 

If $\Sigma \subseteq \omega$, a {\it tree T} over $\Sigma^\ast$ is a set of
finite strings from $\Sigma^{< \omega}$ which contains the empty string
$\emptyset$ and which is closed under initial segments. We say that $\tau \in
T$ is an {\it immediate successor} of a string $\sigma \in T$ if $\tau =
\sigma^\smallfrown a$ for some $a \in \Sigma$. We will identify $T$ with the
set of codes $c(\sigma)$ for $\sigma \in T$. Thus we say that $T$ is recursive,
r.e., etc. if $\{c(\sg): \sg \in T\}$ is recursive, r.e., etc.  If each node of
$T$ has finitely many immediate successors, then $T$ is said to be {\it
finitely branching}.  

\begin{definition}
{\rm Suppose that $g:\omega^{< \omega} \rightarrow \omega$.  Then we say that 
\begin{compactenum}
\item $T$ is {\em $g$-bounded} if for all $\sigma$ and all integers $i$,
$\sigma^\smallfrown i \in T$ implies $i \leq g(\sigma)$,  
\item $T$ is {\em almost always  $g$-bounded} if there is a finite set $F
\subseteq T$ of strings such that for all strings $\sigma \in T \setminus F$
and all integers $i$, $\sigma^\smallfrown i \in T$ implies $i < g(\sigma)$,  
\item $T$ is {\em nearly  $g$-bounded} if there is an $n \geq 0$ such that 
for all strings $\sigma \in T$ with $|\sg| \geq n$ and all integers $i$,
$\sigma^\smallfrown i \in T$ implies $i < g(\sigma)$,  
\item $T$ is {\em bounded}  if it is $g$-bounded for some $g:\omega^{< \omega}
\rightarrow \omega$,
 \item $T$ is {\em almost always bounded}  ($a.a.b.$) if it is  almost 
always $g$-bounded for some $g:\omega^{< \omega} \rightarrow \omega$,
\item $T$ is {\em nearly bounded}  if it is nearly $g$-bounded for some
$g:\omega^{< \omega} \rightarrow \omega$, 
\item $T$ is {\em recursively  bounded} ($r.b.$)  if $T$ is $g$-bounded 
for some recursive  $g:\omega^{< \omega} \rightarrow \omega$, 
\item $T$ {\em almost recursively bounded} ($a.a.r.b$.)  
if it is almost always $g$-bounded for some recursive  $g:\omega^{< \omega}
\rightarrow \omega$, and \item $T$ {\em nearly recursively bounded} ( nearly
$r.b$.)  if it is nearly $g$-bounded for some recursive  $g:\omega^{< \omega}
\rightarrow \omega$.
\end{compactenum}
}
\end{definition}

For any tree $T$, an {\it infinite path} through $T$ is a sequence
$(x(0),x(1),\ldots)$ such that $x \res n \in T$ for all $n$. Let $[T]$ be the
set of infinite paths through $T$. We let $Ext(T)$ denote the set of all
$\sigma \in T$ such that $\sigma \prec x$ for some $x \in [T]$. Thus $Ext(T)$
is the set of all $\sigma$ in $T$ that lie on some infinite path through 
$T$.  We say that $T$ is {\em decidable} if $T$ is recursive and $Ext(T)$ is
recursive. 

The two main results of this paper are the following theorems. 

\begin{theorem}\label{tree2prog} There is a uniform effective procedure  which
given any recursive tree $T \subseteq \omega^{< \omega}$ produces a finite
normal predicate logic program $P_T$ such that the following hold. 
\begin{compactenum} 
\item There is an effective one-to-one degree preserving correspondence 
between the set of stable models of $P_T$ and the set of infinite paths through
$T$.  
\item $T$ is bounded if and only if $P_T$ has the $\FS$ property. 
\item $T$ is recursively bounded  if and only if 
$P_T$ has the $\rfs$ property. 
\item $T$ is decidable and recursively bounded if and only if 
$P_T$ is decidable and has the $\rfs$ property. 
\end{compactenum} 
\end{theorem}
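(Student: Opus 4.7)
The plan is to adapt the propositional construction of Marek, Nerode and Remmel~\cite{MNR} to the predicate setting, exploiting the fact that the recursive characteristic function of $T$ is Horn-representable. Accordingly $P_T$ consists of three blocks: (i) a Horn block defining the primitive recursive auxiliary predicates, chiefly $\tree$ (membership in $T$), $\noteq$ (distinctness of codes), and the string-concatenation $\sigma\smallfrown y$; (ii) choice clauses with negation that select a unique successor at each node of a chosen path; and (iii) a blocking mechanism that kills any candidate stable model corresponding to a finite dead-end branch.

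For block~(ii) I use a unary $\ipath$ (whose argument is the code of a finite sequence) and a binary $\notpath$, governed by the schematic clauses $\ipath(\emptyset)\leftarrow$, $\ipath(\sigma\smallfrown y)\leftarrow\ipath(\sigma),\tree(\sigma\smallfrown y),\neg\notpath(\sigma,y)$, and $\notpath(\sigma,y)\leftarrow\ipath(\sigma\smallfrown z),\noteq(y,z)$. An induction on $|\sigma|$ shows that in any stable model $M$, the $\ipath$-atoms in $M$ are exactly the initial segments of a single $x\in[T]$, and $\notpath(\sigma,y)\in M$ iff $\sigma\prec x$ with $x(|\sigma|)\neq y$. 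The Gelfond-Lifschitz reduct argument then yields~(1); degree preservation is immediate because $M$ and $x$ compute each other uniformly from the $\ipath$-atoms.

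For~(2) and~(3) I analyze inclusion-minimal supports. A proof scheme of $\ipath(\sigma)$ is essentially unique and carries the sole minimal support $\{\notpath(\sigma\upharpoonright i,\sigma(i)):i<|\sigma|\}$, which is always finite. In contrast, $\notpath(\sigma,y)$ has one minimal support for each $z\neq y$ with $\sigma\smallfrown z\in T$, inherited from the support of $\ipath(\sigma\smallfrown z)$. Hence $\notpath(\sigma,y)$ has finitely many minimal supports precisely when $\sigma$ has finitely many immediate successors in $T$, and these can be uniformly enumerated precisely when this bound is recursive in $\sigma$. This yields the equivalences~(2) and~(3). Item~(4) is then obtained by observing that the decidability clause in the definition translates, via the proof-scheme correspondence, to deciding which $\sigma\in T$ extend to an infinite path, i.e.\ to recursiveness of $\mathit{Ext}(T)$.

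The main obstacle is block~(iii). Without it, each finite branch of $T$ produces a spurious stable model in which $\ipath(\sigma)$ holds at a leaf $\sigma$ and nothing forces an extension. Following~\cite{MNR}, I would add a single schematic clause with a self-blocking head atom that fires whenever $\ipath(\sigma)$ holds but no $\ipath(\sigma\smallfrown y)$ does; implementing this over the predicate language so that the auxiliary atoms still have finite-support proof schemes---and introduce no extra minimal supports that would spoil the support analysis above---is the real technical point. A secondary bookkeeping task is to verify that the Horn encodings of $\tree$, $\noteq$, and the string operations are primitive recursive in the strong sense needed to make the decidability correspondence in~(4) fully uniform.
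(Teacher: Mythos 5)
Your blocks (i)--(ii) are essentially the paper's construction: a Horn program computing the recursive predicates on codes, plus $\ipath$/$\notpath$ choice clauses (the paper keeps $\notpath$ unary and enforces uniqueness of the branch via same-length/not-initial-segment clauses rather than your local successor version, but that is a cosmetic difference), and your support analysis for $\ipath$ and $\notpath$ matches the paper's Cases (a) and (c). The genuine gap is block (iii), which you explicitly leave open, and it is not a routine bookkeeping step: it is the heart of the theorem, the place where stable models are cut down from arbitrary maximal branches to infinite paths. The device you sketch does not work as stated. A clause that ``fires whenever $\ipath(\sigma)$ holds but no $\ipath(\sigma^\smallfrown y)$ does'' has a negated unbounded existential in its body, so it is not a single normal clause; you would need an auxiliary predicate such as $\mathit{extended}(X) \leftarrow \ipath(Z), \mathit{succ}(X,Z)$ together with a self-blocking constraint. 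Worse, if the self-blocking head is one fixed atom $f$ (as ``a single schematic clause with a self-blocking head atom'' suggests), then $f$ acquires one inclusion-minimal support for each node $\sigma\in T$ (each support containing the distinct atom $\mathit{extended}(\sigma)$), hence infinitely many whenever $T$ is infinite; so $P_T$ would fail the $\FS$ property even for bounded $T$, destroying part (2). This is exactly the danger you name, and your proposal contains no construction that avoids it.

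The paper resolves it by flipping the quantifier: instead of detecting dead ends, it demands, for every $n$, a witness that the chosen branch reaches level $n$. The clause $\control(X) \leftarrow \neg\,\control(X), \num(X)$ forces $\control(n)$ into every stable model for every $n$, and $\control(X) \leftarrow \ipath(Y), \length(Y,X)$ is the only legitimate way to derive it, which requires some $\ipath(c(\tau))$ with $|\tau|=n$. This kills every finite maximal branch, and it does so without creating a pathological atom: the minimal proof schemes of $\control(n)$ are parametrized exactly by the level-$n$ nodes of $T$, so these atoms have finitely many (respectively, effectively computable) inclusion-minimal supports iff $T$ is bounded (respectively, highly recursive) --- i.e., the blocking mechanism itself carries part of the equivalences (2) and (3). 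If you insist on your dead-end detector, you would need a per-node blocker $f(\sigma)$ plus the $\mathit{extended}$ predicate and a fresh support analysis for both; that can be made to work, but it has to be written down. Secondarily, part (4) in the left-to-right direction (decidable r.b.\ $T$ implies $P_T$ decidable) needs an argument from the normal form of stable models and the definition of decidability via smallest minimal proof schemes, not just the remark that it ``translates to'' recursiveness of $Ext(T)$.
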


\begin{theorem} \label{prog2trees}
There is a uniform recursive procedure which given any finite normal predicate
logic program $P$ produces a primitive recursive tree $T_P$ such that the
following hold.
\begin{compactenum}
\item There is an effective one-to-one degree-preserving correspondence 
between the set of stable models of $P$ and the set of infinite 
paths through $T_P$.
\item $P$ has the $\FS$ property or $P$ has an explicit initial blocking set if
and only if $T_P$ is bounded.
\item If $P$ has a stable model, then $P$ has the $\FS$ property if and only if
$T_P$ is bounded.
\item $P$ has the $\rfs$  property or an explicit initial blocking set if and
only if $T_P$ is recursively bounded.
\item If $P$ has a stable model, then $P$ has the $\rfs$ property if and only
if $T_P$ is recursively bounded.
\item $P$ has the $\afs$ property or $P$ has an explicit initial blocking set
if and only if $T_P$ is nearly bounded.
\item If $P$ has a stable model, then $P$ has the $\afs$ property if and only
if $T_P$ is nearly bounded.
\item $P$ has the $\arfs$  property or an explicit initial blocking set if and
only if $T_P$ is nearly recursively bounded.
\item If $P$ has a stable model, then $P$ has the $\arfs$ property if and only
if $T_P$ is nearly recursively bounded.
\item If $P$ has a stable model, then $P$ is decidable if and only if $T_P$ is
decidable. 
\end{compactenum} 
 \end{theorem}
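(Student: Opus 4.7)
The plan is to build $T_P \subseteq \omega^{<\omega}$ so that its level-$n$ nodes encode a tentative $M$-membership decision for each of the first $n$ atoms $a_0,\ldots,a_{n-1}$ of $H(P)$ (taken in code order) and, whenever membership is declared, a minimal $P$-proof scheme witnessing that atom. Concretely, I would use $\sigma(i)=0$ for ``$a_i \notin M$'' and $\sigma(i)=k+1$ for ``$a_i \in M$, witnessed by the minimal $P$-proof scheme coded $k$.'' A string $\sigma$ of length $n$ lies in $T_P$ exactly when it passes the following primitive-recursive check, writing $M_\sigma = \{a_j : j<n,\ \sigma(j)\neq 0\}$: (a) every witness $\sigma(i)=k+1$ codes a minimal $P$-proof scheme with $concl = a_i$, with premises contained in $M_\sigma$, and with $supp$ disjoint from $M_\sigma$; and (b) for every $i<n$ with $\sigma(i)=0$, no minimal $P$-proof scheme whose code and whose constituent atoms have codes at most $n$ has conclusion $a_i$ with support disjoint from, and premises inside, $M_\sigma$. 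Using the primitive-recursive machinery presented earlier---in particular the function $h_P$ enumerating $N_k(P)$---$T_P$ is primitive recursive and is obtained uniformly from $P$.

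For item (1) I would map each stable model $M$ to the path $\sigma_M$ defined by $\sigma_M(i)=0$ if $a_i\notin M$ and otherwise $\sigma_M(i)=k+1$ for the least code $k$ of an $M$-admissible minimal $P$-proof scheme of $a_i$; Proposition \ref{keystable} guarantees that such a $k$ exists. Conversely, a path $\sigma\in[T_P]$ yields the set $\{a_i:\sigma(i)\neq 0\}$; clause (i) of Proposition \ref{keystable} is witnessed directly by the path data, and the limit of clause (b) of the admissibility check provides clause (ii). Each direction is computable from its input, giving the one-to-one degree-preserving correspondence. Item (10) is then immediate: an infinite path records a stable model together with its canonical minimal witnesses, so recursiveness of $Ext(T_P)$ is equivalent to being able to decide extendability of a finite prefix, which is exactly the decidability condition defined for $P$.

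For items (2)--(9) I would examine the branching set $\{k:\sigma^\smallfrown k\in T_P\}$ at a node of length $n$: it is $\{0\}$ together with $\{k+1 : k$ codes a minimal $P$-proof scheme for $a_n$ whose support avoids $M_\sigma\}$. Hence, along an infinite path, $T_P$ is bounded iff each atom has only finitely many admissible minimal $P$-proof schemes---equivalently, finitely many inclusion-minimal supports---which is the $\FS$ property; the recursive, almost-all, and almost-all-recursive variants match $\rfs$, $\afs$, and $\arfs$, yielding items (3), (5), (7), (9). For items (2), (4), (6), (8), the ``explicit initial blocking set'' disjunct is handled separately: by the three clauses of that definition, every length-$(m+1)$ string fails check (a) or (b), so $T_P$ is finite above depth $m$ and trivially (recursively, nearly, nearly recursively) bounded; conversely, if $P$ has no stable model yet $T_P$ is bounded, then $T_P$ is finite by K\"onig's lemma and the depth at which it dies yields a blocking set.

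The principal obstacle is the correct calibration of clause (b). Since $P$-proof schemes form an infinite class, finite-level admissibility can only rule out forbidden schemes up to a growing code bound, and one must check that tying this bound to the node length $n$ simultaneously (i) catches every forbidden scheme in the limit---so that the set read off from an infinite path really is stable---and (ii) never prematurely kills a prefix of a genuine stable model. A secondary delicacy is aligning the combinatorial cases (a)--(c) of the explicit-initial-blocking-set definition with the exact reasons a branch of $T_P$ terminates, which is what makes the biconditionals in items (2), (4), (6), (8) work in both directions.
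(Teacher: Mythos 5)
Your overall architecture is the same as the paper's (the paper interleaves a characteristic-function bit at position $2i$ with a witness code at position $2i+1$; you pack both into one coordinate), but your node conditions differ from the paper's in two ways that break the theorem. First, the requirement that the recorded witness have its premises contained in $M_\sigma$ (and the corresponding premise clause in your check (b)) is wrongly calibrated: the intermediate conclusions of a minimal $P$-proof scheme can be ground atoms of arbitrarily large code, so at the level where the witness is recorded they are not yet decided, your check fails, and the prefix of a genuine stable model is pruned; since a tree must be closed under initial segments, no later level can repair this, so item (1) already fails in the direction $\mathit{Stab}(P)\to[T_P]$. The paper imposes no condition on premises at all---a proof scheme is self-contained and only its support matters for $M$-applicability---and phrases the out-atom check as ``no scheme with code in $N_{\lfloor k/2\rfloor}$ has conclusion $i$ and support contained in the declared-out set $O_{\overline{\sigma}}$,'' a condition that never falsely triggers on a prefix of $f_M$ yet catches every violation of stability at some finite level.

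Second, and more damaging, you impose no canonicity on the recorded witness inside the tree; you only use ``least code'' when defining the map $M\mapsto\sigma_M$. Consequently your tree generally has infinitely many paths above a single stable model (any admissible witness may be recorded for each atom), so the correspondence in item (1) is not one-to-one and the cardinality transfers this theorem is used for later (e.g.\ $\card(\mathit{Stab}(Q_e))=c$ iff $\card([T_{Q_e}])=c$) collapse. The same omission breaks items (2)--(9): the branching at a witness node is the set of \emph{all} admissible minimal proof schemes for $a_n$, which can be infinite even when $a_n$ has a single inclusion-minimal support---e.g.\ the program $q(X)\leftarrow$, $p\leftarrow q(X)$ yields infinitely many minimal schemes for $p$, all with support $\emptyset$---so ``finitely many admissible minimal proof schemes'' is strictly stronger than the $\FS$ property, and your claimed equivalence between boundedness of $T_P$ and the $\FS$ property is false as stated. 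The paper's conditions (b), (c), (e) are precisely the missing device: the witness must be the least code among minimal schemes with the same support, and later levels prune it unless it is least among schemes whose support lies inside the declared-out atoms; this caps the candidates at one per support, forces $f_{M_\beta}=\beta$ (uniqueness of the path for each stable model), and is what actually ties bounded, recursively bounded, nearly (recursively) bounded, and decidable trees to the $\FS$, $\rfs$, $\afs$, $\arfs$, and decidability properties and to explicit initial blocking sets; your treatment of the blocking-set disjuncts and of item (10) would also need to be re-examined once the witness canonicity is in place, since the paper's definitions of decidable program and explicit initial blocking set refer to smallest admissible witnesses.
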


The idea of Theorems \ref{tree2prog} and \ref{prog2trees} is to show that index
sets for certain properties of trees have the same complexity as corresponding
index sets for various properties of finite normal predicate logic programs.
For example, suppose that we want to find the complexity of 
\[
A =\{e:Q_e \ \mbox{has the $\FS$ property and has exactly 2 stable models}\}.
\]
Let $\displaystyle B =\{e:T_e \ \mbox{is $r.b.$ and $\card([T_e]) =2$}\}$.
Then Theorem \ref{tree2prog} allows us to prove that $B$ is one-to-one
reducible to $A$ and Theorem \ref{prog2trees} allows us to prove that $A$ is
one-to-one reducible to $B$.  Now Cenzer and Remmel \cite{CR1,CR2} have proved
a large number of results about the index sets for primitive recursive trees. 
In particular, they have shown that $B$ is $\Sigma^0_3$-complete. Thus $A$ is
also $\Sigma^0_3$-complete. \\
\ \\
The outline of this paper is as follows. In Section \ref{classes}, we shall
provide the basic background on $\Pi^0_1$ classes and 
recursive trees that we shall need. In Section \ref{proofs}, we shall 
give the proofs of Theorems \ref{tree2prog} and \ref{prog2trees}. In Section 
\ref{compl}, we shall use Theorems \ref{tree2prog} and \ref{prog2trees} to prove
a variety of index set results relative to all finite normal predicate logic
programs, to all finite normal predicate logic programs which have the $\FS$
property, and to all finite normal predicate logic programs which have the
$\rfs$ property. In Section \ref{aa}, we shall prove a variety of index set
results relative to all finite normal predicate logic programs  which have the 
 $a.a.$$\FS$ property and to all finite normal predicate logic programs which 
have the $\arfs$ property. Section \ref{concl} contains conclusions and
suggestions of further work.

A preliminary extended abstract of this paper \cite{CMR} appeared in the
proceedings of a workshop at the Federated Logic Conference FLOC'99 which were 
distributed at the conference.

\section{$\Pi^0_1$ classes and trees}\label{classes}

In this section, we shall review the basic background facts on 
the complexity of various properties of $\Pi^0_1$ classes and 
primitive recursive trees that are relevant to classifying 
the index sets of the 
properties of finite normal predicate logic 
programs that will be of interest to us.

Let $\phi_e$ denote the partial recursive 
function which is computed by the $e$-th Turing machine. Thus 
$\phi_0, \phi_1, \ldots $ is a list of all partial recursive functions.
We let $W_e$ be the set of all $x \in \omega$ such $\phi_e(x)$ converges. 
Thus $W_0, W_1, \ldots $ is a list of all recursively enumerable (r.e.) 
sets. More generally, a  
recursive functional $\phi$ takes as inputs both numbers $a \in
\omega$ and functions $x: \omega \to \omega$. The function inputs are
treated as ``oracles'' to be called on when needed. Thus a particular
computation $\phi(a_1,\ldots,a_n;x_1,\ldots,x_m)$ only uses a finite
amount of information $x_i \res c$ about each function $x_i$.
Thus we shall write $\phi_e(a_1,\ldots,a_n;x_1,\ldots,x_m)$ for 
the recursive functional computed by the $e$-th oracle machine. 
In the special case where $n =m =1$ and $x_1$ is a sequence of 
0s and 1s and $X = \{n:x_1(n) = 1\}$, then we shall 
write $\phi_e^X(a_1)$ or $\{e\}^X(a_1)$ instead of $\phi_e(a_1;x_1)$. 
The jump of a set $A \subseteq \omega$, denoted $A^\prime$, 
is the set of all $e$ such that $\phi_e^A(e)$ converges. We 
let $0^\prime$ denote the jump of the empty set. 
For $A,B \subseteq \omega$, we write $A \leq_T B$ if $A$ is 
Turing reducible to $B$ and $A \equiv_T B$ if $A \leq_T B$ and 
$B \leq_T A$.

We shall assume the reader is familiar with the usual 
arithmetic hierarchy of $\Sigma^0_n$ and $\Pi^0_n$ subsets 
of $\omega$ as well as $\Sigma^1_1$ and $\Pi^1_1$ sets, 
see Soare's book \cite{Soare} for any unexplained notation. 
A subset $A$ of $\omega$ is said to be 
$D^m_n$ if it is the difference of two $\Sigma^m_n$ sets.
A set $A \subseteq \omega$ is said to be an \emph{index set} if for any
$a,b$, $a \in A$ and $\phi_a = \phi_b$ imply that $b \in A$. For
example, $\mathit{Fin} = \{a: W_a \mbox{ is finite}\}$ is an index set. 
We are particularly interested in the complexity of such index sets.
Recall that a subset $A$ of $\omega$ is said to be
$\Sigma^m_n$-complete (respectively, $\Pi^m_n$-complete, $D^m_n$-complete ) if
$A$ is $\Sigma^m_n$ (respectively, $\Pi^m_n$, $D^m_n$) and any $\Sigma^m_n$
(respectively, $\Pi^m_n$, $D^m_n$) set $B$ is many-one reducible to $A$. For
example, the set $Fin = \{e: W_e {\rm\ is\ finite}\}$ is $\Sigma^0_2$-complete.

A recursive tree $T$ is said to be
 {\it highly recursive} if $T$ is finitely branching and 
there is a partial recursive function $f$
such that, for any $\sigma \in T$, $f(\sigma)$ is the canonical 
index of the set of codes of all 
immediate successors in $T$. It is easy to show 
that $T$ is highly recursive if and only if $T$ is recursive and 
recursively bounded. 

A set $\cal C$ of functions $f:N \rightarrow N$ is a $\Pi^0_1$-{\em class}
if and only if
 \[
f \in \mathcal{C} \Leftrightarrow \forall n ([ f(0),\ldots,f(n)] \in R)
\]
where $R$ is some 
recursive predicate.
It is well known that 
$\mathcal{C}$ is a $\Pi^0_1$-{\it class} if and only if $X = [T]$ for some 
recursive tree $T$. In fact, the following lemma is true. 

\begin{lemma} \label{lem:eqv} For any class ${\cal C} \subseteq \omega^\omega$,
 the following are equivalent.
\begin{compactenum}
\item ${\cal C} = [T]$ for some recursive tree $T \subseteq \omega^{<\omega}$.
\item ${\cal C} = [T]$ for some primitive recursive tree $T$.
\item ${\cal C} = \{x:\omega \rightarrow \omega: (\forall n) R(n,[x\res
n])\}$, for some recursive relation $R$.  
\item ${\cal C} = [T]$ for some tree $T \subseteq \omega^{<\omega}$ which is
$\Pi^0_1$. 
\end{compactenum} 
\end{lemma}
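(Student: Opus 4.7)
The plan is to prove the equivalences cyclically: $(2) \Rightarrow (1) \Rightarrow (3) \Rightarrow (4) \Rightarrow (2)$. Three of these are routine; the real work is in the implication $(4) \Rightarrow (2)$, which is where the ``primitive recursive'' upgrade has to be made.

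For $(2) \Rightarrow (1)$, every primitive recursive set of codes is in particular recursive, so there is nothing to do. For $(1) \Rightarrow (3)$, given a recursive tree $T$, I would set $R(n,u)$ to hold iff $u$ codes a string of length $n$ lying in $T$; using the coding $[x\res n]$ from the preamble, the equivalence $x \in [T] \Leftrightarrow (\forall n)\, R(n,[x \res n])$ is immediate from the definition of a path. For $(3) \Rightarrow (4)$, given such an $R$, I would define the tree $T=\{\sigma : (\forall m \le |\sigma|)\, R(m,[\sigma \res m])\}$. Closure under initial segments is obvious, the defining condition is a bounded quantification over a recursive predicate, so $T$ is recursive, hence trivially $\Pi^0_1$; and a function $x$ lies in $[T]$ iff every initial segment passes the test iff $(\forall n)\, R(n,[x\res n])$, recovering $\mathcal{C}$.

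The main step is $(4) \Rightarrow (2)$. Suppose $\mathcal{C}=[T]$ for a $\Pi^0_1$ tree $T$. By Kleene's normal form theorem I can write
\[
\sigma \in T \iff (\forall s)\, Q(\sigma,s)
\]
for some primitive recursive predicate $Q$. Now define
\[
T' \;=\; \{\sigma \in \omega^{<\omega} : (\forall \tau \preceq \sigma)\,(\forall s \le |\sigma|)\, Q(\tau,s)\}.
\]
This definition uses only bounded quantifiers applied to a primitive recursive predicate (together with the primitive recursive relation $\tau \preceq \sigma$ on codes), so membership in $T'$ is primitive recursive. I would then check three things in turn: $T'$ is closed under initial segments (if $\tau' \preceq \sigma' \preceq \sigma$ and $s \le |\sigma'| \le |\sigma|$, then the witnessing conjunction for $\sigma$ already contains the needed instance of $Q$); $[T'] \subseteq [T]$, because if $x \in [T']$ then for every initial segment $\sigma=x\res n$ and every $s$, eventually taking a longer initial segment $\sigma'$ with $|\sigma'| \ge s$ forces $Q(\sigma,s)$ to hold, hence $\sigma \in T$ for every $n$; and $[T] \subseteq [T']$, because if $x \in [T]$ then every initial segment $\tau$ of every initial segment $\sigma$ of $x$ lies in $T$, so $Q(\tau,s)$ holds for all $s$, in particular for $s \le |\sigma|$, giving $\sigma \in T'$.

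The only genuinely delicate point is the $(4) \Rightarrow (2)$ direction: one must be careful to define $T'$ so that it really is a tree (closed downward) while still being primitive recursive, and then verify that the ``approximation'' $T'$ has the same infinite paths as $T$ even though $T' \supseteq T$ in general (the extra strings in $T'$ are dead ends with no infinite extension). The other three implications are essentially bookkeeping with codings and bounded quantifiers.
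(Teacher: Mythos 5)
Your proof is correct, and it is the standard argument for this lemma, which the paper states without proof as well known; in particular your main step $(4)\Rightarrow(2)$ --- replacing the $\Pi^0_1$ condition $(\forall s)\,Q(\sigma,s)$ by the bounded condition $(\forall \tau \preceq \sigma)(\forall s \le |\sigma|)\,Q(\tau,s)$ and checking that the resulting primitive recursive superset of $T$ is a tree with the same infinite paths --- is exactly the device the paper itself builds into its enumeration $T_e = \{\emptyset\} \cup \{\sigma : (\forall \tau \preceq \sigma)\, \pi_e(c(\tau)) = 1\}$. The one small detail to add is in $(3)\Rightarrow(4)$: if $\mathcal{C}=\emptyset$ and $R$ fails at the code of the empty string, your $T$ does not contain $\emptyset$ and so is not a tree under the paper's convention; adjoining $\{\emptyset\}$ (as the paper does in defining $T_e$) fixes this without changing $[T]$, and in $(4)\Rightarrow(2)$ no such fix is needed since $\emptyset\in T$ guarantees $Q(\emptyset,s)$ for all $s$.
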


We say that a $\Pi^0_1$ class $\mathcal{C}$ is 
\begin{compactenum}
\item {\em bounded} if $\mathcal{C} = [T]$ for some recursive 
tree $T$ which is bounded, 
\item {\em almost always bounded} ($a.a.b.$) 
 if $\mathcal{C} = [T]$ for some recursive 
tree $T$ which is almost always bounded, 
\item {\em nearly bounded} ($n.b.$)
 if $\mathcal{C} = [T]$ for some recursive tree $T$ which is nearly bounded, 
\item {\em recursively bounded} ($r.b.$) if $\mathcal{C} = [T]$ for some 
highly recursive tree $T$,  
\item {\em almost always recursively bounded} ($a.a.r.b.$) if $\mathcal{C} =
[T]$ for some recursive tree $T$ which is almost always recursively bounded, 
\item {\em nearly recursively bounded} ($n.r.b.$) if 
$\mathcal{C} = [T]$ for some 
recursive tree $T$ which is nearly recursively bounded, and 
\item {\em decidable}  if $\mathcal{C} = [T]$ for some decidable tree $T$.
\end{compactenum}

We now spell out the indexing for \pz classes and primitive recursive trees
that we will use in this paper.  Let $\pi_0,\pi_1,\ldots$ be an effective
enumeration of the primitive recursive functions from $\omega$ to $\{0,1\}$ and
let
\[
T_e = \{\emptyset\} \cup \{\sigma: (\forall \tau \preceq \sigma) 
\pi_e(c(\tau)) = 1\}
\]
where $c(\tau)$ is the code of $\tau$.  It is clear that each $T_e$ is a
primitive recursive tree. Observe also that if $\{\sigma: \pi_e(c(\sigma)) =
1\}$ is a primitive recursive tree, then $T_e$ will be that tree.  Thus every
primitive recursive tree occurs in our enumeration $T_0, T_1, \ldots$.  (Note
that, henceforth, we will generally identify a finite sequence $\tau \in
\omega^{<\omega}$ with its code.) Then we let ${\cal C}_e = [T_e]$ be the
$e$-th $\Pi^0_1$ class. It follows from Lemma \ref{lem:eqv} that every
$\Pi^0_1$ class occurs in the enumeration ${\cal C}_e$.
 
There is a large literature on the complexity of elements in $\Pi^0_1$ classes
and index sets for primitive recursive trees. In the remainder of this section,
we shall list the key results which will be needed for our applications to
index sets associated with finite normal predicate logic programs.  

\begin{theorem}\label{thm:Ext} For any recursive tree $T\subseteq
\omega^{< \omega}$, the following hold.
\begin{compactdesc}
\item[(a)] $Ext(T)$ is a $\Sigma^1_1$ set.
\item[(b)] If $T$ is finitely branching, then $Ext(T)$ is a $\Pi^0_2$ set.
\item[(c)] If $T$ is highly recursive, then $Ext(T)$ is a $\Pi^0_1$ set.
\end{compactdesc}
\end{theorem}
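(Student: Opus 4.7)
The plan is to prove each of the three clauses by writing a membership condition for $\sigma \in Ext(T)$ of the appropriate logical form, using K\"onig's Lemma to collapse the function quantifier when the tree is finitely branching, and further collapsing one number quantifier when the tree is highly recursive.

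For part (a), I would unfold the definition directly. By definition, $\sigma \in Ext(T)$ iff $\sigma \in T$ and there exists an $x \in \omega^\omega$ with $\sigma \prec x$ and $x \res n \in T$ for every $n$. Since $T$ is recursive, the matrix ``$\sigma \in T \wedge \sigma \prec x \wedge \forall n\, (x\res n \in T)$'' is arithmetic (in fact $\Pi^0_1$) in the function parameter $x$. Prepending the second-order existential $\exists x$ yields a $\Sigma^1_1$ description.

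For part (b), when $T$ is finitely branching, I would apply K\"onig's Lemma to the subtree $T_\sigma = \{\tau \in T : \sigma \preceq \tau\}$, which is itself finitely branching. Since $T_\sigma$ has an infinite path iff it is infinite, we get
\[
\sigma \in Ext(T) \iff \sigma \in T \ \wedge\ \forall n\, \exists \tau\, \bigl(\tau \in T \ \wedge\ \sigma \preceq \tau \ \wedge\ |\tau| = n\bigr).
\]
Because $T$ is recursive, the inner matrix is recursive in $(n,\tau)$, so the displayed condition is $\Pi^0_2$. The hard conceptual point here is that without a recursive bound on the branching, the existential search for $\tau$ is genuinely unbounded, so $\Pi^0_2$ is the best we can extract.

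For part (c), when $T$ is highly recursive, I would remove that remaining unbounded search. Using the partial recursive function $f$ witnessing high recursiveness, I can compute, uniformly in $\sigma$ and $n$, the canonical index of the finite set $L_{\sigma,n} = \{\tau \in T : \sigma \preceq \tau,\ |\tau| = n\}$: simply iterate $f$ down from $\sigma$ to depth $n$. Thus the predicate ``$L_{\sigma,n} = \emptyset$'' is recursive in $(\sigma,n)$. Now by K\"onig's Lemma again,
\[
\sigma \notin Ext(T) \iff \sigma \notin T \ \vee\ \exists n\, (L_{\sigma,n} = \emptyset),
\]
which is $\Sigma^0_1$. Hence $Ext(T)$ is $\Pi^0_1$. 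The main obstacle across the three clauses is isolating precisely where the branching information lets you replace a function quantifier by a number quantifier (part (b)) and then replace a number quantifier by a bounded/computable check (part (c)); once that is pinned down the bounds fall out cleanly.
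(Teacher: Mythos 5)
Your proposal is correct, and in fact the paper never proves Theorem \ref{thm:Ext} at all: it is quoted as standard background on $\Pi^0_1$ classes (in the style of Jockusch--Soare), and your argument is exactly the standard one --- a function quantifier for (a), K\"onig's Lemma to trade it for $\forall n\,\exists\tau$ in (b), and the bounding function to make the level sets $L_{\sigma,n}$ computable in (c). The only cosmetic point is in (c): since $f$ is only guaranteed to be defined on nodes of $T$, your decision procedure for ``$L_{\sigma,n}=\emptyset$'' should first check $\sigma\in T$ (declaring $L_{\sigma,n}=\emptyset$ otherwise) so that $f$ is only ever applied inside $T$; with that convention the predicate is total recursive and your $\Sigma^0_1$ description of the complement goes through verbatim.
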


For any nonempty \pz class ${\cal C} = [T]$, one can compute a member of ${\cal
C}$ from the tree $Ext(T)$  by always taking the leftmost branch in $Ext(T)$.

The following theorem immediately follows from Theorem \ref{thm:Ext}.

\begin{theorem}\label{thm:basis1} For any nonempty $\Pi^0_1$ class 
${\cal C} \subseteq \omega^{< \omega}$, 
\begin{compactdesc}
\item[(a)] ${\cal C}$ has a member which is recursive in some $\Sigma^1_1$ set. 
\item[(b)] If ${\cal C}$ is bounded, nearly bounded,  or 
almost always bounded, then ${\cal C}$ has a member which is 
recursive in ${\bf 0}''$, 
\item[(c)] If ${\cal C}$ is recursively bounded, nearly recursively bounded, 
 or almost always 
recursively bounded, then ${\cal C}$ has a member which is recursive  in 
${\bf 0}'$, and 
\item[(d)] If ${\cal C} = [T]$, where $T$ is decidable, then ${\cal C}$ has a
recursive  member.
\end{compactdesc} 
\end{theorem}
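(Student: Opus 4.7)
The plan is to apply the leftmost-path construction described immediately before the theorem: given $\mathcal{C}=[T]$ for a recursive tree $T$ (which is available by Lemma \ref{lem:eqv}), one builds $x\in[T]$ using $Ext(T)$ as an oracle by setting $x(n)$ to be the least $i$ with $(x\res n)^\smallfrown i\in Ext(T)$. This reduces each clause to bounding the Turing degree of $Ext(T)$, for which Theorem \ref{thm:Ext} supplies the required estimates.

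Parts (a) and (d) drop out immediately. For (a), Theorem \ref{thm:Ext}(a) makes $Ext(T)$ a $\Sigma^1_1$ set, so the leftmost path is recursive in some $\Sigma^1_1$ set. For (d), decidability of $T$ makes $Ext(T)$ recursive, so the leftmost-path procedure produces a recursive member. The principal case of (b) is likewise direct: a bounded tree is automatically finitely branching, so by Theorem \ref{thm:Ext}(b) we have $Ext(T)\in\Pi^0_2$, which is $\leq_T {\bf 0}''$. The principal case of (c) is similar, since a recursively bounded recursive tree is highly recursive, and Theorem \ref{thm:Ext}(c) yields $Ext(T)\in\Pi^0_1\leq_T{\bf 0}'$.

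The remaining clauses --- the nearly bounded and almost always bounded variants of (b), together with their recursive analogues in (c) --- are the only cases requiring real work, because here $T$ itself may still contain finitely many infinitely-branching nodes, concentrated in the exception set $F$ of the $a.a.$ definition or within the first $n$ levels of the nearly bounded definition. My strategy is to pass to an appropriate subtree. Let $m$ strictly exceed the depths of all nodes in $F$ (or, in the nearly bounded case, take $m=n$). Then every length-$m$ node $\sigma\in T$, together with all of its descendants, lies outside $F$, so the subtree $T_\sigma := \{\tau:\sigma^\smallfrown\tau\in T\}$ is recursive and (recursively) bounded, hence finitely branching (respectively highly recursive). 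Since $[T]\ne\emptyset$, some length-$m$ string $\sigma\in T$ lies in $Ext(T)$, and by Theorem \ref{thm:Ext}(b) (resp.\ (c)) the set $Ext(T_\sigma)$ is $\Pi^0_2$ (resp.\ $\Pi^0_1$). Using the appropriate oracle ${\bf 0}''$ (resp.\ ${\bf 0}'$) I search the recursive set of length-$m$ strings in $T$ for the first such $\sigma_0$ and then apply the principal case to $T_{\sigma_0}$, obtaining an infinite path extending $\sigma_0$ in the required degree.

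The main obstacle, modest as it is, is the nonuniformity of this last step: the parameters $F$, $m$, and $n$ depend on $T$ and are not produced effectively from a recursive index for $T$. This is harmless, however, because the theorem only asserts the existence of a member in a fixed Turing degree, so these finitely many quantities may legitimately be absorbed as nonuniform constants, and no further effectivity is needed.
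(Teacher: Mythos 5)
Your proposal is correct and follows essentially the same route as the paper, which proves the theorem simply by taking the leftmost branch of $Ext(T)$ and invoking the complexity bounds of Theorem \ref{thm:Ext}. Your additional step of passing, nonuniformly, to a subtree $T_{\sigma}$ above the finitely many exceptional nodes in the nearly bounded and almost always bounded cases is exactly the detail the paper leaves implicit (and is needed, since $Ext(T)$ itself need not be $\Pi^0_2$ there), so it fills a gap rather than diverging from the intended argument.
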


If $T \subseteq \omega^{< \omega}$ is tree and $f \in [T]$, then we say that
$f$ is isolated, if there is $k > 0$ such that $f$ is the only element of $[T]$
which extends  $(f(0), \ldots, f(k))$.  The complexity of isolated paths in
recursive trees was determined by Kreisel. 

\begin{theorem}{\em [Kreisel 59]}  Let ${\cal C}$ be a $\Pi^0_1$ class.
\begin{compactdesc}
\item[(a)] Any isolated member of ${\cal C}$ is hyperarithmetic. 
\item[(b)] Suppose that ${\cal C}$ is bounded, nearly bounded, or almost always
bounded. Then any isolated member of ${\cal C}$ is recursive  in ${\bf 0}'$. 
\item[(c)] Suppose ${\cal C}$ is recursively  bounded, nearly recursively
bounded,  or almost always recursively bounded. Then any isolated member of
${\cal C}$ is recursive.
\end{compactdesc}
\end{theorem}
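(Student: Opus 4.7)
The plan is to exploit isolation: if $k$ witnesses that $f$ is isolated, then $\sigma_0 := f \res (k+1)$ has the property that $f$ is the \emph{only} infinite path of $T$ extending $\sigma_0$. In the finitely branching case (which covers parts (b) and (c), since any bound forces finite branching), K\"onig's lemma therefore implies that for every $\sigma \in T$ with $\sigma_0 \preceq \sigma$ and $\sigma \not\prec f$, only finitely many extensions of $\sigma$ belong to $T$. Throughout I shall treat $\sigma_0$ as a parameter for the algorithm that computes $f$; this is permitted because the theorem asks for $f$ itself to be recursive / recursive in $0'$ / hyperarithmetic, not a uniform version of that.

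For (c), assume $T$ is $g$-bounded for a recursive $g$. To extract $f(n)$ from $f \res n$ (with $n > k$), list the finitely many immediate successors $\tau_1, \ldots, \tau_j$ of $f \res n$ in $T$ (effective from $g$), and dovetail a breadth-first search in the subtree of $T$ above each $\tau_i$. At each level $m$, enumerate the finitely many length-$m$ extensions of $\tau_i$ permitted by $g$ and test each for membership in $T$; if this level is ever empty, discard $\tau_i$ as having a finite subtree above it. By the previous paragraph, $j-1$ of the $\tau_i$ get discarded, and the lone survivor must equal $f \res (n+1)$. The \emph{nearly} and \emph{almost always} recursively bounded variants follow by pushing $\sigma_0$ past the finite exceptional set, so that $g$ bounds everything encountered. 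For (b) I run the same algorithm relative to $0'$: using $0'$ one computes $g(\sigma) := \max\{i : \sigma^\smallfrown i \in T\}$ (a $\Sigma^0_1$ search that terminates because $T$ is finitely branching), and then the argument of (c) relativizes to give $f \leq_T 0'$.

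For (a), I would invoke Theorem \ref{thm:Ext}(a): $Ext(T)$ is $\Sigma^1_1$. Using $\sigma_0$ as above, the statement $f(n) = y$ is equivalent to $\exists x\,(\sigma_0 \prec x \wedge x \in [T] \wedge x(n) = y)$, which is $\Sigma^1_1$ in $(n,y)$. Uniqueness of the extending path gives the equivalent $\Pi^1_1$ form: $f(n) = y$ iff no $x \in [T]$ extends $\sigma_0$ with $x(n) \neq y$. Hence the graph of $f$ is $\Delta^1_1$, so $f$ is hyperarithmetic.

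The main obstacle, to my mind, is to see that $0'$ genuinely suffices in (b): a direct approach via $Ext(T)$ would require $0''$, since $Ext(T)$ is only $\Pi^0_2$ in the finitely branching case. The savings come from the one-winner structure of the last-survivor algorithm --- we never certify the survivor positively, only \emph{eliminate} its finitely many competitors, and each competitor is a finitely branching subtree with no infinite path and hence finite, so can be disqualified by the single $\Sigma^0_1(0')$ event that some level above it is empty. This same economy is precisely what keeps the algorithm of (c) purely recursive rather than forcing it up to $0'$.
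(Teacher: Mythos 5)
Your argument is correct. Note that the paper offers no proof of this statement at all---it is quoted as a classical result of Kreisel (1959)---so there is nothing internal to compare against; what you give is essentially the standard argument: for (c) the non-uniform ``last survivor'' elimination (each wrong immediate successor of $f\res n$ above the isolating node $\sigma_0$ heads a finite subtree by K\"onig's lemma, and finiteness is certified by an empty level, which is decidable from the recursive bound), for (b) the same algorithm relativized to ${\bf 0}'$ after using ${\bf 0}'$ to compute the finitely many immediate successors of a node of the merely bounded tree, and for (a) the observation that uniqueness of the path through $\sigma_0$ makes the graph of $f$ both $\Sigma^1_1$ and $\Pi^1_1$, hence $\Delta^1_1$ and so hyperarithmetic. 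Your handling of the nearly/almost-always variants is also right: since the exceptional set is finite (or confined to lengths below a threshold), lengthening $\sigma_0$ along $f$ past it preserves isolation and puts every node the algorithm ever inspects under the bound; and your diagnosis of why ${\bf 0}'$ rather than ${\bf 0}''$ suffices in (b) is exactly the point. The appeal to Theorem \ref{thm:Ext}(a) in part (a) is superfluous---you never use $Ext(T)$, only the directly written $\Sigma^1_1$ and $\Pi^1_1$ definitions of the graph---but that is cosmetic.
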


A set $A \subseteq \omega$ is {\em low} if $A^\prime = 0^\prime$.
Jockusch and Soare \cite{JS71,JS72a,JS72b} proved the following important 
results about recursively bounded $\Pi^0_1$ classes. 

\begin{theorem}   
\begin{compactenum}
\item[(a)]{\em (Low Basis Theorem)} 
Every nonempty $r.b.$ $\Pi^0_1$ class ${\cal C}$ contains a member
of low degree. 

\item [(b)] There is a low degree {\bf a} such that every nonempty
r.b.  $\Pi^0_1$ class contains a member of degree $\leq {\bf a}$.

\item[(c)] If ${\cal C}$ is  $r.b.$, then $P$ contains a member of r.e.  degree.

\item[(d)] Every $r.b.$ \pz class ${\cal C}$ contains members $a$ and $b$ such
that any function recursive in both $a$ and $b$ is recursive. 

\item[(e)] If ${\cal C}$ is s bounded \pz class, then ${\cal C}$ contains a
member of $\Sigma^0_2$ degree.

\item[(f)] Every bounded \pz class contains a member $a$ such that 
${a}' \leq_T {\bf 0}''$. 

\item[(g)] Every bounded \pz class ${\cal C}$ contains members $a$ and $b$ such
that any function recursive in both $a$ and $b$ is recursive in
$\emptyset^\prime$.
\end{compactenum}
\end{theorem}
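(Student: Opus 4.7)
My plan exploits the fact that for a highly recursive tree $T$, emptiness of any $\Pi^0_1$-defined subclass of $[T]$ is a $\Sigma^0_1$ question, decidable by $\mathbf{0}'$: this is immediate from K\"onig's lemma applied to the recursive branching bound. For merely bounded trees, $Ext(T)$ is $\Pi^0_2$ by Theorem \ref{thm:Ext}(b), so the same arguments relativize to $\mathbf{0}'$. I would therefore prove (a)--(d) for recursively bounded classes first and then derive (e)--(g) by relativization.

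For the Low Basis Theorem (a), I construct $f \in \mathcal{C}$ by a $\mathbf{0}'$-computable finite-extension argument. Maintain at stage $e$ a highly recursive subtree $T_e \subseteq T$ with $[T_e] \neq \emptyset$. At stage $e+1$, consider $T_{e}^{\ast} = \{\tau \in T_e : \phi_e^\tau(e) \text{ has not converged in } |\tau| \text{ steps}\}$; using $\mathbf{0}'$, decide whether $[T_{e}^{\ast}] \neq \emptyset$. If so, set $T_{e+1} = T_e^{\ast}$, permanently forcing $e \notin f'$; otherwise set $T_{e+1} = T_e$, which forces $e \in f'$ because every remaining path converges on input $e$. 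Take $f$ to be the leftmost path of $\bigcap_e T_e$. Then $f \leq_T \mathbf{0}'$, and $f'$ is read off the sequence of stage decisions, so $f' \leq_T \mathbf{0}'$, i.e., $f$ is low. For (b), apply the construction to a universal highly recursive tree $\widehat{T}$ whose paths simultaneously code paths through every $T_e$ in the enumeration of Section \ref{classes}; the resulting $\mathbf{a}$ is then a single low basis degree. For (c), re-engineer the approximation so that the strings chosen form an r.e. set via a permitting scheme, yielding an $f$ of r.e.\ degree rather than merely $\Delta^0_2$ degree.

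For (d), I run a paired construction: at stage $e$, test the $\Sigma^0_1$ question of whether there exist $(\sigma, \tau)$ in the current pair of subtrees and an input $x$ with $\phi_e^\sigma(x)\downarrow$, $\phi_e^\tau(x)\downarrow$, and $\phi_e^\sigma(x) \neq \phi_e^\tau(x)$. If yes, split $(a,b)$ at that $x$; if no, then on the surviving pair of highly recursive subtrees any total common value $\phi_e^a = \phi_e^b$ is forced to be recursive, since a recursive algorithm can search within those subtrees for a convergent witness at each input and read off the unique value. Parts (e)--(g) then follow by relativizing (a), (a), and (d) respectively to $\mathbf{0}'$, using that $Ext(T)$ is $\Pi^0_1(\mathbf{0}')$ for any bounded tree: this yields a $\Sigma^0_2$ member for (e), a member with $f' \leq_T \mathbf{0}''$ for (f), and a pair whose join lies $\leq_T \mathbf{0}'$ for (g). The main obstacle is (c): obtaining r.e.\ degree rather than merely $\Delta^0_2$ degree requires the permitting refinement rather than a plain oracle computation, and this is where the genuinely new combinatorial ingredient from the original Jockusch--Soare work enters.
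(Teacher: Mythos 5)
The paper does not prove this theorem at all: it is quoted as classical background due to Jockusch and Soare \cite{JS71,JS72a,JS72b}, so your sketch has to be measured against the standard arguments. Your treatment of (a), (d) and the relativizations (f), (g) is essentially the standard forcing with recursively bounded $\Pi^0_1$ classes and is sound in outline, with two routine repairs: in (a) the lowness comes first ($f'$ is read off the $\mathbf{0}'$-computable sequence of decisions, hence $f\leq_T f'\leq_T\mathbf{0}'$) — computing the leftmost path of $\bigcap_e T_e$ directly from $\mathbf{0}'$ is not immediate and is not needed; and in (d) the requirements must be indexed by pairs of indices $(i,j)$ (a function recursive in both $a$ and $b$ uses two different reductions), and when you ``split'' you must refine to extendible nodes so that the two subclasses remain nonempty.

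The genuine gaps are in (b), (c) and (e). For (c) you offer only ``re-engineer the approximation via a permitting scheme'' and explicitly defer the key ingredient to the original paper; that is not a proof. (In fact no permitting is needed: the standard argument shows the \emph{leftmost} path $f$ of a nonempty r.b.\ class is Turing equivalent to the set $U$ of strings lying strictly to the left of $f$, and $U$ is r.e.\ because non-extendibility of a node in a highly recursive tree is a $\Sigma^0_1$ event, by Theorem \ref{thm:Ext}(c); moreover $f\leq_T U$ and $U\leq_T f$ are direct.) Because (c) is missing, your derivation of (e) also fails: relativizing (a) to $\mathbf{0}'$ yields a member that is low over $\mathbf{0}'$, i.e.\ a member $a$ with $a'\leq_T\mathbf{0}''$ — that is exactly (f), not (e). Since the $\Sigma^0_2$ degrees are precisely the degrees of sets r.e.\ in $\mathbf{0}'$, (e) is the relativization of (c) to $\mathbf{0}'$, so it inherits the gap. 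Finally, for (b) the ``universal highly recursive tree whose paths simultaneously code paths through every $T_e$'' does not exist as stated — many $T_e$ in the enumeration are empty or not recursively bounded, so no single path can code paths through all of them; what is actually needed is the nontrivial lemma that there is a nonempty r.b.\ $\Pi^0_1$ class each of whose members computes a member of every nonempty r.b.\ $\Pi^0_1$ class (e.g.\ the class of complete consistent extensions of PA), to which one then applies (a). That lemma has to be stated and proved, or cited, for your argument for (b) to go through.
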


Cenzer and Remmel \cite{CR1,CR2} proved a large number 
of results about index sets for $\Pi^0_1$ classes and primitive recursive 
trees. Below 
we list a sample of such results which will be important 
for us to establish corresponding results for index 
sets of finite normal predicate logic programs. 

Our first results establish the complexity of determining 
whether a primitive recursive tree is recursively bounded, 
almost always recursively bounded, nearly recursively bounded, 
bounded, almost always bounded, nearly bounded, or decidable. 

\begin{theorem} \label{thm:pirb}
\begin{compactenum}
\item[(a)] $\{e: T_e\ \text{is $r.b.$}\}$ is $\Sigma^0_3$-complete.
\item[(b)] $\{e: T_e\ \text{is $a.a.r.b$.}\}$ is $\Sigma^0_3$-complete.
\item[(c)] $\{e: T_e\ \text{is $n.r.b$.}\}$ is $\Sigma^0_3$-complete.
\item[(d)] $\{e: T_e\ \text{is bounded}\}$ is $\Pi^0_3$-complete.
\item[(e)] $\{e: T_e\ \text{is $a.a.b.$}\}$ is $\Sigma^0_4$-complete.
\item[(f)] $\{e: T_e\ \text{is $n.b.$}\}$ is $\Sigma^0_4$-complete.
\item [(g)]
$\{e: T_e\ \text{is $r.b.$ and decidable}\}$ 
is $\Sigma^0_3$-complete.
\end{compactenum}
\end{theorem}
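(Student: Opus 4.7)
Each clause is proved in two halves: an upper bound read off from the defining condition, and a lower bound via a uniform primitive recursive construction of $T_{f(e)}$ reducing a known complete set. For the upper bounds, since each $T_e$ is primitive recursive, ``$\sigma\in T_e$'' is decidable in $(e,\sigma)$. The condition ``$T_e$ is $r.b.$'' unfolds to
\[
\exists k\,\bigl[\phi_k\text{ total}\ \wedge\ \forall\sigma\,\forall i\,(\sigma^\smallfrown i\in T_e\to i\leq\phi_k(\sigma))\bigr],
\]
which is $\exists k$ followed by $\Pi^0_2$, hence $\Sigma^0_3$, yielding (a). Parts (b) and (c) prepend an $\exists F$ (the canonical index of the finite exceptional set) or an $\exists n$ (length threshold), and remain $\Sigma^0_3$. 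For (g), once $T_e$ is $r.b.$, Theorem \ref{thm:Ext}(c) makes $\mathit{Ext}(T_e)$ a $\Pi^0_1$ set, so decidability of $T_e$ adds an existential index of a recursive decision procedure for $\mathit{Ext}(T_e)$, which folds into the leading $\exists$; the count is still $\Sigma^0_3$. For (d), ``bounded'' is equivalent to ``finitely branching,'' and finiteness of the decidable set $\{i:\sigma^\smallfrown i\in T_e\}$ is $\Sigma^0_2$ in $\sigma$, so the universal closure over $\sigma$ is $\Pi^0_3$. Parts (e) and (f) put an $\exists F$ or $\exists n$ outside (d)'s statement, giving $\Sigma^0_4$.

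For the lower bounds I would construct $T_{f(e)}$ uniformly by primitive recursive operations on $e$, reducing a standard complete set. For (d), reduce the $\Pi^0_3$-complete $\overline{\cof}=\{e:\omega\setminus W_e\text{ is infinite}\}$: split $T_{f(e)}$ into column subtrees indexed by $n$, arranging that column $n$ has a node with arbitrarily many children exactly when $n$ is not yet enumerated in $W_e$, so that $T_{f(e)}$ is finitely branching iff almost all $n$ are in $W_e$. For (a), reduce $\Rec=\{e:W_e\text{ is recursive}\}$ by a two-dimensional construction indexing columns by candidate decision procedures $\phi_k$: the branching in column $k$ is kept bounded by a primitive recursive function of $k$ so long as $\phi_k$ has correctly decided $W_e$ up to the current stage, and is allowed to explode into infinite branching the moment $\phi_k$ is seen to err or to fail to converge, arranged so that a uniform recursive bound for the whole tree exists iff some $\phi_k$ correctly decides $W_e$, i.e., iff $e\in\Rec$. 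Parts (b), (c), (g) are minor adaptations: an initial finite patch or header prefix is set aside for the ``almost always'' and ``nearly'' versions, and an explicit recursive certificate for $\mathit{Ext}(T_{f(e)})$ is interleaved for (g). For (e) and (f), reduce a $\Sigma^0_4$-complete set such as $\{e:\{n:W_{\phi_e(n)}\text{ is cofinite}\}\text{ is cofinite}\}$ by iterating the (d) construction one level, with an outer finite-exception layer encoding the extra quantifier alternation.

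The principal obstacle is the $r.b.$ reduction in (a). ``Finitely branching'' for a primitive recursive tree captures only $\Pi^0_2$/$\Pi^0_3$ conditions, because the modulus of enumeration of an r.e.\ set is typically non-recursive; one cannot simply tie the recursive bound on branching to the stages at which elements enter $W_e$. To force ``$r.b.$'' to capture the full $\exists k\,\Pi^0_2$ content of $\Rec$, one must parameterize the branching bounds by candidate recursive witnesses and build controlled failure gadgets into each column, so that admitting a single uniform recursive bound across the whole tree is equivalent to the existence of some genuinely correct recursive witness. Once (a) is handled, (b), (c), (g) are routine variants, and the $\Sigma^0_4$ parts (e), (f) follow by wrapping an outer finite-exception layer around the same construction to pick up the additional alternation.
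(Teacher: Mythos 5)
Your upper bounds are fine and match the paper's (which simply says they follow by writing out the definitions), but the two reductions that carry all the hardness content do not work as described. For (d): you tie an infinitely branching node in column $n$ to ``$n$ not enumerated in $W_e$'' and conclude that $T_{f(e)}$ is finitely branching iff almost all $n$ lie in $W_e$. That conclusion does not follow from the arrangement (a single column with an infinitely branching node already destroys finite branching, so your arrangement gives ``finitely branching iff $W_e=\omega$'', a $\Pi^0_2$ condition), and as a target it is impossible: ``finitely branching iff $W_e$ is cofinite'' would make the $\Pi^0_3$ set $\{e: T_e \text{ bounded}\}$ $\Sigma^0_3$-hard, contradicting the hierarchy theorem. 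The correct polarity is the opposite one: the tree must acquire an infinitely branching node exactly when $W_e$ is cofinite. The paper achieves this with a moving-marker device: $\phi(e,m,s)$ is the least $n>m$ outside $W_{e,s}\setminus\{0\}$, and a child is added above $(m)$ each time $m$ is the least column whose marker moves; if $W_e\setminus\{0\}$ is co-infinite every marker stabilizes and the tree is finitely branching, while if it is cofinite the column of the largest non-element branches infinitely. Your (e) and (f), which wrap a finite-exception layer around (d) (the right idea, and essentially the paper's gluing $T_{\phi(a)}=\{(0)\}\cup\{(k+1)^\smallfrown\sigma:\sigma\in U_{h(a,k)}\}$), inherit this defect.

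For (a): indexing columns by candidate deciders $\phi_k$ and letting a column ``explode'' when $\phi_k$ errs breaks the forward direction of the reduction. Even when $W_e$ is recursive, all but one of the candidates $\phi_k$ are wrong or partial, so their columns explode; if the explosion is infinite branching the tree is never even bounded, and if it is a finite blow-up whose size records the (non-computable) stage at which $\phi_k$ is caught, the tree still fails to be recursively bounded. Either way you do not get ``$T_{f(e)}$ r.b.\ iff $e\in\Rec$'', and your closing remark about ``controlled failure gadgets'' is exactly the unproved step. The paper's construction avoids candidate-indexing altogether: nodes of $T_{f(e)}$ are increasing stage sequences $(s_0,\dots,s_{k-1})$ forced to track the canonical enumeration stages $t_0<t_1<\cdots$ of $W_e$, so the tree is always bounded (by $g(n)=t_n+1$); if $W_e$ is recursive the sequence $(t_k)$ is recursive and bounds the tree, and conversely any recursive bound $h$ gives $k\in W_e\iff k\in W_{e,h(k)}$, so $W_e$ is recursive. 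This same reduction immediately yields (b) (and the nearly-r.b.\ case), since an almost-always recursive bound still majorizes $t_k$ for large $k$ and hence still decides $W_e$ --- which is in fact the only new content the paper proves here, the remaining parts being quoted from Cenzer--Remmel. So the architecture of your proposal (definitional upper bounds plus uniform tree reductions from $\Rec$ and $\cof$) is the right one, but both core constructions need to be replaced by ones of the paper's type.
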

\begin{proof}
The only parts which are not proved by Cenzer and Remmel  
in \cite{CR1} are parts (b) and  (e). (In \cite{CR1}, Cenzer 
and Remmel used the term almost bounded for what we call  
nearly bounded.)

We shall show how to modify the proofs of (c) and (f) in \cite{CR1} to prove
(b) and (e), respectively.  Similar modifications of the proofs in \cite{CR1}
for index sets relative to nearly bounded and nearly recursively bounded trees
can be used to establish  the remaining index set results which we list in this
section. 

The facts that $\{e: T_e\ \text{is $a.a.r.b$.}\}$ is $\Sigma^0_3$  and $\{e:
T_e\ \text{is $a.a.b$}\}$ is $\Sigma^0_4$ are easily established by simply
writing out the definitions. 

To prove the $\Sigma^0_3$-completeness of $\{e: T_e\ \text{is $a.a.r.b$.}\}$,
we can use the same proof that was used by Cenzer and Remmel \cite{CR1} to
establish that $\{e: T_e\ \text{is $r.b$.}\}$ is $\Sigma^0_3$-complete.  It is
easy to see that a tree $T$ is $r.b.$ if and only if there is a recursive
function $g:\omega \rightarrow \omega$ such that if $(a_0, \ldots, a_n) \in T$,
then $a_i < g(i)$ for all $i \in T$. Similarly, a tree $T$ is $a.a.r.b.$ if and
only if there is a recursive function $g:\omega \rightarrow \omega$ such 
that for all but finitely many $(a_0, \ldots, a_n) \in T$, $a_i < g(i)$ for all
$i \in T$. In each case, we shall call such a function $g$ a {\em bounding
function}. 

Now, $\Rec= \{e: W_e \ \text{is recursive}\}$ is $\Sigma^0_3$-complete, 
see Soare's book \cite{Soare}. We define a reduction $f$ of $\Rec$ to  $\{e:
T_e\ \text{is $r.b$.}\}$.  This will be done so that $[T_{f(e)}]$ is empty if
$W_e$ is finite and $[T_{f(e)}]$ has a single element if $W_e$ is infinite.
The primitive recursive tree $T_{f(e)}$ is defined so that we put $\sigma =
(s_0,s_1,\ldots,s_{k-1}) \in T_{f(e)}$ if and only if $s_0 < s_1 < \dots <
s_{k-1}$ and there exists a sequence $m_0 < m_1 < \dots < m_{k-1}$ such that,
for each $i<k$, $m_i \in W_{e,s_i} \setminus W_{e,s_i-1}$ and $m_i$ is the
least element of $W_{e,s_{k-1}} \setminus \{m_0,\ldots,m_{i-1}\}$.  We observe
that if $W_e$ is finite, then $T_{f(e)}$ is also finite and therefore
recursively bounded.  Now fix $e$ and suppose that $W_e$ is infinite. Then we
define a canonical sequence $n_0 < n_1 < \dots$ of elements of $W_e$ and
corresponding sequence of stages $t_0 < t_1 < \dots$ such that, for each $i$,
$n_i \in W_{e,t_i} \setminus W_{e,t_i-1}$ and $(t_0,t_1,\ldots,t_i) \in
T_{f(e)}$ as follows. Let $n_0$ be the least element of $W_e$ and $t_0$ is the
least stage $t$ such that $n_0 \in W_{e,t}$. Then for each $k$, let $n_{k+1}$
be the least element of $W_e \setminus W_{e,t_k}$ and $t_{k+1}$ be the least
stage $t$ such that $n_{k+1} \in W_{e,t}$.  Then for each $k$,
$(t_0,\ldots,t_k) \in T_{f(e)}$ and $n_k \in W_{e,t_k}$. Furthermore, we can
prove by induction on $k$ that 
\[
k \in W_e \to k \in W_{e,t_k}.
\]

For $k=0$, this is because $n_0 = 0$ if $0 \in W_e$.  Assuming the statement to
be true for all $i<k$, we see that if $k \in W_e$, then either $k \in
W_{e,t_{k-1}}$, or else $n_k = k$. In either case, we have $k \in W_{e,t_k}$.

The key fact to observe is that for any $(s_0,\ldots,s_k) \in T_{f(e)}$, $s_k
\leq t_k$. To see this, let $(s_0,\ldots,s_k) \in T_{f(e)}$, let
$(m_0,\ldots,m_k)$ be the associated sequence of elements of $W_e$.  Suppose by
way of contradiction that $s_k > t_k$. It follows from the definitions of
$T_{f(e)}$ and of $t_0,\ldots,t_k$ that in fact $s_i = t_i$ and $m_i = n_i$ for
all $i \leq k$. Thus if we let $g(n) = t_{n} +1$, then $g$ will be a bounding
function for $T_{f(e)}$.  Now, if $W_e$ is recursive. then the sequence $t_0 <
t_1 < \dots$ is also recursive and thus $T_{f(e)}$ is recursively bounded.

Now suppose that $T_{f(e)}$ has a recursive bounding function $h$.  Then we
must have $t_k <  h(k)$ for each $\sigma$ of length $k$. It then follows from
the equation above that $k \in W_e \iff k \in W_{e,h(k)}$, so that $W_e$ is
recursive. Thus $T_{f(e)}$ is $r.b$ if and only if $W_e$ recursive and, hence,
$\{e:T_e \ \text{is $r.b.$} \}$ is $\Sigma^0_3$-complete.  However, note that
if $h:\omega^{< \omega} \rightarrow \omega$ is a function that witnesses that
$T_{f(e)}$ is almost always recursively bounded, then there will be a $n$ such
that $t_k < h(k)$ for all $k \geq n$.  In that case, for all $k \geq n$,  $k
\in W_e \iff k \in W_{e,h(k)}$ which still implies that $W_e$ is recursive.
Thus $T_{f(e)}$ is $a.a.r.b.$ if and only if $W_e$ is recursive so 
that $\{e: T_e\ \text{is $a.a.r.b$.}\}$ is also $\Sigma^0_3$-complete. 

This argument is typical of the completeness arguments for the properties about
cardinalities of $[T]$ or the number of recursive elements of $[T]$ that appear
in the rest of the theorems in this section. That is, the completeness 
argument for $r.b.$ trees also works for $a.a.r.b.$ trees. 

For the completeness argument for (d), we shall use the fact that $\cof =
\{e:\omega\setminus W_e  \ \text{is finite}\}$ is $\Sigma^0_3$-complete set,
see \cite{Soare}. We let $W_{e,s}$ denote the set of elements that are
enumerated into $W_e$  in $s$ or fewer steps as in \cite{Soare}. By definition, 
all $x \in W_{e,s}$ are less than or equal to $s$ and the question of whether
$x \in W_{e,s}$ is a primitive recursive predicate.  Then we can define a
primitive recursive function $\phi(e,m,s) = (least \ n > m)(n \notin
W_{e,s} \setminus \{0\})$. For any given $e$, let $U_e$ be the tree such that 
$(m) \in U_e$ for all $m \geq 0$ and $(m,s+1) \in U_e$ if and only if $m$ is
the least element such that $\phi(e,m,s+1) > \phi(e,m,s)$.  Note that when $m
\geq s+1$, the least $n$ such that $n > m$ and $n \notin W_{e,s}$ is just $m+1$
since all elements of $W_{e,s+1}$ are less than $s+1$.  Thus the only 
candidates for $(m,s+1)$ to be in $U_e$ are $m \leq s+1$.  Thus the tree $U_e$
will be primitive recursive.  Now if $W_e\setminus \{0\}$ is not cofinite, then
for each $m$, there is a minimal $n>m$ such that $n \notin W_e$. It follows that
$\lim_s \phi(e,m,s) = n$, so that $\phi(e,m,s+1) > \phi(e,m,s)$ for only
finitely many $s$, which will make $U_{e}$ finitely branching.  On the other
hand, if $W_e\setminus \{0\}$ is cofinite and we choose $m$ so that 
$n \in W_e\setminus \{0\}$ for all $n > m$, then it is clear that there will be 
infinitely many $s$ such that $\phi(e,m,s+1)>\phi(e,m,s)$.  It follows that if
$m$ is the largest element not in $W_e\setminus \{0\}$, then for infinitely
many $s$, $(m,s+1)$ will be in $U_e$ and for all $p > m$, there can be only
finitely many $s$ such that $(p,s+1)$ is in $U_e$. Thus if $W_e\setminus \{0\}$
is cofinite, then there will be exactly one node which has infinitely many
successors.  Clearly there is a recursive function $f$ such that $T_{f(e)} =
U_e$. But then 
\[
e \in \omega \setminus \cof \iff T_{f(e)} \ \text{is
bounded}.
\]
Since $\omega \setminus \cof$ is $\Pi^0_3$-complete, it follows that $\{e: T_e
\ \text{is bounded}\}$ is $\Pi^0_3$-complete.

Now, let  $S$ be an arbitrary $\Sigma^0_4$ set and suppose that $a \in S \iff
(\exists k) R(a,k)$ where $R$ is $\Pi^0_3$. By the usual quantifier methods, we
may assume that $R(a,k)$ implies that $R(a,j)$ for all $j>k$.  By our argument
for the $\Pi^0_3$-completeness of $\{e: T_e \ \text{is bounded}\}$, there is a
recursive function $h$ such that $R(a,k)$ holds if and only if $U_{h(a,k)}$ is
bounded and such that $U_{h(a,k)}$ is $a.a.b.$ for every $a$ and $k$.  Now we
can define a recursive function $\phi$ so that 
\[
T_{\phi(a)} = \{(0)\} \cup \{(k+1)^\smallfrown \sigma: \sigma 
\in U_{h(a,k)}\}.
\]
If $a \in S$, then $U_{h(a,k)}$ is bounded for all but finitely many $k$ and is
$a.a.b.$ for the remaining $k$'s. Thus $U_{\phi(a)}$ is $a.a.b.$  If $a \notin
S$, then, for every $k$, $U_{h(a,k)}$ is not bounded, so that $U_{\phi(a)}$ is
not $a.a.b.$ Thus $a \in S$ if and only if $T_{\phi(a)}$ is $a.a.b$ and $\{e:
T_e\ \text{is $a.a.b.$ }\}$ is $\Sigma^0_4$-complete. 
\end{proof}

As it stands, it is clear that there are  no infinite paths through
$T_{\phi(a)}$ since every node $T_{\phi(a)}$ has length at most 3.  The reason
that we constructed the tree $T_{\phi(a)}$ to contain the node $(0)$ is for the
remaining completeness arguments which follow in this section. That is, we are
now free to modify the construction to add a tree above $(0)$ which has a
number of infinite paths. Now, completeness arguments to establish the
complexity for various properties concerning the number of infinite paths or
infinite recursive paths through $r.b.$ trees in \cite{CR1} always produced
bounded trees. Since the complexity results for $r.b.$ trees were bounded by
$\Sigma^0_4$, it follows that we can modify the construction by placing trees
above $(0)$ in  $T_{\phi(a)}$ to show that  complexity for various properties
concerning the number of infinite paths or infinite recursive paths through
$a.a.b.$  trees is $\Sigma^0_4$-complete. Thus we shall not give the details of
such arguments.  $\hfill\Box$

Next, we give several index set results concerning the size of $[T]$ for
primitive recursive trees $T$ which have various properties. These results are 
either proved in \cite{CR1} or follow by modifying the results in \cite{CR1} as
described in Theorem \ref{thm:pirb} to prove results about $a.a.b.$ or
$a.a.r.b.$ trees. In fact, in all the results that follow, the index set
results for properties relative to $a.a.b.$ trees are exactly the same as the
index set results for $n.b.$ trees and the index set results for properties of
$a.a.r.b.$ trees are exactly the same as the index set results for $n.r.b.$
trees.  Thus we shall only state the results for $a.a.$ and $a.a.r.b.$ trees. 

\begin{theorem} \label{thm:pire} 
\begin{compactenum}
\item[(a)] $\{e: T_e \ \text{is r.b. and $[T_e]$ is empty} \}$ is
$\Sigma^0_2$-complete. 
\item[(b)] $\{e: T_e \ \text{is r.b. and $[T_e]$ is nonempty} \}$ is
$\Sigma^0_3$-complete.
\item[(c)] $\{e: T_e \ \text{is bounded and $[T_e]$ is empty} \}$ is
$\Sigma^0_2$-complete. 
\item[(d)] $\{e: T_e \ \text{is bounded and $[T_e]$ is nonempty} \}$ is
$\Pi^0_3$-complete.
\item[(e)] $\{e: T_e \ \text{is a.a.r.b. and $[T_e]$ is nonempty} \}$ 
and \\
$\{e: T_e \ \text{is a.a.r.b. and $[T_e]$ is}$ $\text{empty} \}$ are 
$\Sigma^0_3$-complete.
\item[(f)] $\{e: T_e \ \text{is a.a.b. and $[T_e]$ is nonempty} \}$ 
and \\
$\{e: T_e \ \text{is a.a.b. and $[T_e]$ is}$ $\text{ empty} \}$ are
$\Sigma^0_4$-complete.  
\item[(g)] $\{e: [T_e] \ \text{is nonempty} \}$ is $\Sigma_1^1$-complete and\\
$\{e: [T_e] \ \text{is empty} \}$ is  $\Pi_1^1$-complete.  
\end{compactenum}
\end{theorem}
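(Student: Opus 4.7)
My plan is to handle each of the seven completeness claims in two stages: compute the upper bound by direct definitional analysis, and then supply (or cite) a matching hardness reduction, either from \cite{CR1} or by modifying one of the constructions in the proof of Theorem \ref{thm:pirb}.

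For the upper bounds the essential ingredient is K\"onig's Lemma. For r.b.\ or bounded $T$, one has $[T]\neq\emptyset$ iff $T$ is infinite, which is $\Pi^0_2$. Thus (a) reduces to ``$T_e$ is finite'' and is plainly $\Sigma^0_2$; (b) and (d) arise from combining the $\Sigma^0_3$ and $\Pi^0_3$ complexities of ``r.b.''\ and ``bounded'' from Theorem \ref{thm:pirb} with $\Pi^0_2$; (c) mirrors (a). For a.a.r.b.\ (resp.\ a.a.b.)\ $T$, the witnessing bounding function $g$ and finite exception set $F$ can be used both to certify boundedness and to check by a $\Sigma^0_1$-in-$g$ (resp.\ $\Sigma^0_2$) search that each subtree outside $F$ is finite or infinite, keeping the overall complexity within $\Sigma^0_3$ (resp.\ $\Sigma^0_4$). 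Part (g) is immediate: $[T_e]\neq\emptyset$ is $\exists x\,\forall n\,(x\res n\in T_e)$, hence $\Sigma^1_1$; its complement is $\Pi^1_1$.

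For hardness in (a)--(d) and (g), I would invoke the constructions of \cite{CR1}. In (a) and (c), reduce $\mathit{Fin}$ using the stage-tree construction from the proof of Theorem \ref{thm:pirb}(a): $T_{f(e)}$ is finite (hence automatically r.b.\ and bounded with $[T_{f(e)}]=\emptyset$) iff $W_e$ is finite, and carries an infinite path whenever $W_e$ is infinite. For (b), that same tree satisfies r.b.\ iff $W_e$ is recursive and nonempty $[T_{f(e)}]$ iff $W_e$ is also infinite; a padding from $\Rec$ to $\Rec\setminus \mathit{Fin}$ (still $\Sigma^0_3$-complete) furnishes the reduction. For (d), start with the tree $U_e$ of the proof of Theorem \ref{thm:pirb}(d), which is bounded iff $e\in\overline{\cof}$, and splice on a permanent infinite linear branch to supply a guaranteed infinite path, yielding a reduction from the $\Pi^0_3$-complete set $\overline{\cof}$ to the target. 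For (g), invoke Kleene's classical theorem that every $\Sigma^1_1$ (resp.\ $\Pi^1_1$) set arises as $\{e:[T_e]\neq\emptyset\}$ (resp.\ $\{e:[T_e]=\emptyset\}$) for a primitive recursive $T_e$.

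For (e) and (f), I follow the recipe indicated at the end of the proof of Theorem \ref{thm:pirb}(b) and (e). Because every r.b.\ tree is a.a.r.b.\ and every bounded tree is a.a.b., the nonempty cases of (e) and (f) inherit $\Sigma^0_3$- and $\Sigma^0_4$-hardness directly from (b) and (d). For the empty cases, I would graft a Theorem \ref{thm:pirb}-style tree above a designated node of a wrapper, analogous to the $T_{\phi(a)}$ construction at the end of the proof of Theorem \ref{thm:pirb}, so that the entire tree is always a.a.r.b.\ (resp.\ a.a.b.) and its emptiness tracks emptiness of the grafted subtree, which in turn encodes the target $\Sigma^0_3$- (resp.\ $\Sigma^0_4$-) complete set. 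The chief obstacle will be verifying that the grafting preserves the ambient boundedness class: one must ensure that the wrapper contributes no new unbounded-branching nodes outside a finite exception set, and that the grafted subtree itself remains in the correct boundedness class on every input.
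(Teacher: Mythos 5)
Your upper-bound computations are fine (including the key point that a recursive bound makes finiteness of the subtrees a $\Sigma^0_1$-in-$g$ matter, which is exactly what keeps the a.a.r.b.\ cases at $\Sigma^0_3$ rather than $\Sigma^0_4$), and your reductions for (a)--(d) and (g) are in the spirit of what the paper does, namely quoting \cite{CR1} or reusing the trees from the proof of Theorem \ref{thm:pirb}. The problems are in the hardness arguments for (e) and (f). First, a level mismatch: you claim the nonempty half of (f) ``inherits $\Sigma^0_4$-hardness from (d),'' but (d) is only $\Pi^0_3$-complete, so nothing $\Sigma^0_4$-hard can be inherited from it. The $\Sigma^0_4$-hardness has to be carried by the a.a.b.\ property itself: one takes the tree $T_{\phi(a)}$ from the proof of Theorem \ref{thm:pirb}(e), which is a.a.b.\ iff $a\in S$, and grafts a tree with an infinite path above the node $(0)$ to guarantee nonemptiness --- this is precisely the modification the paper describes in the paragraph following that proof. (Relatedly, for the nonempty half of (e) the inclusion ``r.b.\ implies a.a.r.b.'' does not by itself transfer hardness; you also need the fact, proved in Theorem \ref{thm:pirb}, that the $\mathit{Rec}$-reduction tree $T_{f(e)}$ is a.a.r.b.\ only when $W_e$ is recursive, so that $e\notin \mathit{Rec}$ really lands outside the target set.)

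Second, and more seriously, your plan for the empty halves of (e) and (f) --- arrange that the whole tree is \emph{always} a.a.r.b.\ (resp.\ a.a.b.) while its emptiness tracks a $\Sigma^0_3$- (resp.\ $\Sigma^0_4$-) complete set $S$ --- is provably impossible. If $(T_a)_a$ is a uniformly primitive recursive family in which every $T_a$ is a.a.r.b., then $\{a : [T_a]\neq\emptyset\} = \{a : T_a \text{ is a.a.r.b.\ and } [T_a]\neq\emptyset\}$, which is $\Sigma^0_3$ by the very upper bound you establish for (e); hence $\{a : [T_a]=\emptyset\}$ is $\Pi^0_3$ and cannot be $\Sigma^0_3$-complete (that would give $\Sigma^0_3\subseteq\Pi^0_3$). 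The same argument one level up rules out your plan for (f). The hardness must instead be carried by the boundedness property with emptiness held fixed. For (f)-empty this is immediate from the paper's construction: $T_{\phi(a)}$ has all nodes of length at most $3$, so $[T_{\phi(a)}]=\emptyset$ always, and $a\in S$ iff $T_{\phi(a)}$ is a.a.b. For (e)-empty one needs a pathless variant of the $\mathit{Rec}$-reduction, e.g.\ an infinitely branching root whose $k$-th child carries the depth-$k$ truncation of $T_{f(e)}$: the resulting tree never has an infinite path, and it is a.a.r.b.\ iff the branching data $t_0<t_1<\cdots$ is recursively bounded, i.e.\ iff $W_e$ is recursive. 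With these reorientations the completeness claims go through; as stated, your reductions for the empty cases would fail.
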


\begin{theorem} \label{thm:pirc}
For every positive integer $c$,
\begin{compactenum} 
\item[(a)] $\{e: T_e \ \text{is r.b. and}\ \card([T_e]) > c\}$,\\
$\{e: T_e \ \text{is r.b. and}\ \card([T_e]) \leq c\}$,
and \\
$\{e: T_e \ \text{is r.b. and}\ \card([T_e]) = c\}$ are all
$\Sigma^0_3$-complete.
\item[(b)] $\{e: T_e \ \text{is a.a.r.b. and}\ \card([T_e]) > c\}$,\\
$\{e: T_e \ \text{is a.a.r.b. and}\ \card([T_e]) \leq c\}$,
and \\
$\{e: T_e \ \text{is a.a.r.b. and}\ \card([T_e]) = c\}$ are all 
$\Sigma^0_3$-complete.
\item[(c)]
$\{e: T_e \ \text{is bounded and}\ \card([T_e]) \leq c\}$ and \\
$\{e: T_e \ \text{is bounded}$ and $\card([T_e])$ $= 1\}$
are both  $\Pi^0_3$-complete;
\item[(d)]
$\{e: T_e \ \text{is bounded and}\ \card([T_e]) > c\}$
and \\
$\{e: T_e \ \text{is bounded and}\ \card([T_e])$ $ = c+1\}$
are both $D^0_3$-complete.
\item[(e)] $\{e: T_e \ \text{is a.a.b. and}\ \card([T_e])$ $ > c\}$,\\
$\{e: T_e \ \text{is $a.a.$ bounded and}\ \card([T_e])$ $\leq c\}$,
and \\
$\{e: T_e \ \text{is $a.a.$ bounded and}\ \card([T_e]) = c\}$ 
are all $\Sigma^0_4$-complete.
\item[(f)] $\{e: T_e \ \text{is $r.b$, dec. and}\ \card([T_e]) > c\}$,\\
$\{e: T_e \ \text{is $r.b.$, dec. and}\ \card([T_e])$ $\leq c\}$,
and \\
$\{e: T_e \ \text{is $r.b.$, dec. and}\ \card([T_e]) = c\}$ 
are all $\Sigma^0_3$-complete.
\item[(g)]
$(\{e:\card([T_e]) > c\})$ is $\Sigma^1_1$-complete, $\{e:\card([T_e]) \leq c\}$
is $\Pi^1_1$-complete and $\{e:\card([T_e]) = c\}$ is $\Pi^1_1$-complete.
\end{compactenum} 
\end{theorem}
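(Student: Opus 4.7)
The plan is to handle each class of results by first verifying the stated upper bound by direct quantifier counting, and then establishing completeness by adapting the constructions of Cenzer and Remmel \cite{CR1} along the lines already illustrated in the proof of Theorem \ref{thm:pirb}. For the upper bounds, ``$T_e$ is $r.b.$ and $\card([T_e]) \geq k$'' unfolds as ``there exists a recursive bounding function $g$ such that $T_e$ is $g$-bounded, and $T_e$ contains $k$ pairwise incompatible strings each extendible to an infinite path,'' which is $\Sigma^0_3$ since the second clause is $\Pi^0_2$ once the tree structure is fixed. Analogous computations yield the $\Sigma^0_3$ bounds in (b) and (f), the $\Pi^0_3$ bound in (c), the $D^0_3$ bound in (d), and the $\Sigma^0_4$ bound in (e). The analytic bounds in (g) come from the standard fact that ``$T_e$ has an infinite path'' is a universal $\Sigma^1_1$ predicate.

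For the lower bounds in (a), (c), and (g), I would invoke the constructions of \cite{CR1} essentially verbatim: the $\Sigma^0_3$-hardness for $r.b.$ trees comes from reducing $\Rec$ to the relevant index set via a tree that uses one ``spine'' to encode the enumeration of $W_e$ (as in the proof of Theorem \ref{thm:pirb}(a)) together with $c$ or $c+1$ fixed side-branches to supply the additional paths. For the $\Pi^0_3$-completeness in (c), one reduces from the complement of $\cof$ using the construction sketched in the proof of Theorem \ref{thm:pirb}(d) and again grafts on fixed paths. Part (g) follows from the standard representation of universal $\Sigma^1_1$ sets as path-existence predicates of primitive recursive trees.

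For the $a.a.r.b.$ and $a.a.b.$ parts (b) and (e), the plan is to combine the cardinality constructions above with the $a.a.$-bounded construction from the proof of Theorem \ref{thm:pirb}. In that construction the tree $T_{\phi(a)}$ has a designated node $(0)$ reserved for attaching additional structure; above this node I would place a copy of the $r.b.$ (respectively bounded) tree produced by the part (a) (respectively (c)) completeness argument. Since the subtree above $(0)$ is itself $r.b.$ (respectively bounded) by construction, it does not disturb the $a.a.r.b.$ or $a.a.b.$ status of the combined tree, and it supplies the correct cardinality of $[T_e]$. This yields $\Sigma^0_3$-hardness in (b) and $\Sigma^0_4$-hardness in (e). Part (f) is handled the same way as (a), using additionally that in the $\Rec$-reduction the produced tree is decidable.

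The main obstacle I anticipate is the $D^0_3$-completeness in part (d): a $D^0_3$ set has the form $A \setminus B$ with $A, B \in \Sigma^0_3$, and to reduce it to ``$T_e$ is bounded and $\card([T_e]) = c+1$'' I must simultaneously encode $A$-membership (to force the presence of $c+1$ paths) and $B$-non-membership (to prevent a $(c+2)$nd path from appearing), while keeping the whole tree bounded. The natural approach is to place two independent $\cof$-style gadgets on parallel branches, one controlling ``boundedness with $\geq c+1$ paths'' and one controlling ``no $(c+2)$nd path,'' and to verify that their interaction does not inadvertently destroy boundedness. A secondary subtlety is checking, in (e), that the cardinality conditions $>c$, $\leq c$, and $=c$ all genuinely remain at $\Sigma^0_4$ rather than collapsing to $\Pi^0_3$ or $D^0_3$; this follows because the $a.a.b.$ predicate is itself $\Sigma^0_4$-complete and dominates the quantifier complexity of the conjunction.
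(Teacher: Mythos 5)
Your overall plan---upper bounds by quantifier counting, the constructions of \cite{CR1} for the non-$a.a.$ parts, and grafting cardinality-controlling trees above the node $(0)$ of $T_{\phi(a)}$ for part (e)---is essentially the route the paper takes (the paper gives no separate proof of this theorem; it cites \cite{CR1} and the modification strategy sketched in the proof of Theorem \ref{thm:pirb}). However, your treatment of part (b) contains a genuine gap. You claim that attaching the $r.b.$ tree from the part (a) argument above the node $(0)$ of $T_{\phi(a)}$ "does not disturb the $a.a.r.b.$ status" and yields $\Sigma^0_3$-hardness for the $a.a.r.b.$ statements. This fails: the portion of $T_{\phi(a)}$ outside the cone above $(0)$ consists of the trees $U_{h(a,k)}$, whose branching in the bounded case is controlled by settling-time information about the relevant r.e.\ sets and is therefore not recursively bounded; so even in the ``yes'' case the combined tree is almost always \emph{bounded} but in general not almost always \emph{recursively} bounded, and the reduction does not land in the index set in question. (It is also aimed at a $\Sigma^0_4$-shaped parameter $a$, which is the wrong format for a $\Sigma^0_3$-hardness argument.) The paper's route for (b) is different and simpler: as observed in the proof of Theorem \ref{thm:pirb}(b), in the $\Rec$-reduction any function witnessing that $T_{f(e)}$ is almost always recursively bounded still computes $W_e$, so $T_{f(e)}$ is $a.a.r.b.$ if and only if it is $r.b.$ if and only if $W_e$ is recursive; hence the very same reductions used for the $r.b.$ statements in (a) establish (b) with no new construction, and no grafting onto $T_{\phi(a)}$ is needed or appropriate.

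A secondary soft spot is your ``one spine plus $c$ or $c{+}1$ fixed side-branches'' gloss for the hardness of the $\card([T_e])\leq c$ and $\card([T_e])=c$ statements in (a): the spine $T_{f(e)}$ of the $\Rec$-reduction contributes one infinite path when $W_e$ is infinite and none when $W_e$ is finite, so the total number of paths tracks finiteness of $W_e$ rather than recursiveness; for instance, under that reduction ``$T$ is $r.b.$ and $\card([T])\leq c$'' with $c$ side-branches is equivalent to $W_e$ being finite, i.e.\ you have reduced $\mathit{Fin}$ rather than $\Rec$. To get the $\leq c$ and $=c$ cases you must use the refined constructions of \cite{CR1}, in which the number of infinite paths is controlled independently of the coding of $\Rec$ (or repair the spine so that its path count does not depend on whether $W_e$ is finite). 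The rest of the proposal---parts (c), (d), (f), (g) and the part (e) grafting---is consistent with what the paper relies on.
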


\begin{theorem} \label{thm:pici}
\begin{compactenum}
\item[(a)] $\{e: T_e\ \text{is r.b. and $[T_e]$ is infinite}\}$ is
$D^0_3$-complete and $\{e: T_e\ \text{is r.b. and $[T_e]$ is finite}\}$ is
$\Sigma^0_3$-complete.
\item[(b)] $\{e: T_e\ \text{is a.a.r.b. and $[T_e]$ is infinite}\}$ is
$D^0_3$-complete and \\
$\{e: T_e\ \text{is a.a.r.b. }\ \text{and $[T_e]$ is
finite}\}$ is $\Sigma^0_3$-complete.
\item[(c)] $\{e: T_e \ \text{is bounded and $[T_e]$ is infinite}\}$ is 
$\Pi^0_4$-complete and \\
$\{e: T_e \ \text{is bounded and $[T_e]$ is finite}\}$ is $\Sigma^0_4$-complete.
\item[(d)] $\{e: T_e\ \text{is $a.a.$bounded and $[T_e]$ is
infinite}\}$ is $D^0_4$-complete and \\
$\{e: T_e\  \text{is $a.a.$ bounded and $[T_e]$ is finite}\}$ is $\Sigma^0_4$-complete.
\item[(e)] $\{e: [T_e]\ \text{is infinite}\}$ is 
$\Sigma_1^1$-complete and $\{e: [T_e]\ \text{is
finite}\}$ is $\Pi_1^1$-complete.
\item[(f)]$\{e: T_e\ \text{is r.b.and dec.  and $[T_e]$ is infinite}\}$ is
$D^0_3$-complete and \\
 $\{e: T_e\ \text{is r.b. and dec.  and $[T_e]$ is finite}\}$ is 
$\Sigma^0_3$-complete.
\end{compactenum}
\end{theorem}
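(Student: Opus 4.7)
The plan is to establish each part by combining a quantifier-counting argument for the upper bound with a modification of the completeness constructions used in \cite{CR1} (and already sketched in the proof of Theorem \ref{thm:pirb}). Throughout, I would use Theorem \ref{thm:Ext}: for a highly recursive tree $Ext(T)$ is $\Pi^0_1$, for a finitely branching recursive tree $Ext(T)$ is $\Pi^0_2$, and $[T]$ is infinite iff $Ext(T)$ is infinite, while $[T]$ is finite iff every path in $[T]$ is eventually isolated, i.e.\ there exists $n$ such that no node of $Ext(T)$ of length $n$ has two incomparable extensions in $Ext(T)$.

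For the upper bounds, I would argue as follows. In part (a), ``$T_e$ is r.b.'' is $\Sigma^0_3$ and, conditional on $T_e$ being r.b.\ (so $Ext(T_e)$ is $\Pi^0_1$), ``$[T_e]$ is finite'' is $\Sigma^0_2$; hence ``r.b.\ and infinite'' is the difference of two $\Sigma^0_3$ sets, giving $D^0_3$, while ``r.b.\ and finite'' is $\Sigma^0_3$. Part (b) is identical since a.a.r.b.\ is also $\Sigma^0_3$ by Theorem \ref{thm:pirb}(b) and the extendible predicate has the same complexity. In part (c) one uses that ``bounded'' is $\Pi^0_3$ and, for a bounded $T_e$, ``$[T_e]$ is infinite'' is $\Pi^0_2$ (arbitrarily long extendable nodes in the $\Pi^0_2$ set $Ext(T_e)$), yielding $\Pi^0_4$; ``bounded and finite'' becomes the complement relative to ``bounded'' and is $\Sigma^0_4$. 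Part (d) is analogous using that ``a.a.b.'' is $\Sigma^0_4$ by Theorem \ref{thm:pirb}(e), so ``a.a.b.\ and infinite'' is a difference of two $\Sigma^0_4$ sets. Part (e) follows from standard bounds on $\Sigma^1_1$/$\Pi^1_1$, and part (f) uses the $\Sigma^0_3$ bound on ``r.b.\ and decidable'' from Theorem \ref{thm:pirb}(g).

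For the lower bounds, I would take the constructions of \cite{CR1} that settle parts (a), (c), (e), (f) and graft onto them the trick developed in Theorem \ref{thm:pirb}: given any $\Sigma^0_4$ set $S$, the reduction $\phi$ produces a tree whose behaviour on the cone above the node $(0)$ is free to carry whatever ``number of paths'' gadget we need, while the remaining subtrees encode the boundedness/a.a.b.\ condition. Concretely, to prove (b) and (e) complete I would compose the $\Sigma^0_3$-completeness reduction from $\Rec$ used in Theorem \ref{thm:pirb} with the CR1 reduction producing an r.b.\ tree with the desired number of paths, observing that replacing ``r.b.'' with ``a.a.r.b.'' in the analysis of the bounding function (as in the proof of Theorem \ref{thm:pirb}) does not change the complexity. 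For (d), to get $D^0_4$-completeness I would write an arbitrary $D^0_4$ set as $A \setminus B$ with $A,B \in \Sigma^0_4$, and build a tree which is a.a.b.\ iff $a \in A$ and infinite iff $a \in A \setminus B$, by stacking two copies of the $\Sigma^0_4$-gadget from Theorem \ref{thm:pirb}(e) along disjoint cones and attaching an infinite-path gadget that is activated precisely when $B$ fails.

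The main obstacle will be part (d) (the $D^0_4$-completeness for a.a.\ bounded trees with $[T_e]$ infinite), because one must simultaneously control two $\Sigma^0_4$ properties while keeping the resulting tree primitive recursive and ensuring that the ``infinite paths'' gadget does not inadvertently violate a.a.\ boundedness. The decidability case (f) is also delicate, because one must verify that the CR1 constructions, which add or remove paths based on r.e.\ approximations, can be arranged so that $Ext(T_{f(e)})$ is itself recursive; this is handled by ensuring the constructions above $(0)$ only attach finite trees or infinite paths whose extendability is decided at the moment of attachment, as in the constructions already established in \cite{CR1} for decidable r.b.\ trees.
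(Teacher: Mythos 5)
Your overall strategy is the same as the paper's: parts (a), (c), (e), (f) are quoted from \cite{CR1}, and the $a.a.r.b.$/$a.a.b.$ parts (b), (d) are obtained by re-running the Cenzer--Remmel constructions through the modification set up in the proof of Theorem \ref{thm:pirb} (the reduction from $\Rec$ works verbatim for $a.a.r.b.$ because an almost-always bounding function still computes $W_e$, and the $T_{\phi(a)}$ trees leave the cone above the node $(0)$ free to carry the path-counting gadgets, which in \cite{CR1} are always bounded trees). Your explicit plan for the $D^0_4$ case of (d) -- a.a.b.\ controlled by one $\Sigma^0_4$ gadget, infinitude by a bounded $\Pi^0_4$ gadget on a disjoint cone -- is consistent with the paper's appeal to the nearly bounded results of \cite{CR1}. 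Two of your upper-bound characterizations are wrong as stated, although the final levels survive. First, for an r.b.\ tree, ``$[T_e]$ is finite'' is not $\Sigma^0_2$: the natural form is $\exists k\,\forall m$ (at most $k$ extendible nodes of length $m$), and since ``extendible'' is only $\Pi^0_1$ this is $\Sigma^0_2$ relative to a $\Pi^0_1$ oracle, i.e.\ $\Sigma^0_3$ absolutely; the conjunction with ``r.b.'' is still $\Sigma^0_3$ and the difference is still $D^0_3$, so the bound stands, but the intermediate claim should be corrected. Second, ``$[T_e]$ is infinite'' is not ``$Ext(T_e)$ contains arbitrarily long nodes'' -- for a finitely branching tree that is just nonemptiness, which for bounded trees is only $\Pi^0_3$ (Theorem \ref{thm:pire}); the correct clause is ``for every $k$ there exist $k$ pairwise incomparable extendible nodes,'' which with $Ext(T_e)$ in $\Pi^0_2$ again yields the stated $\Pi^0_4$ bound. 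Your global preliminary criterion for finiteness (``some level beyond which no extendible node has two incomparable extendible extensions'') is also only valid for finitely branching trees and must not be invoked in part (e). With these repairs the proposal matches the paper's argument.
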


\begin{theorem} \label{thm:picu} 
$\{e: [T_e]\ \text{is uncountable}\}$ is $\Sigma^1_1$-complete, 
$\{e: [T_e]\ \text{is countable}\}$
is $\Pi^1_1$-complete, and 
$\{e: [T_e]\ \text{is countably
infinite}\}$ is $\Pi^1_1$-complete. 
The same result holds 
for $r.b.$, $a.a.r.b.$, bounded, $a.a.b.$  
primitive recursive trees. 
\end{theorem}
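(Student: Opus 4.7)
The plan is to prove each of the three complexity assertions by pairing a structural characterization of closed subsets of $\omega^\omega$ for the upper bound with an explicit primitive recursive reduction for the lower bound, and then observe that the same constructions yield the ``bounded / r.b.\ / a.a.b.\ / a.a.r.b.'' variants.

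First, for the upper bounds I use the classical perfect-subtree characterization: $[T]$ is uncountable iff $T$ contains a perfect subtree $S$ (i.e.\ a nonempty subtree in which every node has two incompatible extensions within $S$). Existentially quantifying over $S$ gives a $\Sigma^1_1$ description of $\{e:[T_e]\text{ uncountable}\}$, and complementation yields the $\Pi^1_1$ upper bound for ``countable''. For ``countably infinite'' I would write the property as ``$T$ has no perfect subtree AND for every $n$ there are $n$ pairwise incompatible extendible strings in $T$'', then use that a countable closed subset of $\omega^\omega$ has a Cantor-Bendixson rank that is itself $\Pi^1_1$-recognizable to effectivize the infinitude clause inside $\Pi^1_1$ (exploiting closure of $\Pi^1_1$ under number quantification).

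For $\Sigma^1_1$-hardness of ``uncountable'' I reduce from $\{e:[T_e]\ne\emptyset\}$, which is $\Sigma^1_1$-complete by Theorem~\ref{thm:pire}(g). Given primitive recursive $S$, construct a primitive recursive tree $U$ by placing $\langle a_0,b_0,a_1,b_1,\dots,a_{k-1},b_{k-1}\rangle$ in $U$ whenever $\langle a_0,\dots,a_{k-1}\rangle\in S$ and each $b_i\in\{0,1\}$; then $[U]$ is in bijection with $[S]\times 2^\omega$, so $[U]$ is uncountable exactly when $[S]\ne\emptyset$. Complementation then yields $\Pi^1_1$-hardness of ``countable''. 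For $\Pi^1_1$-hardness of ``countably infinite'' I reduce from $\{e:[T_e]=\emptyset\}$: augment the construction above with a disjoint ``skeleton'' subtree attached at a new initial symbol, where the skeleton $V$ is a fixed primitive recursive binary tree with exactly countably infinitely many paths (for instance, paths $(0^n,1,0^\omega)$ for $n\ge 0$ together with $0^\omega$). Then the combined tree has path set $\bigl([S]\times 2^\omega\bigr)\sqcup[V]$, which is countably infinite when $[S]=\emptyset$ and uncountable when $[S]\ne\emptyset$.

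To lift these hardness results to the r.b., a.a.r.b., bounded, and a.a.b.\ variants, I observe that both the interleaving tree $U$ and the skeleton $V$ live inside $2^{<\omega}$ whenever $S$ does, so the resulting reduction tree inherits recursive boundedness. To secure the required recursively bounded $\Sigma^1_1$-complete starting set, I appeal to the standard normal form theorem: any $\Sigma^1_1$ subset of $\omega$ can be written as $\{e:\exists x\in 2^\omega\,(e,x)\in[R]\}$ for some primitive recursive $R\subseteq\omega\times 2^{<\omega}$, and the $e$-slices $R_e\subseteq 2^{<\omega}$ are uniformly primitive recursive and recursively bounded. The hardest piece I anticipate is the $\Pi^1_1$ upper bound for ``countably infinite'', since the most natural formulation sits in $\Pi^1_1\wedge\Sigma^1_1$; this is where the Cantor-Bendixson argument must be invoked carefully to collapse the expression into $\Pi^1_1$.
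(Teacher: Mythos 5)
Your treatment of the unrestricted case is essentially sound: the perfect-subtree characterization gives the $\Sigma^1_1$ and $\Pi^1_1$ upper bounds, and the interleaving of a tree $S$ with free binary bits, reducing from $\{e:[T_e]\neq\emptyset\}$ (Theorem~\ref{thm:pire}(g)) and its complement, together with the countable skeleton $V$, gives all three hardness claims for arbitrary primitive recursive trees. (The paper itself does not reprove this theorem; it imports it from Cenzer--Remmel \cite{CR1,CR2}, with the a.a.\ variants obtained by the modifications indicated after Theorem~\ref{thm:pirb}.) One caveat: your $\Pi^1_1$ upper bound for ``countably infinite'' is, as you write it, a conjunction of a $\Pi^1_1$ clause with a $\Sigma^1_1$ clause (``extendible'' is $\Sigma^1_1$), and the Cantor--Bendixson remark does not by itself repair this --- note in particular that in Baire space an infinite $[T_e]$ can be discrete, so $[T_e]'=\emptyset$ does not certify finiteness. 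The standard collapse is different: if $[T_e]$ is countable then by the effective perfect set theorem every path is hyperarithmetic, so ``extendible'' may be replaced by ``extendible by a $\Delta^1_1$ path,'' and existential quantification over $\Delta^1_1$ reals (and over numbers) preserves $\Pi^1_1$.

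The genuine gap is the lifting to the r.b., bounded, a.a.r.b.\ and a.a.b.\ cases. The normal form you invoke --- every $\Sigma^1_1$ subset of $\omega$ as $\{e:\exists x\in 2^\omega\,(x\in[R_e])\}$ with uniformly primitive recursive, recursively bounded slices $R_e\subseteq 2^{<\omega}$ --- is false: for a binary (or any recursively bounded) tree, $\exists x\in 2^\omega\,(x\in[R_e])$ is, by K\"onig's Lemma, equivalent to ``$R_e$ is infinite,'' a $\Pi^0_2$ condition, so your normal form would make every $\Sigma^1_1$ set arithmetic. Hence your reductions, which only stay inside $2^{<\omega}$ when the input tree $S$ does, never start from a $\Sigma^1_1$-hard (respectively $\Pi^1_1$-hard) emptiness problem in the bounded setting, and the completeness claims for the four restricted classes are left unproven. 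The repair, which is essentially the Cenzer--Remmel construction, is a block coding of an arbitrary recursive $S\subseteq\omega^{<\omega}$ into a binary tree: a path codes a prospective branch $(a_0,a_1,\dots)$ of $S$ by blocks $0^{a_i}1\,b_i$ with a free bit $b_i$ inserted after each block, where placing the $1$ closing the $i$-th block is permitted only if $(a_0,\dots,a_i)\in S$, and in addition every ``stalled'' path ending in $0^\omega$ is admitted. The resulting tree is primitive recursive and binary (hence r.b., bounded, a.a.r.b.\ and a.a.b.); if $[S]\neq\emptyset$ the free bits along one branch yield a perfect subset, while if $[S]=\emptyset$ the paths are exactly the countably infinitely many stalled ones. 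This single reduction from $\{e:[T_e]\neq\emptyset\}$ and $\{e:[T_e]=\emptyset\}$ yields all the hardness assertions for all four restricted classes at once, and your skeleton trick becomes unnecessary there.
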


Next we give some index set results concerning the number of recursive elements
in $[T]$ where $T$ is a primitive recursive tree. Here we say that $[T]$ is
{\em recursively empty} if $[T]$ has no recursive elements and is  {\em
recursively nonempty} if $[T]$ has at least one recursive element. Similarly,
we say that $[T]$ has  {\em recursive cardinality equal to $c$} if $[T]$ has
exactly $c$ recursive members. 

\begin{theorem} \label{thm:prbce}
\begin{compactenum}
\item[(a)] $\{e: T_e\ \text{is r.b. and $[T_e]$ is 
recursively nonempty}\}$ 
is $\Sigma^0_3$-complete, 
$\{e: T_e\ \text{is r.b. and  $[T_e]$ is recursively empty}\}$ 
is $D^0_3$-complete and 
$\{e: T_e\ \text{is r.b. and  $[T_e]$ is  nonempty and recursively empty}\}$ 
is $D^0_3$-complete.
\item[(b)] $\{e: T_e\ \text{is a.a.r.b. and  $[T_e]$ is recursively
nonempty}\}$ is $\Sigma^0_3$-complete, \\
$\{e: T_e\ \text{is}$ $\text{ a.a.r.b. and  $[T_e]$ is recursively empty}\}$ 
is $D^0_3$-complete and \\
$\{e: T_e\ $ $\text{is a.a.r.b. and  $[T_e]$ is nonempty and  recursively
empty}\}$ is \\
$D^0_3$-complete.
\item[(c)] $\{e: T_e\ \text{is bounded and  
$[T_e]$ is recursively nonempty}\}$ 
is $D^0_3$-complete, \\
$\{e: T_e\ \text{is bounded and  $[T_e]$ is recursively empty}\}$ 
is $\Pi^0_3$-complete, and \\
$\{e: T_e \ \text{is bounded and  
$[T_e]$ is nonempty and recursively empty}\}$ 
is \\
$\Pi^0_3$-com\-ple\-te.
\item[(d)] $\{e: T_e\ \text{is $a.a.$bounded and  $[T_e]$ is recursively
nonempty}\}$,  \\
$\{e: T_e\ \text{is $a.a.$bounded}$
$\text{and  $[T_e]$ is recursively empty}\}$, and \\
$\{e: T_e\ \text{is $a.a.$bounded and}$ $\text{  $[T_e]$ is  nonempty and
recursively empty}\}$ are all $\Sigma^0_4$-complete.
\item[(e)] $\{e: [T_e]\ \text{is recursively nonempty}\}$ 
is $\Sigma^0_3$-complete, \\
$\{e: [T_e]\ \text{is recursively empty}\}$ 
is $\Pi^0_3$-complete and \\
$\{e: [T_e]\ \text{is nonempty and recursively empty}\}$ 
is $\Sigma_1^1$-complete.
\end{compactenum}
\end{theorem}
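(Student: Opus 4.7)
The plan is to follow closely the template used in the proof of Theorem \ref{thm:pirb}: first dispose of the upper bounds by routine quantifier counting, then cite the lower bounds for the $r.b.$ and bounded cases from \cite{CR1}, and finally obtain the $a.a.r.b.$ and $a.a.b.$ completeness results by the same modification trick already deployed in Theorem \ref{thm:pirb}.

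For the upper bounds, I would first observe that ``$[T_e]$ is recursively nonempty'' is $\Sigma^0_3$, since it asserts $\exists i\,\forall n\,(\phi_i(n){\downarrow}\wedge \phi_i\res n \in T_e)$, and its negation, ``$[T_e]$ is recursively empty'', is $\Pi^0_3$. Combined with the classifications from Theorem \ref{thm:pirb} of ``$T_e$ is $r.b.$'' ($\Sigma^0_3$), ``$T_e$ is $a.a.r.b.$'' ($\Sigma^0_3$), ``$T_e$ is bounded'' ($\Pi^0_3$) and ``$T_e$ is $a.a.b.$'' ($\Sigma^0_4$), each upper bound in (a)--(d) follows immediately; for the ``nonempty and recursively empty'' clauses one additionally uses that for an $r.b.$ or $a.a.r.b.$ tree $T_e$ the nonemptiness of $[T_e]$ is $\Pi^0_2$ (finite branching together with K\"onig's lemma), while for a bounded tree it is $\Pi^0_3$. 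Part (e) uses only these quantifier counts together with the $\Sigma^1_1$-completeness of nonemptiness from Theorem \ref{thm:pire}(g).

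For the lower bounds in (a), (c), and (e), I would quote the reductions from \cite{CR1}. The principal constructions are: $\Rec$ reduces to ``$r.b.$ and recursively nonempty'' by a tree whose unique infinite path in $T_{f(e)}$ is recursive iff $W_e$ is recursive; a $D^0_3$-complete set reduces to ``$r.b.$ and $[T_e]$ nonempty and recursively empty'' by combining two $\Sigma^0_3$ conditions over a Kleene-style $\Pi^0_1$ class with no recursive member; $\cof$ reduces to ``bounded and recursively empty'' using the branching-growth construction of Theorem \ref{thm:pirb}(d); and the $\Sigma^1_1$-completeness in (e) is obtained by inserting Kleene's nonempty $\Pi^0_1$ class with no recursive member above a root whose entry into the tree is conditioned on membership in an arbitrary $\Sigma^1_1$ predicate. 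Parts (b) and (d) are then handled by the modification already described in the proof of Theorem \ref{thm:pirb}: the tree $T_{f(e)}$ witnessing $r.b.$-completeness automatically satisfies ``$T_{f(e)}$ is $r.b.$ iff $T_{f(e)}$ is $a.a.r.b.$ iff $W_e$ is recursive'', so all three reductions for (b) carry over verbatim, while for (d) one attaches above the distinguished node $(0)$ of the tree $T_{\phi(a)}$ built in the proof of Theorem \ref{thm:pirb}(e) the appropriate cardinality-coding subtrees from the corresponding bounded constructions in \cite{CR1}, the complexity being pushed to $\Sigma^0_4$ by the outer $a.a.b.$ test.

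The main obstacle I anticipate is not conceptual but bookkeeping: I must check that the subtrees spliced above $(0)$ never destroy almost-always boundedness, and, more delicately, that the property ``$[T_e]$ is recursively empty'' survives the splicing, i.e.\ that every infinite path forced into the attached subtree is non-recursive except in exactly the cases the reduction intends. Both checks are routine extensions of the verifications already carried out in \cite{CR1} and are completely parallel to the modifications spelled out in Theorem \ref{thm:pirb}, so I expect no new difficulty beyond some careful case analysis on which branch of the top-level tree a given recursive function could be following.
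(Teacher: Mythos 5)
Your proposal follows essentially the same route as the paper, which gives no separate proof of this theorem: the $r.b.$, bounded, and unrestricted cases are quoted from \cite{CR1}, and the $a.a.r.b.$ and $a.a.b.$ cases are obtained exactly by the modifications described in the proof of Theorem \ref{thm:pirb} (the $\Rec$-reduction tree is $a.a.r.b.$ iff it is $r.b.$, and for the $\Sigma^0_4$ case one splices the bounded trees from \cite{CR1} above the node $(0)$ of $T_{\phi(a)}$), which is precisely your plan. One small caveat: for an $a.a.r.b.$ tree, K\"onig's lemma does not make nonemptiness of $[T_e]$ a $\Pi^0_2$ condition, since the finitely many exceptional nodes may branch infinitely; however, ``$T_e$ is $a.a.r.b.$ and $[T_e] \neq \emptyset$'' is still $\Sigma^0_3$ (Theorem \ref{thm:pire}(e)), so your stated $D^0_3$ upper bound in part (b) is unaffected.
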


\begin{theorem} \label{thm:pbcc} Let $c$ be a positive integer.
\begin{compactenum}
\item[(a)] $\{e: T_e\ \text{is r.b. and $[T_e]$ has recursive 
  cardinality} > c\}$ is $\Sigma^0_3$-complete, \\
$\{e: T_e\ \text{is r.b. and  $[T_e]$ has recursive cardinality}
  \leq c\}$ is $D^0_3$-complete, and \\
$\{e: T_e\ \text{is r.b. and  $[T_e]$ has recursive cardinality}
  =c \}$ is $D^0_3$-complete. 
\item[(b)] $\{e: T_e\ \text{is a.a.r.b. and  $[T_e]$ has recursive 
  cardinality} > c\}$ is $\Sigma^0_3$-complete, \\
$\{e: T_e\ \text{is a.a.r.b. and  $[T_e]$ has recursive cardinality}
  \leq c\}$ is $D^0_3$-complete, and 
$\{e: T_e\ \text{is a.a.r.b. and  $[T_e]$ has recursive cardinality}
  =c \}$ is \\
  $D^0_3$-complete. 
\item[(c)] $\{e: T_e\ \text{is bounded and  $[T_e]$ has recursive 
  cardinality} > c\}$ is $\Pi^0_3$-complete, \\
$\{e: T_e\ \text{is bounded and  $[T_e]$ has recursive cardinality}
  \leq c\}$ is $D^0_3$-complete, and 
$\{e: T_e\ \text{is bounded and  $[T_e]$ has recursive cardinality}
  =c \}$ is  \\ $D^0_3$-complete. 
\item[(d)] $\{e: T_e\ \text{is $a.a.$bounded and  $[T_e]$ has recursive 
  cardinality} > c\}$, \\
$\{e: T_e\ \text{is $a.a.$bounded  and  $[T_e]$ has recursive cardinality}
  \leq c\}$, and \\
$\{e: T_e\ \text{is $a.a.$bounded and  $[T_e]$ has recursive cardinality}
  =c \}$ are all $\Sigma^0_4$-complete. 
\item[(e)] $\{e: [T_e]\ \text{has recursive cardinality} >c\}$ is 
$\Sigma^0_3$-complete, \\
$\{e:[T_e]\ \text{has recursive cardinality} \leq c\}$ is
$\Pi^0_3$-complete, and \\
 $\{e: [T_e]\ \text{has recursive cardinality}\ = c\}$ 
is $D^0_3$-complete. 
\end{compactenum}
\end{theorem}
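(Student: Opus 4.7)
The theorem is structurally analogous to Theorem \ref{thm:pirc} (which concerns total cardinality of $[T]$) and to Theorem \ref{thm:prbce} (which concerns recursive nonemptiness); accordingly, I would combine the constructions from those results in \cite{CR1} with the a.a.r.b./a.a.b. adaptation described in the proof of Theorem \ref{thm:pirb}. The upper bounds come from quantifier counting. The predicate ``$\phi_e$ is total and $\phi_e \in [T_{e'}]$'' is $\Pi^0_2$ uniformly in $e,e'$, so ``$[T_{e'}]$ has recursive cardinality $>c$'' unpacks as ``$\exists e_0<e_1<\cdots<e_c$ such that each $\phi_{e_i}$ is total, each lies on $[T_{e'}]$, and the $\phi_{e_i}$ are pairwise pointwise distinct'', giving $\Sigma^0_3$; its negation is $\Pi^0_3$, and ``$=c$'' is their conjunction, hence $D^0_3$. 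Combining with the boundedness predicates from Theorem \ref{thm:pirb} ($\Sigma^0_3$ for r.b.\ and a.a.r.b., $\Pi^0_3$ for bounded, $\Sigma^0_4$ for a.a.b.) yields the claimed upper bounds in parts (a)--(e).

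For the completeness lower bounds, I would use a uniform ``baseline plus gadget'' construction. Fix $c$ explicit recursive paths by attaching $c$ constant branches $1^\omega,\ldots,c^\omega$ as disjoint full subtrees above distinct immediate successors of the root. On a further fresh immediate successor, paste the tree $T_{f(e)}$ supplied by the relevant completeness proof in Theorem \ref{thm:pirb}: for the $\Sigma^0_3$-hardness results in (a) and (b) this is the tree from the proof of Theorem \ref{thm:pirb}(a) (encoding $\Rec$), for which $[T_{f(e)}]$ has at most one element and $T_{f(e)}$ is r.b.\ (equivalently, a.a.r.b.) iff $W_e$ is recursive; for the bounded/a.a.b.\ arguments in (c) and (d) the gadget is $U_e$ from the proof of Theorem \ref{thm:pirb}(d) (encoding $\cof$). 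Since the gadget's unique candidate path, when it exists, is the leftmost branch of the tree and is recursive exactly when $T_{f(e)}$ is r.b., the combined tree has recursive cardinality $>c$ iff the underlying $\Sigma^0_3$ (or $\Pi^0_3$) event fires, which yields the reductions needed for the $\Sigma^0_3$-complete (respectively $\Pi^0_3$-complete) clauses.

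The $D^0_3$-complete clauses in (a)--(c) and part (e) are obtained by fusing two such gadgets above distinct immediate successors: one gadget controlled by a $\Sigma^0_3$ predicate and the other by a $\Pi^0_3$ predicate, tuned so that exactly one combination of events pushes the recursive cardinality from $c$ to $c+1$. This realizes the target index set as the intersection of a $\Sigma^0_3$ and a $\Pi^0_3$ set with matching completeness properties, hence as a $D^0_3$-complete set. For the $\Sigma^0_4$-complete subparts in (d), I would instead paste the $T_{\phi(a)}$ construction from the proof of Theorem \ref{thm:pirb}(e) under each of the $c$ fresh branches, exactly as the end of that proof indicates by leaving the node $(0)$ available for further attachment. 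Part (e) follows from the same constructions with the boundedness conjunct dropped, and its upper bounds are the bare quantifier counts listed above.

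The main obstacle is ensuring that the recursive-path count is controlled sharply by the gadget, because a loosely specified gadget can accidentally contribute extra recursive infinite paths and destroy the reduction. I would handle this by choosing gadgets that have at most one infinite path in $[T_{f(e)}]$ (as is already the case for the $T_{f(e)}$ in Theorem \ref{thm:pirb}(a) and the $U_e$ in Theorem \ref{thm:pirb}(d)) and by keeping the $c$ baseline branches on pairwise disjoint subtrees of the root, so that the only ``variable'' recursive path is the one produced by the gadget. The a.a.r.b.\ and a.a.b.\ variants in (b) and (d) then follow from (a) and (c) with no change to the construction, exactly as in the remark concluding the proof of Theorem \ref{thm:pirb}, since the constructed trees are r.b.\ (resp.\ bounded) precisely on the same inputs on which they are a.a.r.b.\ (resp.\ a.a.b.).
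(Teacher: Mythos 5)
Your overall route matches what the paper actually does for this theorem: the paper gives no proof of Theorem \ref{thm:pbcc}, but cites \cite{CR1} and the remark in the proof of Theorem \ref{thm:pirb} that the trees built there are r.b.\ (resp.\ bounded) on exactly the same inputs on which they are a.a.r.b.\ (resp.\ a.a.b.), so the a.a.\ clauses come for free from the same constructions. Your quantifier counting for the upper bounds and your ``$c$ recursive baseline branches plus a gadget'' reductions are the natural reconstruction of that route, and they do deliver parts (a), (b), (d) and (e), modulo one small repair: in the $\Rec$-gadget $T_{f(e)}$ the unique path exists only when $W_e$ is infinite, so for finite (hence recursive) $W_e$ your combined tree has only $c$ recursive paths and the reduction of $\Rec$ breaks; you must first pass to an r.e.\ set that is always infinite and recursive iff $W_e$ is (a standard padding move).

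The genuine gap is at part (c), which your sentence ``yields the claimed upper bounds in parts (a)--(e)'' asserts away. Your own count gives: bounded ($\Pi^0_3$) conjoined with ``recursive cardinality $>c$'' ($\Sigma^0_3$) is only $D^0_3$, not the claimed $\Pi^0_3$; and bounded conjoined with ``recursive cardinality $\leq c$'' is $\Pi^0_3$, which is incompatible with the claimed $D^0_3$-completeness (a $\Pi^0_3$ set cannot be $D^0_3$-complete, else every $\Sigma^0_3$ set would be $\Pi^0_3$). So your plan cannot prove (c) as stated; what it proves is the transposed classification ($>c$ being $D^0_3$-complete and $\leq c$ being $\Pi^0_3$-complete), which is also what the $c=0$ case, Theorem \ref{thm:prbce}(c), suggests the statement should say --- but you neither prove the stated version nor flag the discrepancy. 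Relatedly, your hardness paragraph conflates the two gadgets: the tree $U_e$ from the proof of Theorem \ref{thm:pirb}(d) has no infinite paths at all (its nodes have length at most two), so it has no ``unique candidate path'' and can only control boundedness via $\cof$; the recursive-path count must be driven separately by the always-bounded $\Rec$-gadget, exactly as in your ``fusing two gadgets'' step, and a single $\cof$-gadget as described gives no reduction at all for the $>c$ clause of (c).
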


\begin{theorem} $\{e: [T_e]\ \text{has finite recursive
    cardinality}\}$ is $\Sigma^0_4$-complete and \\
$\{e: [T_e]\ \text{has infinite recursive cardinality}\}$ is
$\Pi^0_4$-complete.  The same result is true for $r.b.$, $a.a.r.b.$, bounded,
and $a.a.b.$  primitive recursive trees. 
\end{theorem}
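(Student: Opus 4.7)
The plan is to obtain the upper bounds by a direct quantifier count and then establish hardness by reducing from a standard complete set via an explicit tree construction.

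For the upper bound, note that ``$\phi_i$ is a total recursive path through $T_e$'' is $\Pi^0_2$, since $T_e$ is primitive recursive: it reads $\forall n\,\exists s\,[\phi_{i,s}(n)\downarrow \wedge (\phi_i(0), \ldots, \phi_i(n)) \in T_e]$. Hence ``there exist at least $n$ distinct total recursive paths through $T_e$'' is $\Sigma^0_3$ (existentially quantify the indices, conjoin the $\Pi^0_2$ path conditions with $\Sigma^0_1$ pairwise inequalities, which are sufficient once totality is granted). Therefore ``$[T_e]$ has infinite recursive cardinality'' is $\forall n\,\Sigma^0_3 = \Pi^0_4$, and its complement is $\Sigma^0_4$. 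The class-membership predicates from Theorem \ref{thm:pirb} are bounded in complexity by $\Sigma^0_4$, so the conjunctions remain within the claimed classes.

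For $\Pi^0_4$-hardness of ``infinite recursive cardinality'', I would reduce from the $\Pi^0_4$-complete set $B = \{e : \{n : \phi_{f(e,n)}\text{ is total}\}\text{ is infinite}\}$, for a suitable primitive recursive $f$ (the $\Pi^0_4$-completeness of such a $B$ is a standard consequence of the $\Pi^0_2$-completeness of Tot, combined with iterated-jump coding). Given $e$, construct a primitive recursive tree $T_{\psi(e)}$ with a main spine $0^n$ (so $0^\omega$ is a trivial recursive path), and for each $n$ attach, at the node $0^n1$, a ``slot'' whose strings are the prefixes of $0^{s_0}\,1\,0^{s_1}\,1\,\cdots$, where $s_j$ is the least stage at which a $0$-valued modification $\phi_{g(e,n)}(j)$ of $\phi_{f(e,n)}(j)$ converges. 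Slot membership is primitive recursive, continuations are unique by the least-stage convention, and the slot supports an infinite (necessarily recursive) branch iff $\phi_{f(e,n)}$ is total. Hence every element of $[T_{\psi(e)}]$ is recursive and the cardinality equals $1 + |\{n : \phi_{f(e,n)}\text{ total}\}|$, so $[T_{\psi(e)}]$ has infinite recursive cardinality iff $e \in B$. The dual reduction from $B^c$, which is $\Sigma^0_4$-complete, gives the $\Sigma^0_4$-hardness for ``finite recursive cardinality''.

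Since $T_{\psi(e)}$ is $\{0,1\}$-valued it is recursively bounded, hence simultaneously $a.a.r.b.$, bounded, and $a.a.b.$, so the same construction witnesses hardness in all four restricted classes. The main obstacle will be ensuring simultaneously that (i) the tree is primitive recursive, (ii) each slot has a unique extension at each level so that every infinite branch is recursive, and (iii) the alphabet remains bounded; the least-stage convention resolves (ii), and the $0$-valued modification $\phi_{g(e,n)}$ (converging on exactly the same inputs as $\phi_{f(e,n)}$ but outputting $0$) resolves (iii) without altering totality. With these ingredients in place, the reduction inherits the flavor of the standard index set completeness arguments from Theorems \ref{thm:pirb} and \ref{thm:pici}, now adapted to the recursive-cardinality setting.
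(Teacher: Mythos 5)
Your upper-bound count is fine (and note the paper itself gives no proof of this theorem: it is quoted from Cenzer--Remmel \cite{CR1}, with the $a.a.$ cases obtained by the modification technique sketched in the proof of Theorem \ref{thm:pirb}). The genuine gap is in the hardness half, exactly where your construction has to do its work: the ``slot''. For the slot to be a primitive recursive tree, membership of a string whose last block is a trailing run $0^r$ (while ``waiting'' for $\phi_{g(e,n)}(j)$ to converge) must be decided by a bounded computation, and the only available test is ``$\phi_{g(e,n)}(j)$ has not converged within about $r$ steps''. With that test, if $\phi_{g(e,n)}(j)$ diverges then $0^r$ is admitted for every $r$, so the slot contains the infinite branch $0^{s_0}1\cdots 0^{s_{j-1}}1\,0^\omega$, which is a perfectly recursive element of $[T_{\psi(e)}]$. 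Thus every slot contributes a recursive path whether or not $\phi_{f(e,n)}$ is total, the recursive cardinality of $[T_{\psi(e)}]$ is always infinite, and the reduction collapses. Under the alternative reading, in which the slot literally consists of the prefixes of the generated word (hence is finite when $\phi_{f(e,n)}$ is partial), the equivalence you want is true but the tree is no longer primitive recursive: deciding whether a trailing $0$-block may be extended then amounts to deciding whether $\phi_{g(e,n)}(j)$ converges at some stage $\geq r$, a $\Sigma^0_1$-complete question. The least-stage convention fixes uniqueness, but not this; in a recursively bounded tree one cannot wait along a single designated branch for a $\Sigma^0_1$ event without creating a recursive infinite branch in case the event never happens.

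This is precisely why the completeness arguments of \cite{CR1}, and the template in the proof of Theorem \ref{thm:pirb}, do not try to kill branches by totality. The standard route is: write the given $\Sigma^0_4$ set $S$ as $a\in S \iff (\exists k) R(a,k)$ with $R$ in $\Pi^0_3$ and monotone in $k$; use a $\Pi^0_3$-complete index set from Theorem \ref{thm:prbce} (e.g.\ ``bounded and $[T]$ recursively empty'') to obtain, uniformly, component trees $U_{h(a,k)}$ that always have at most one (in particular finitely many) recursive path and have none iff $R(a,k)$ holds; graft $U_{h(a,k)}$ above the node $(k)$. Then the recursive cardinality of the resulting tree is finite iff $a\in S$, and the components can be arranged to keep the tree in the restricted classes. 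Two smaller points: your source set $B=\{e:\{n:\phi_{f(e,n)}\ \text{total}\}\ \text{infinite}\}$ can indeed be made $\Pi^0_4$-complete (e.g.\ $\{e:\omega\setminus W_e^{\emptyset'}\ \text{infinite}\}$ is $\Pi^0_4$-complete and has the form ``$\exists^\infty$ over a $\Pi^0_2$ matrix'', which transfers to Tot-instances), but your one-line appeal to ``iterated-jump coding'' is not a proof and should be spelled out; and for the $a.a.b.$ clause, ``$T_e$ is $a.a.b.$'' is itself $\Sigma^0_4$-complete, so your remark that the class predicates are ``bounded by $\Sigma^0_4$'' does not by itself place the conjunction with the $\Pi^0_4$ predicate in $\Pi^0_4$ (compare Theorem \ref{thm:pici}, where the analogous $a.a.b.$ case is stated at the level $D^0_4$).
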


Given a primitive recursive tree $[T]$, we say that $[T]$ is {\em perfect} if
it has no isolated elements.  Cenzer and Remmel also proved a number of index
set results for primitive recursive trees $T$ such that $[T]$ is perfect. Here
is one example.

\begin{theorem} \label{thm:pips} \begin{compactenum} 
\item[(a)]
$\{e: T_e\ \text{is r.b. and  $[T_e]$ is perfect}\}$ and \\
$\{e: T_e\ \text{is r.b.and  $[T_e]$ is  nonempty and perfect}\}$
are $D^0_3$-complete.
\item[(b)]
$\{e: T_e\ \text{is a.a.r.b. and  $[T_e]$ is  perfect}\}$ and \\
$\{e: T_e\ \text{is a.a.r.b. and  $[T_e]$ is}$  $\text{nonempty and 
perfect}\}$ are $D^0_3$-complete.
\item[(c)]
$\{e: T_e \ \text{is bounded and  $[T_e]$ is perfect}\}$ and \\
$\{e: T_e\ \text{is bounded and  $[T_e]$ is}$ {\it non\-empty}
$\text{and perfect}\}$
are $\Pi^0_4$-complete.
\item[(d)]
$\{e: T_e \ \text{is $a.a.$bounded and  $[T_e]$ is perfect}\}$ and \\
$\{e: T_e\ \text{is $a.a.$bounded and  $[T_e]$ is nonempty}$ $\text{ and
perfect}\}$ are $D^0_4$-complete.
\item[(e)]
$\{e: [T_e]\ \text{is perfect}\}$ and 
$\{e: [T_e]\ \text{is nonempty and perfect}\}$
are $\Sigma^1_1$-complete.
\end{compactenum} 
\end{theorem}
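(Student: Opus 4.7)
The plan is to establish both upper bounds and matching completeness for each of the five parts. The upper bounds follow from combining the complexity of $Ext(T)$ given in Theorem \ref{thm:Ext} with the standard observation that $[T]$ is perfect if and only if every $\sigma \in Ext(T)$ has two incomparable extensions in $Ext(T)$, that is
\[
\text{Perf}(T) \equiv (\forall \sigma)\bigl[\sigma \in Ext(T) \Rightarrow (\exists \tau_1,\tau_2 \in Ext(T))(\sigma \prec \tau_i \text{ and } \tau_1 | \tau_2)\bigr].
\]
For highly recursive trees, $Ext(T)$ is $\Pi^0_1$ by Theorem \ref{thm:Ext}(c), so $\text{Perf}(T)$ is $\Pi^0_3$; intersecting with the $\Sigma^0_3$ property ``$T$ is r.b.'' from Theorem \ref{thm:pirb}(a) yields $D^0_3$ for part (a). Part (b) is analogous, using Theorem \ref{thm:pirb}(b) together with the observation that in an a.a.r.b. tree one may restrict attention to the cofinite subtree above any bounding witness, where $Ext$ is still $\Pi^0_1$. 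For bounded trees, $Ext(T)$ is $\Pi^0_2$ by Theorem \ref{thm:Ext}(b), so $\text{Perf}(T)$ is $\Pi^0_4$; combined with the $\Pi^0_3$ property of being bounded this gives $\Pi^0_4$ in (c), and combined with the $\Sigma^0_4$ property of being a.a.b.\ this gives $D^0_4$ in (d). For (e), the $\Sigma^1_1$ upper bound comes from expressing perfection in terms of a $\Sigma^1_1$ choice of a splitting subtree $T' \subseteq T$ with $\emptyset \in T'$ and such that every $\sigma \in T'$ has two incomparable extensions in $T'$.

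For the completeness results I plan to modify the reductions already used for Theorems \ref{thm:pire} and \ref{thm:pici}. The common strategy is: start from the construction that forces the prescribed boundedness property (for example, the $T_{f(e)}$ built from $W_e$ in Theorem \ref{thm:pirb}(a), which is r.b.\ iff $W_e$ is recursive), and graft a copy of the full binary tree $2^{<\omega}$ above each node that witnesses the ``good'' arithmetical condition. In (a) and (b) I would reduce from a $D^0_3$-complete set of the form $\{e : W_e \in \Rec \text{ and } W_e \text{ is infinite}\}$: each fresh enumeration of an element into $W_e$ spawns a new binary-splitting level above the current branch, so $[T_{f(e)}]$ is perfect iff the enumeration is infinite, while r.b.-ness (respectively a.a.r.b.-ness) of $T_{f(e)}$ still tracks recursiveness of $W_e$ as before. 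In (c) and (d) the reductions come from $\Pi^0_4$- and $D^0_4$-complete sets; the extra universal quantifier is absorbed into the requirement that splittings continue at every level, using a $\Pi^0_3$-complete parameter sequence in place of the single $\Sigma^0_3$ parameter. For (e) the $\Sigma^1_1$-hardness is obtained by the standard reduction from the $\Sigma^1_1$-complete set of indices of ill-founded recursive trees, composed with the perfecting transformation that grafts $2^{<\omega}$ at each extendable node.

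The main obstacle will be the bookkeeping in the completeness constructions, where the primitive recursive tree must \emph{simultaneously} satisfy two arithmetical conditions: the boundedness property and perfection must each track a separate witness, and the two constructions must be combined without interference. In particular, the grafted binary-splitting subtrees must themselves be primitive recursive, must be bounded by the same bounding function that witnesses r.b.\ (or a.a.r.b., bounded, a.a.b.) for the trunk, and must terminate exactly when the secondary condition fails, so that removal of the grafted parts does not reintroduce isolated paths along the trunk. Once these layered constructions are verified, the upper bound analysis above pins down the complexity exactly, giving the stated completeness in each part. $\hfill\Box$
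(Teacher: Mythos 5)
Your plan for (a)--(d) is broadly in the spirit of what is actually needed, but note first that the paper does not reprove this theorem at all: it quotes \cite{CR1} for the r.b., bounded and unrestricted cases and obtains the $a.a.$ versions by the grafting-above-$(0)$ modification described in the proof of Theorem \ref{thm:pirb}. Measured against that, your proposal contains two genuine errors. The more serious one is part (e): the condition ``there is a subtree $T'\subseteq T$ with $\emptyset\in T'$ such that every $\sigma\in T'$ has two incomparable extensions in $T'$'' does not characterize perfection; it characterizes ``$[T]$ contains a nonempty perfect closed subset,'' i.e.\ uncountability of $[T]$ (the predicate of Theorem \ref{thm:picu}). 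The empty class is perfect but admits no such $T'$, while a class consisting of a Cantor set together with one isolated path admits such a $T'$ but is not perfect. Worse, the honest formula $\forall\sigma\,[\sigma\in Ext(T_e)\rightarrow\exists\ \mbox{incomparable}\ \tau_1,\tau_2\in Ext(T_e)\ \mbox{above}\ \sigma]$ has a $\Pi^1_1$ hypothesis and a $\Sigma^1_1$ conclusion, so it is not visibly $\Sigma^1_1$; the $\Sigma^1_1$ upper bound needs an extra idea, for instance using Kreisel's theorem (isolated members of $\Pi^0_1$ classes are hyperarithmetic) so that ``$[T_e]$ has an isolated path'' becomes $\exists\sigma\,(\mathit{card}([T_{e,\sigma}])=1)$ with the witness path quantified over hyperarithmetic reals, which is $\Pi^1_1$ by a Spector--Gandy style argument. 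So (e) is not merely bookkeeping away from correct.

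The second error is in your hardness plan for (a) and (b): the set $\{e: W_e\ \mbox{is recursive and}\ W_e\ \mbox{is infinite}\}$ is not $D^0_3$-complete. ``$W_e$ is recursive'' is $\Sigma^0_3$ and ``$W_e$ is infinite'' is $\Pi^0_2\subseteq\Sigma^0_3$, and $\Sigma^0_3$ is closed under intersection, so this set is $\Sigma^0_3$ and a reduction from it can yield at best $\Sigma^0_3$-hardness, not $D^0_3$-hardness. For $D^0_3$-hardness you must reduce a genuine difference $A\setminus B$ of $\Sigma^0_3$ sets, building a tree in which recursive boundedness tracks a $\Sigma^0_3$-complete condition while perfection independently tracks a $\Pi^0_3$-complete condition -- exactly the ``two conditions without interference'' problem you flag at the end, but your chosen base set removes the $\Pi^0_3$ half before you start. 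Finally, a repairable but real uniformity gap in your upper bounds for (a) and (b): $Ext(T_e)$ is $\Pi^0_1$ only once a recursive bounding function is fixed, and uniformly in $e$ the naive count gives $\Pi^0_4$; one must existentially/universally quantify an index for the bounding function and guard with ``$\varphi_i$ is total and bounds $T_e$'' to land in $\Sigma^0_3\wedge\Pi^0_3$. Likewise, a.a.r.b.\ and a.a.b.\ trees need not be finitely branching, so Theorem \ref{thm:Ext}(b),(c) do not apply verbatim; one has to pass to the part of the tree beyond the finite exceptional set, again under a guard. With those repairs (a)--(d) can be pushed through, but as written the proposal does not establish (e) or the $D^0_3$-hardness in (a),(b).
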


\section{Proofs of Theorems \ref{tree2prog} and \ref{prog2trees}}
\label{proofs}

The main goal of this section is prove Theorems \ref{tree2prog} and
\ref{prog2trees}. 

Recall that $\{ e\}^B$ denotes the function computed by the $e$-th oracle
machine with oracle $B$. If $A \subseteq \omega$, we write $\{ e\}^B = A$ if
$\{ e\}^B$ is the characteristic function of $A$. If $f$ is a function
$f\colon\omega\rightarrow \omega$, then  we let $gr(f) = \{ \la x,f(x)\ra
\colon x\in \omega\}$.  Given a finite normal predicate logic program $P$ and a
recursive tree $T \subseteq \omega^{< \omega}$, we say that there is an
effective one-to-one degree preserving correspondence between the set of stable
models of $P$ and the set of infinite paths through $T_P$ if there are indices
$e_1$ and $e_2$  of oracle Turing machines such that\\
(i) $( \forall M\in \Stab(P))
(\{ e_1 \}^M = f_M \in [T])$, and\\
(ii) $(\forall f\in [T]) (\{ e_2\}^{gr(f)}
= M_f \in \Stab(P))$, and \\
(iii) $(\forall M \in \Stab(P)) (\forall  f\in [T])
(\{ e_1\}^M = f  \  \Leftrightarrow
\  \{ e_2 \}^{gr(f)} = M )$.\\
Condition (i) says that the stable models of $P$ uniformly produce infinite
paths through the tree $T$ via an algorithm with index $e_1$ and condition (ii)
says that the infinite paths through the tree $T$ uniformly produce stable
models of $P$ via an algorithm with index $e_2$.  Finally,  condition (iii)
asserts that our correspondence is one-to-one and if $\{ e_1 \}^{M} = f$, then
$f$ is Turing equivalent to $M$.  In what follows, we will not explicitly
construct the indices $e_1$ and $e_2$, but our constructions will make it clear
that such indices exist.

\subsection{ The proof of Theorem \ref{tree2prog}.}

Suppose that $T$ is a recursive tree contained in $\omega^{< \omega}$. Note
that by definition, the empty sequence, whose code is 0,  is in $T$. 

A classical result, first explicit in \cite{sm68} and \cite{an78}, but known a
long time earlier in equational form, is that every r.e. relation can be
computed by a suitably chosen predicate over the least model of a finite
predicate logic Horn program.  An elegant method of proof due to Shepherdson
(see \cite{sh91} for references) uses the representation of recursive functions
by means of finite register machines. When such machines are represented by
Horn programs in the natural way, we get programs in which every atom can be
proved in only finitely many ways; see also \cite{ns92}.  Thus we have the
following proposition. 

\begin{proposition}\label{aux}
Let $r(\cdot ,\cdot )$ be a recursive relation. Then there is a finite
predicate logic program $P_r$ computing $r(\cdot, \cdot)$ such that every atom
in the least model $M_r$ of $P_r$ has only finitely many minimal proof schemes
and there is a recursive procedure such that given an atom $a$ in Herbrand base
of $P_r$ produces the code of the set of $P_r$-proof schemes for $a$. Moreover,
the least model of $P_r$ is recursive. $\hfill\Box$
\end{proposition}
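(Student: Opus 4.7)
The plan is to follow Shepherdson's approach through finite register machines, as the statement suggests. Since $r$ is recursive, fix a register machine $M$ that computes its characteristic function; $M$ has finitely many registers, a finite list of labeled instructions (zero, successor, jump-on-zero, halt), and is deterministic. Using a single constant $0$ and a unary function symbol $s$ to represent numerals $s^n(0)$, I would translate $M$ into a Horn program $P_r$ as follows: for each instruction label $i$, introduce a predicate $\mathit{state}_i$ of arity equal to the number of registers of $M$, and for each transition of $M$ write the corresponding Horn clause. For example, an increment of register $j$ followed by a jump to label $i'$ becomes
\[
\mathit{state}_{i'}(x_1,\ldots,s(x_j),\ldots,x_k) \lar \mathit{state}_i(x_1,\ldots,x_j,\ldots,x_k),
\]
while a zero-test at $i$ splits into two clauses, one firing when the tested argument is $0$ and one when it is $s(x_j)$. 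Add an initialization clause $\mathit{state}_1(x,y,0,\ldots,0) \lar$ and an output clause $r(x,y) \lar \mathit{state}_{\mathit{halt}}(x,y,\ldots)$ selecting accepting halt configurations.

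Correctness, namely that $r(m,n) \in \mathit{lm}(P_r)$ iff $r(m,n)$ holds, is then a straightforward induction on the number of steps of the $M$-computation on input $(m,n)$. The key structural observation is that the determinism of $M$ translates directly into uniqueness of derivations: every reachable configuration atom $\mathit{state}_i(\bar n)$ in $\mathit{lm}(P_r)$ is produced by exactly one ground instance of one clause applied to exactly one predecessor atom, so induction on computation length shows that each such atom has a unique minimal $P_r$-proof scheme. Consequently, every $r(m,n)$ in $M_r$ has only finitely many (in fact exactly one) minimal proof scheme, which is what the statement requires.

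The uniform effective procedure producing the set of minimal $P_r$-proof schemes of a given atom $a$ is just simulation: if $a$ is $\mathit{state}_i(\bar n)$, run $M$ from its initial configuration and, if $(i,\bar n)$ is ever reached, read off the unique induced proof scheme; if $a = r(m,n)$, simulate $M$ on $(m,n)$, accept if it halts in the accepting state, and emit the induced proof scheme together with the output clause. Recursiveness of $\mathit{lm}(P_r)$ follows from recursiveness of $r$ together with the recursiveness of the $M$-reachability predicate.

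The main subtlety, emphasized in \cite{sh91,ns92}, is to ensure that no alternative derivation paths sneak in via auxiliary arithmetic clauses, for this would immediately multiply the minimal proof schemes of the $\mathit{state}_i$-atoms and hence of the $r$-atoms. The register-machine encoding avoids this precisely because every ground instance of a $P_r$-clause corresponds to a single atomic transition of $M$ and because the numerals $s^n(0)$ are term-canonical, so no two distinct derivations of the same ground atom are possible. Consequently, the finiteness-of-minimal-proof-schemes property is built into the construction rather than requiring a separate argument.
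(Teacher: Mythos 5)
Your overall strategy (register machines translated into a Horn program, as in Shepherdson's method) is exactly the route the paper has in mind — the paper in fact offers no proof of Proposition \ref{aux} beyond citing this classical construction — but the key step of your argument as written is flawed, and the construction as literally specified does not have the property being proved. You justify finiteness of minimal proof schemes by claiming that determinism of $M$ makes every reachable atom $\mathit{state}_i(\bar n)$ derivable ``by exactly one ground instance of one clause applied to exactly one predecessor atom.'' Determinism gives a unique \emph{successor} configuration, not a unique predecessor: two different instructions can jump to the same label, a zero-set instruction collapses infinitely many predecessor configurations onto one, and, most importantly, once the input registers are overwritten the same configuration atom is reachable from infinitely many of the initial facts $\mathit{state}_1(m,n,0,\ldots,0)$. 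Concretely, if $M$ happens to begin by zeroing its two input registers, then the atom $\mathit{state}_3(0,0,0,\ldots,0)$ lies in the least model and has one derivation for each input pair $(m,n)$; these derivations use pairwise distinct initial facts, so they are pairwise $\prec$-incomparable and all minimal. Thus an atom of $\mathit{lm}(P_r)$ can have infinitely many minimal proof schemes under your encoding, which is precisely the conclusion you need to rule out.

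The repair is standard and small, but it is where the real content lies: make each configuration atom determine its own initial fact, e.g.\ by adding the input (and, if one likes, a step counter) as frozen extra arguments of every $\mathit{state}_i$, or by fixing a machine $M$ that never modifies its input registers. Then a derivable atom fixes the input, forward determinism forces any derivation to follow the unique run from that input, and the first occurrence of the atom on that run gives the unique minimal proof scheme — this is the argument your sketch was reaching for, and it is what the Shepherdson-style encodings cited in the paper actually arrange. Two further points should be made explicit: your decision procedure for producing the (possibly empty) set of proof schemes of an arbitrary Herbrand-base atom, and the recursiveness of $\mathit{lm}(P_r)$, both rely on the fact that $M$ halts on \emph{every} input; this is available because $M$ computes the characteristic function of the total relation $r$, but it is the reason simulation is a decision procedure rather than merely a semi-decision procedure, and it deserves mention.
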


It follows that given a recursive tree $T$ there exist the following three
finite normal predicate logic programs such that the ground terms in their
underlying  language are all of the form $0$ or $s^n(0)$ for $n \geq 1$ where
$0$ is a constant symbol and $s$ is a unary function symbol.  We shall use $n$
as an abbreviation for the term $s^n(0)$ for $n \geq 1$. In particular:
\begin{compactenum}
\item[(I)]  There exists a finite predicate logic Horn program $P_{T,0}$
such that for a predicate $\tree(\cdot )$ of the language of $P_{T,0}$, 
the atom $\tree(n)$ belongs to the least Herbrand model of $P_{T,0}$ if
and only if $n$ is a code for a finite sequence $\sigma$ and
$\sigma\in T$. 
\item[(II)]  There is a finite predicate logic Horn program $P_1$ such that for
a predicate $seq(\cdot)$ of the language of $P_1$, the atom $seq(n)$ belongs to
the least Herbrand model of $P_1$ if and  only if $n$ is the code of a finite
sequence $\alpha \in \omega^{< \omega}$.
\item[(III)]  There is a finite predicate logic Horn program $P_2$ which
correctly computes the following recursive predicates on codes of sequences. 
\begin{compactdesc}
\item[(a)] $\samelength (\cdot ,\cdot )$. This succeeds if and only
if both arguments are the codes of sequences of  the same length.
\item[(b)] $\diff (\cdot ,\cdot )$. This succeeds if and only if
the arguments are codes of sequences which are different.
\item[(c)] $\shorter (\cdot ,\cdot )$. This succeeds if and only
both arguments are codes of sequences and the first sequence is shorter 
than the second sequence.
\item[(d)] $\length (\cdot ,\cdot )$. This succeeds when the 
first argument is a code of a sequence and the second argument is the length
of that sequence.
\item[(e)] $\notincluded (\cdot ,\cdot )$. This succeeds if and only
if both arguments are codes of sequences and the first sequence 
is not an initial segment of the second sequence.
\item[(f)] $\num(\cdot)$. This succeeds if and only if the argument 
is either $0$ or $s^n(0)$ for some $n \geq 1$.
\end{compactdesc}
\end{compactenum}
Now let $P_T^{-}$ be the finite predicate logic program which is the union of
programs $P_{T,0}\cup P_1\cup P_2$.  We denote its language by ${\cal L}^{-}$
and we let $M^{-}$ be the least model of $P_T^{-}$.  By Proposition \ref{aux},
this program $P_T^{-}$ is a Horn program, $M^{-}$ is recursive, and for each
ground atom $a$ in the Herbrand base of $P^{-}$, we can explicitly construct
the set of all $P_T^{-}$-proof schemes of $a$. In particular, $\tree (n) \in
M^{-}$ if and only if  $n$ is the code of node in $T$. 

Our final program $P_T$ will consist of $P_T^{-}$ plus clauses (1)-(7) given
below. We assume no predicate that appears in the head of any of these clauses
is in the language ${\cal L}^{-}$. However, we do allow predicates from the
language of $P_T^{-}$ to appear in the body of clauses (1) to (7). It follows
that for any stable model of the extended program, its intersection with the
set of ground atoms of ${\cal L}^{-}$ will be $M^{-}$. In particular, the
meaning of the predicates listed above will always be the same.

We are ready now to write the additional clauses which, together with the
program $P_T^{-}$, will form the desired program $P_T$.  First of all, we
select the following three new unary predicates.
\begin{compactenum}
\item[(i)] $\ipath (\cdot )$, whose intended interpretation in any given stable
model $M$ of $P_T$ is that it holds only on the set of codes of sequences that
lie on infinite path  through $T$.  This path  will correspond to the path
encoded by the stable model of $M$;
\item[(ii)] $\notpath (\cdot )$, whose intended interpretation in any stable
model $M$ of $P_T$ is the set of all codes of sequences which are in $T$ but do
not satisfy $\ipath( \cdot)$. 
\item[(iii)] $\control (\cdot )$, which will be used to ensure that $\ipath(
\cdot)$ always encodes an infinite path through $T$.
\end{compactenum}

This given, the final 7 clauses  of our program are the following.\\
\ \\
(1) $\ipath (X) \longleftarrow \tree(X),\ \neg \notpath (X)$\\
(2) $\notpath (X) \longleftarrow \tree(X),\ \neg \ipath (X)$\\
(3) $\ipath (0) \longleftarrow$ \quad\quad\quad /* Recall 0 is the code of the 
empty sequence */\\
(4) $\notpath (X) \longleftarrow \tree(X),\ \ipath (Y),\tree (Y), \samelength
(X,Y), \diff (X,Y)$\\ (5) $\notpath (X) \longleftarrow \tree(X),\  \tree(Y),\
\ipath (Y), \ \shorter (Y,X), \notincluded (Y,X)$\\
(6) $\control (X)\longleftarrow \ipath (Y),\ \length (Y,X)$\\
(7) $\control (X) \longleftarrow  \neg \control (X),\num(X)$\\
\ \\
Clearly, $P_T  = P_T^{-} \cup \{ (1),\ldots ,(7)\}$ is a finite program. 

We should note that technically, we must insure that all the predicates that we
use in our finite normal predicate logic program $P_T$ come from our fixed
recursive language $\mathcal{L}$.  The predicates we have used in $P_T$ were
picked mainly for  mnemonic purposes, but since we are assuming that
$\mathcal{F}$ has infinitely many constant symbols and infinitely  many $n$-ary
relation symbols and $n$-ary functions symbols for each $n$, there is no
problem to substitute our predicate names by corresponding predicate names that
appear in $\mathcal{L}$. 

Our goal is to prove the following. 
\begin{compactenum}
\item[(A)] $T$ is a finitely branching recursive tree if and only if 
every element of $H(P_T)$ has only finitely many minimal proof schemes.
Thus, $T$ is finitely branching if and only if $P_T$ has the $\FS$ property.
\item[(B)] $T$ is highly recursive if and only if for every atom $a$ in 
$H(P_T)$, we can effectively find the set of all minimal $P_T$-proof schemes 
of $a$.
\item[(C)] There is a one-to-one degree preserving correspondence
between $[T]$ and $Stab(P_T)$.
\end{compactenum} 

First we prove (A) and (B).  When we add clauses (1)-(7), we note that no atom
of ${\cal L}^{-}$ is in the head of any of these new clauses. This means
that no ground instance of such a clause can be present in a minimal
$P_T$-proof scheme with conclusion being any atom of ${\cal L}^{-}$. This means
that minimal $P_T$-proof schemes with conclusion an atom $p$ of ${\cal L}^{-}$
can involve only clauses from $P_T^{-}$.  Thus, for any ground atom $a$ of
${\cal L}^{-}$, $a$ will have no minimal $P_T$-proof scheme if $a \notin M^{-}$
and we can effectively compute the finite set of $P_T$-proof schemes for $a$ if
$a \in M^{-}$.  Next consider the atoms appearing in the heads of clauses
(1)-(7). These are atoms of the following three forms:\\
(i) $\ipath (t)$,\\
(ii) $\notpath (t)$, and \\
(iii) $\control (t)$

The ground terms of our language are of form $n$, where $n\in \omega$, that is,
of the form $0$ or $s^n(0)$ for $n \geq 1$.  Note that all clauses that have
$\ipath(X)$ or $\notpath(X)$ have  in the body an occurrence of the atom
$\tree(X)$.  Thus for atoms of the form $\ipath (t)$ and $\notpath (t)$, the
only ground terms which possess a $P_T$-proof scheme must be those for which
$t$ is a code of a sequence of natural numbers belonging to $T$. The reason for
this is that predicates of the form $\tree(t)$ from ${\cal L}^{-}$ fail if $t$
is not the code of sequence in $T$. The only exception is clause (3) whose head
is $\ipath (0)$ and 0 is the code of the empty sequence which is in every tree
$T$ by definition.  This eliminates from our consideration ground atoms of the
form $\ipath (t)$ and $\notpath (t)$  with $t \notin T$.  Similarly, the only
ground atoms of the form $\control (t)$ which possess a proof scheme are atoms
of the form $\control (n)$ where $n$ is a natural number. 

Thus we are left with these cases:\\
(a) $\ipath (c(\sigma) )$  where $\sigma\in T$, \\
(b) $\control ( n )$  where $n \in \omega$, and \\
(c) $\notpath (c(\sigma) )$ where $\sigma\in T$.\\
\ \\
{\bf Case (a).} Atoms of the form $\ipath(c(\sigma) )$ where $\sigma\in T$.\\
There are only two type ground clauses $C$ with $\ipath(\cdot)$ in the head,
namely, those that are ground instances of clauses of type (1) and (3). Clause
(3) is a Horn clause.  This implies that a minimal $P_T$-proof scheme which
derives $\ipath(0)$ and uses (3) must be of the form $\langle \langle \ipath
(0), (3) \rangle, \emptyset  \rangle$.  Next consider a minimal $P_T$-proof
scheme $\PS$ of $\ipath(c(\sigma))$ which contains clause (1).  In such a case,
$\PS$ will consist of the sequence  of pairs of a minimal $P_T^{-}$-proof
scheme of $\tree(c(\sigma))$ which will have empty support followed by the
pair   $\langle \ipath (c(\sigma) ) , (1)^* \rangle $ where $(1)^*$ is the
result of substituting $c (\sigma)$ for $X$ in clause (1). The support of $\PS$
will be $\{ \notpath (c(\sigma) )\}$.  Since we are assuming that
$\tree(c(\sigma))$ has only finitely many $P_T^{-}$-proof schemes and we can
effectively find them, it follows that $\ipath(c(\sigma))$ has only finitely
many minimal $P_T$-proof schemes and we can effectively find all of them.\\
{\bf Case (b)}. Atoms of the form $\control (n)$ where $n \in \omega$. \\
There are only two types of ground instances of clauses with the atom
$\control(n)$ in the head, namely, ground instances of clauses (6) and (7). The
only minimal $P_T$-proof schemes of $\control(n)$ that use a ground instance of
clause (7) must consist of the sequence of pairs in  a minimal $P_T^{-}$-proof
scheme of $\num(n)$ followed by the pair  $ \langle \control(n), (7)^*\rangle$ 
where $(7)^*$ is the result of substituting $n$ for $X$ in $(7)$.  Thus the support
of such a minimal $P_T$-proof scheme is $ \{ \control(n)\}$ Since we are
assuming that $\num(n)$ has only finitely many minimal $P_T^{-}$-proof schemes
and we can effectively find them, we can effectively find all minimal
$P_T$-proof schemes of $\control(n)$ that uses a ground instance of (7).  If we
have a minimal $P_T$-proof scheme $\PS$ with conclusion $\control(n)$ that uses
a ground instances of clause (6), then the last term of $\PS$ must be of the
form 
\[
\langle \control(n), \control(n) \leftarrow
\ipath(c(\tau)),\length(c(\tau),n)\rangle
\]
where $c(\tau)$ is the code of node in $T$ of length $n$.  Moreover, in $\PS$,
this triple must be preceded by some interweaving of the sequences of pairs in
minimal $P_T$-proof schemes for $\ipath(c(\tau))$ and $\length(c(\tau),n)$.  
Now we effectively find the finite set of minimal $P_T^{-}$-proof schemes for
$\length(c(\tau),n)$ and we can effectively find the set of all $P_T$-minimal
proof schemes for $\ipath(c(\tau))$.  Moreover, it must be the case that
support of $\PS$ is $A$ the support of the minimal $P_T$-scheme of
$\ipath(c(\tau))$ that was inter-weaved with one of the minimal proof schemes
for $\length(c(\tau)),n)$ to create $\PS$. Since the support of any proof
scheme for $\ipath(c(\tau))$ where $|\tau| \geq 1$ is just
$\{\notpath(c(\tau))\}$, it follows that $A = \{\notpath(c(\tau))\}$ 
if  $|\tau| \geq 1$ and $A = \emptyset$ if $|\tau| =0$. 
Now, if $T$ is finitely branching, there will only be finitely many choices for
$\tau$ since to derive $\ipath(c(\tau))$, $\tau$ must be in $T$.  Hence there
will be only finitely many choices of $\PS$.  On the other hand, if $T$ is not
finitely branching, then there will be an $n$ such that there are infinitely 
many nodes $\tau \in T$ of length $n$ for some $n > 0$ so that there will 
be infinitely many different supports of minimal $P_T$-proof schemes for
$\control(n)$.  If $T$ is highly recursive, then we can effectively find all
$\tau \in T$ of length $n$ so that we can effectively find all such proof
schemes $\PS$.  Similarly, if $P_T$ has the rec. $\FS$ property, then for $n >
0$, we can read off all the nodes in $T$ of length $n$ from the supports of the
minimal $P_T$-proof schemes of $\control(n)$ so that $T$ will be highly
recursive.  Thus $T$ is finitely branching if and only if there are finitely
many minimal $P_T$-proof schemes for $\control(n)$ for each $n \geq 0$.
Similarly, if $T$ is highly recursive, then we can effectively find all the
minimal $P_T$-proof schemes for $\control(n)$ for each $n \geq 0$ and if $P_T$
has the rec. $\FS$ property, then $T$ is highly recursive. \\
\ \\
{\bf Case (c).} Atoms of the form $\notpath (c(\sigma) )$.\\
Here we have to take into account clauses (2), (4), and (5). First, consider a
minimal $P_T$-proof scheme $\PS$ of $\notpath(c(\sigma))$ which contains a
ground instance of clause (2).  In such a case, the sequence of pairs in $\PS$
will consist of the sequence of pairs a minimal $P_T^{-}$-proof scheme of
$\tree(c(\sigma))$ which will have empty support followed by the pair  
\[
\langle \notpath (c(\sigma) ) , (2)^*\rangle 
\]
where $(2)^*$ is the result of substituting $c(\sigma)$ for $X$ in $(2)$.  The
support of $\PS$ is $\{ \ipath (c(\sigma) )\}$.  Since we are assuming that
$\tree(c(\sigma))$ has only finitely many minimal $P_T^{-}$-proof schemes and
we can effectively find them, it follows that $\notpath(c(\sigma))$ has only 
finitely many minimal $P_T$-proof schemes that use a ground instance of clause
(2) and we can effectively find them. 

Next, consider a $P_T$-proof scheme $\PS$ with conclusion $\notpath(c(\sigma))$
which contains a ground instance of clause (4). Then there must exists a $\tau
\in T$ of length $|\sigma|$ such that the last pair in the proof scheme is of
the form 
\begin{equation}\label{4:tau}
\langle c(\sigma), (4)^*\rangle
\end{equation}
where $(4)^*$ is the result of substituting $c(\sigma)$ for $X$ and $c(\tau)$
for $Y$ in $(4)$.  Then $\PS$ must consist of an interweaving of the sequences
of pairs of the minimal $P_T^{-}$-proof schemes for $\tree(c(\sigma))$,
$\tree(c(\tau))$, $\samelength(c(\sigma), c(\tau))$, and $\diff(c(\sigma),
c(\tau))$ and a minimal $P_T$-proof scheme $\ipath(c(\tau))$ with support $A$.
Then the support of $\PS$ will be $A$.  In each case, there are only finitely
many such minimal $P_T$-proofs schemes of these atoms and we can effectively
find them. Thus for each $\tau \in T$ of length $|\sigma|$, we can effectively
find all the minimal $P_T$-proof schemes of $\notpath(c(\sigma))$ that end in a
triple of the form of  (\ref{4:tau}).  Now if $T$ is finitely branching, it
follows that there will be only finitely many minimal $P_T$-proof schemes that
use a ground instance of clause $(4)$ and, if $T$ is highly recursive, then we
can effectively find all $\tau \in T$ of length $|\sigma|$ so that we can
effectively find all minimal $P_T$-proof schemes that use a ground instance of
clause $(4)$. 

Finally  let us consider a $P_T$-proof scheme $\PS$ with conclusion
$\notpath(c(\sigma))$ which contains ground instance of clause (5). Then there
must exists a $\tau \in T$ whose length is less than the length of $\sigma$ and
which is not an initial segment of $\sigma$ such that the last pair in the
proof scheme is of the form 
\begin{equation}\label{5:tau}
\langle c(\sigma), (5)^*\rangle
\end{equation}
where $(5)^*$ is the result of substituting $c(\sigma)$ for $X$ and $c(\tau)$
for $Y$ in $(5)$.  Then $\PS$ must consist of an interweaving of sequences of
pairs in the minimal $P_T^{-}$-proof schemes for $\tree(c(\sigma))$,
$\tree(c(\tau))$, $\shorter(c(\tau), c(\sigma))$, and $\notincluded(c(\tau),
c(\sigma))$ and a minimal $P_T$-proof scheme of $\ipath(\tau)$ with support
$A$. Then the support of $\PS$ is $A$.  In each case, there are only finitely
many minimal $P_T$-proofs schemes of these atoms and we can effectively find
them. Thus for each $\tau$ whose length is less than the length of $\sigma$ and
which is not an initial segment of $\sigma$, we can effectively find all the
minimal $P_T$-proof schemes of $\notpath(c(\sigma))$ that end in a pair of the
form of  (\ref{5:tau}).  Now if $T$ is finitely branching, it follows that
there will be only finitely many minimal $P_T$-proof schemes that use a ground
instance of clause $(5)$ and, if $T$ is highly recursive, then we can
effectively find all $\tau \in T$ of length $|\sigma|$ so that we can
effectively find all minimal $P_T$-proof schemes that use a ground instance of
clause $(5)$. 

Thus, we have proved that if $T$ is finitely branching, then every ground atom
possesses only finitely many minimal $P_T$-proof schemes and if $T$ is highly
recursive, then for every ground atom $a \in H(P_T)$, we can effectively find
the set of all minimal $P_T$-proof schemes of $a$. Thus if $T$ is finitely
branching, then $P_T$ has the $\FS$ property and if $T$ is highly recursive,
then $T$ has the rec. $\FS$ property.  On the other hand, we have shown by our
analysis in  (b) that if $P_T$ has the $\FS$ property, then $T$ must be
finitely branching and if $P_T$ has the rec. $\FS$ property, then $T$ is highly
recursive. This proves  (A) and (B) and establishes parts (2) and (3) of
Theorem \ref{tree2prog}. 

To prove (C), we shall establish a ``normal form'' for the stable models of
$P_T$. Each such model must contain $M^{-}$, the least model of $P_T^{-}$. In
fact, the restriction of a stable model of $P_T$ to $H(P_T^{-})$ is $M^{-}$. 
Given any $\beta = (\beta{(0)}, \beta{(1)}, \ldots ) \in \omega^{\omega}$,
recall that  $\beta \res n = (\beta(0),\beta(1),\dots,\beta(n-1))$. Then we let
\begin{eqnarray}\label{Mbeta}
M_{\beta} =&& M^{-} \cup \{ \control (n) : n\in \omega\} \cup 
\{\ipath(0)\} \cup 
\{ \ipath (c(\beta \res {n}) : n \in \omega \} \cup\ \nonumber \\
&&\{ \notpath (c(\sigma) ) :\sigma \in T \ \mbox{and} \ 
\sigma \not \prec \beta \}. 
\end{eqnarray}
We claim that $M$ is a stable model of $P_T$ if and only if 
$M = M_{\beta}$ for some $\beta \in [T]$. 

First, assume that $M$ is a stable model of $P_T$. Thus $M$ is the least model
of the Gelfond-Lifschitz transform $(ground(P_T))_M$. We know that the atoms
of  ${\cal L}^{-}$ in  $M$ constitute $M^{-}$.  Let us observe that since the
clause (3) belongs to our program, $\ipath(0) \in M$.  Thus we can not use
clause (2) to derive that $\notpath(0)$ is in $M$. Moreover, it is easy to see
that we cannot use clauses of the form (4) or (5) to derive that $\notpath(0)$
is in $M$ so that it must be the case that $\notpath(0) \notin M$.  Next,
suppose that $\sigma \in T$ and length of $\sigma$ is greater than or equal to
1. It is easy to see from clauses (1) and (2) that it cannot be the case that
neither $\ipath(c(\sigma))$ and $\notpath(c(\sigma))$ are in $M$.  Since
clauses of the form of (1) are the only clauses that we can use to derive that
the atom $\ipath(c(\sigma))$ is in the least model of $(ground(P_T))_M$ when
$|\sigma| \geq 1$, it follows that it cannot be the case that both
$\ipath(c(\sigma))$ and $\notpath(c(\sigma))$ are in $M$. Thus exactly one of
$\ipath(c(\sigma))$ and $\notpath(c(\sigma))$ must be in $M$ for all $\sigma
\in T$.  Next we claim that $\control(n) \in M$ for all $n$.  That is, if 
$\control(n) \notin M$ for some $n$, then the Gelfond-Lifschitz transform of
the ground clause $\control(n) \leftarrow \neg \control(n),\num(n)$ from (7)
would be $\control(n) \leftarrow num(n)$ which would force $\control(n)$ to be
in $M$.  Since $\control(n) \in M$, the only way that one could derive that
$\control(n)$ is in the least model of $(ground(P_T))_M$ is via a proof scheme
that uses a ground instance of clause  (6). This means that for each $n \geq
0$, there must be a $\tau^{(n)} \in T$ of length $n$ such that
$\ipath(c(\tau^{(n)})) \in M$. But then we can use clause (4) to show that if
$\sigma$ is a node in $T$ of length $n$ which is different from $\tau^{(n)}$,
then $\notpath(c(\sigma)) \in M$.  But now  the clauses of type (5) will force
that it must be the case that if $m < n$, then $\tau^{(m)}$ must be an initial
segment of $\tau^{(n)}$.  Thus the path $\tau$ where $\tau^{(n)} \sqsubseteq
\tau$ for all $n$  is  an infinite path through $T$ and $M = M_\tau$.  Note
that this shows that if $[T]$ is empty, then  $P_T$ has no stable model.

To complete the argument for (C), we have to prove that $\beta\in [T]$ implies
that $M_{\beta}$ is a stable model of $P_T$. Let $lm(M_\beta)$ be the least
model of $(ground(P_T))_{M_\beta}$.  The presence of clauses (1) and (2) in
$P_T$ implies that $\{ \ipath (c(\beta\res {(n)}) : n\in \omega \} \cup
\{ \notpath (c(\sigma) ): \sigma \in T \setminus \{ \beta\res {(n)} : n\in
\omega\}\} \subseteq lm(M_{\beta })$.  Then clause (6) can be used to show that
for all $n$, $\control (n)$ also belongs to $lm(M_{\beta})$.  Since
$M^{-}\subseteq lm(M_{\beta})$, it follows that $M_{\beta}\subseteq
lm(M_{\beta})$. 

Next we must prove that $lm(M_{\beta})\subseteq M_{\beta}$.  We know that since
none of the heads of rules (1)-(7) involve predicates in $H(P_T^{-})$, it must
be the case that $lm(M_\beta) \cap H(P_T^{-}) =M^{-}$.  The only ground 
clauses from (1) that are in $(ground(P_T))_{M_\beta}$ are clauses of the form
\[ 
\ipath(c(\beta\res {n})) \leftarrow \tree(c(\beta\res {n})). 
\]
These are the only clauses of  $(ground(P_T))_{M_\beta}$ which have
$\ipath(c(\sigma))$ in the head for $\sigma \in T$ so that
$\{\ipath(c(\sigma)):\sigma \in T\}\cap lm(M_\beta) \subseteq M_\beta$.  Since
$M_\beta$ contains all ground clauses of the form $\control(n)$, the only
clauses that we have to worry about are clauses with the ground atom
$\notpath(c(\sigma))$ in the head for $\sigma \in T$.  The only ground clause
from (2) that are in $(ground(P_T))_{M_{\beta}}$ are clauses of the form 
$$\notpath(c(\sigma)) \leftarrow \tree(c(\sigma))$$ where $\sigma \in T-
\{\beta^{(n)}: n \geq 0\}$. Thus the conclusion of all such clauses are in
$M_\beta$.  Thus we are reduced to considering ground clauses of the form (4)
and (5).  Since all such clauses must have an atom $\ipath(c(\tau))$ in 
the body, the only way we can use these clauses is to derive 
$\notpath(c(\sigma))$ in its  head and this happens if $\tau \in \{\beta^{(n)}:
n \geq 0\}$. But then it easy to see that this forces $\sigma \notin
\{\beta^{(n)}: n \geq 0\}$.  Thus the only atoms  $\notpath(c(\sigma)) \in
lm(M_\beta)$ are those with $\sigma \in T-\{\beta^{(n)}: n \geq 0\}$. Thus 
$lm(M_\beta) \subseteq M_\beta$. This proves part (1) of Theorem
\ref{tree2prog}. 

Finally, consider part (4) of Theorem \ref{tree2prog}.  By part (3), we know
that $T$ is highly recursive if and only if $P_T$ has the rec. $\FS$ property.
We must show that if $T$ is decidable and recursively bounded, then $P_T$ is
decidable.  So suppose we are given a set of ground atoms $\{a_1, \ldots,
a_n\}$ and corresponding minimal $P_T$-proof schemes $\PS_i$ of $a_i$.  For
these atoms to to belong to a stable model  $M$ of $P_T$, it must be the case
that the ground atoms in the language of $P_T^{-}$ must all be in $M^{-}$ and
there corresponding proof schemes must be the least minimal proofs schemes for
$P_T^{-}$.  This we can check recursively.  The remaining atoms are of the form
$\ipath(c(\sigma))$, $\notpath(c(\tau))$, and $\control (n)$. It must be the
case that atoms of the form $\ipath(c(\sigma))$ and $\notpath(c(\tau))$ among 
$\{a_1, \ldots, a_n\}$ must be consistent with being the initial segment of the
path through $T$.  If that is not the case, then it is clear that $\{a_1,
\ldots, a_n\}$ is not contained in a stable model of $P_T$. If it is the case,
let $\alpha$ be the longest string $\sigma$ such that $\ipath(c(\sigma)) \in
\{a_1, \ldots, a_n\}$.  Now if $\alpha \notin Ext(T)$, then again 
$\{a_1, \ldots, a_n\}$ is not contained in a stable model of $P_T$. If it is,
then let $m$ be the maximum of all $n$ such that $\control(n) \in \{a_1,
\ldots, a_n\}$ and $|\tau|$ such that $\notpath(c(\tau)) \in \{a_1, \ldots,
a_n\}$.  Since $T$ is recursively bounded, then we can effectively find all
strings of length $m$ which extend $\alpha$. Now if there is a string $\beta$
of length $m$ such that $\alpha \prec \beta$, $\beta \in Ext(T)$,  and there is
no initial segment $\gamma$ of $\beta$ such that $\notpath(c(\gamma)) \in
\{a_1, \ldots, a_n\}$, then it will be the case that $\{a_1, \ldots, a_n\}$ is
contained in a stable model.  For each such $\beta$ and all $\delta \in T$ of
length less than or equal to $m$, the only minimal proof schemes of ground
atoms of the form $\ipath(c(\delta))$, $\notpath(c(\delta)$, and $\control(n)$
for $n \leq m$ depend only on the ground atoms $\ipath(c(\gamma))$ for $\gamma$
contained in $\beta$. Thus by our analysis of Cases {\bf (a)}-{\bf (c)} above,
we can compute the appropriate minimal proofs schemes and then check if the
corresponding minimal $P_T$-proof schemes equals $\{\PS_1, \ldots, \PS_n\}$.
Thus $P_T$ is decidable. 

On the other hand, suppose that $P_T$ has the rec. $\FS$ property and $P_T$ is
decidable.  Then given a node $\beta =(\beta_1, \ldots, \beta_n) \in T$, it is
easy to see that for any path $\pi \in \omega^{\omega}$ which extends $\beta$,
the elements of $M_{\pi}$ which mention only $\beta$, nodes of length $\leq
|\beta|$, and the elements $0,s^1(0) \ldots, s^{|\beta|}(0)$ are the same.
Thus let 
\begin{eqnarray}\label{Mbeta0}
M_{\beta} =&& M^{-} \cup \{ \control (n) : n \leq |\beta|\} \cup 
\{\ipath(0)\} \cup 
\{ \ipath (c(\alpha) : \alpha \sqsubseteq \beta \} \cup\ \nonumber \\
&&\{ \notpath (c(\sigma) ) :\sigma \in T, |\sg| \leq |\beta|, \ \mbox{and} \ 
\sigma \not \prec \beta \}. 
\end{eqnarray}
Then $M_\beta$ is finite and our analysis shows that we can effectively find
all the minimal $P_T$-proofs schemes $\PS_1, \ldots, \PS_r$ which mention only
$\beta$, nodes of length $\leq |\beta|$, and the elements $0,s^1(0) \ldots,
s^{|\beta|}(0)$ which  have conclusions in $M_\beta$.  By the decidability of
$P_T$, we know whether there is a stable model which contains $M_\beta$ and has
$\PS_1, \ldots, \PS_r$ has the corresponding minimal  $P_T$-proof schemes for
elements in $M_\beta$.  If there is such a stable model, then $\beta$ must be
an initial segment of some $\pi \in [T]$ so that $\beta \in Ext(T)$. If there
is no such stable model, then there is no infinite path $\pi \in [T]$ such that
$\beta \sqsubseteq \pi$ so that $\beta \not \in Ext(T)$.  Thus if $P_T$ is
decidable and has the rec. $\FS$ property, then $T$ is decidable and highly
recursive. This completes the proof of Theorem \ref{tree2prog}.

\subsection{Proof of Theorem \ref{prog2trees}}

Suppose that we are given a finite normal predicate logic program $P$. Then  by
our remarks in the previous section, the Herbrand base $H(P)$ will be primitive
recursive, $ground(P)$ will be a primitive recursive program and, for any atom
$a \in H(P)$, the set of minimal $P$-proof schemes with conclusion $a$ is
primitive recursive.  We should note, however, that it is not guaranteed that
the $Support(a)$ which is the set of $can(X)$ such that $X$ is the support of a
minimal $P$-proof scheme of $a$ is recursive.  However, it is the case that
$Support(a)$ is an r.e. set. 

Our basic strategy is to encode a stable model $M$ of $ground(P)$ by a path
$f_M = (f_0,f_1,\ldots )$ through the complete $\omega$-branching tree
$\omega^{<\omega}$ as follows.\\
(1) First, for all $i\geq 0$, $f_{2i} = \chi_M(i)$. That is, at the stage $2i$,
we encode the information about whether or not  the atom encoded by $i$ belongs
to $M$. Thus, in particular, if $i$ is not the code of ground atom in $H(P)$,
then $f_{2i} =0$.\\
(2) If $f_{2i} = 0$, then we set $f_{2i+1} = 0$.  But if $f_{2i} = 1$ so that
$i\in M$ and $i$ is the code of a ground atom in $H(P)$, then we let $f_{2i+1}$
equal $q_M(i)$ where $q_M(i)$ is the least code for a minimal $P$-proof scheme
$\PS$ for $i$ such that the support of $\PS$ is disjoint from $M$.  That is, we
select a minimal $P$-proof scheme $\PS$ for $i$, or to be precise for the atom
encoded by $i$, such that $\PS$ has the smallest possible code of any minimal
$P$-proof scheme $\mathbb{T}$ such that $supp(\mathbb{T})\cap M = \emptyset$.
If $M$ is a stable model, then, by Proposition \ref{keystable}, at least one
such minimal $P$-proof scheme exists for $i$.

Clearly $M \leq_T f_M$ since it is enough to look at the values of $f_M$ at
even places to read off $M$.  Now, given an $M$-oracle, it should be clear that
for each $i\in M$, we can use an $M$-oracle to find $q_M(i)$ effectively. This
means that $f_M\leq_T M$. Thus the correspondence $M\mapsto f_M$ is an
effective degree-preserving correspondence.  It is trivially one-to-one.

Next we construct a primitive  recursive tree $T_P \subseteq \omega^{\omega}$
such that $[T_P] = \{f_M: M\in stab(P)\}$.  Let $N_k$ be the set of all codes
of minimal $P$-proof schemes $\PS$ such that all the atoms appearing in all the
rules used in $\PS$ are smaller than $k$. Obviously $N_k$ is finite. 
It follows from our remarks in the previous section that since $P$ is a finite
normal predicate logic program,  the predicate ``minimal $P$-proof scheme''
which holds only for codes of minimal $P$-proof schemes is a primitive
recursive predicate. This means that there is a primitive recursive function
$h$ such that $h(k)$ equals to the canonical index for $N_k$.  Moreover, given
the code of sequence $\sigma = (\sigma (0),\ldots ,\sigma (k)) \in
\omega^{<\omega}$, there is a primitive recursive function which will produce
canonical indexes of the sets  $I_{\sigma} = \{ i: 2i \leq k \land \sigma (2i)
= 1 \}$ and $O_{\sigma} = \{ i: 2i \leq k \land \sigma (2i) = 0 \}$. 

For any given $k \geq 2$, we let  $\overline{k} = max(\{2j+1:2j+1 < k\}$ and if
$\sigma = (\sigma(0), \ldots, \sg(k))$ is an element of $\omega^{< \omega}$,
then we let $\overline{\sg} = (\sg(0),\ldots, \sg(\overline{k}))$. If  $k =1$
and $\sg =(\sg(0))$, then we let $\overline{k} =0$ and $\overline{\sg} =
\emptyset$. In what follows, we shall identify each atom in $H(P)$ with its
code.  Then we  define $T_P$ by putting a node $\sigma = (\sigma(0), \ldots,
\sg(k))$ into $T_P$ if and only if the following five conditions are met:
\begin{compactenum}
\item[(a)] 
If $2i+ 1 \leq \bar{k}$ and $\sigma (2i) = 0 $ then $\sigma (2i+1 ) = 0$; 
\item[(b)] 
then $\sigma (2i+1 ) = q$, where $q$ is a code for a minimal $P$-proof
scheme $\PS$ such that $concl(\PS ) = i$, $supp (\PS ) \cap I_{\overline{\sg}}
= \emptyset$, and there is no number $j < \sigma (k)$
such that $j$ is a code for a minimal $P$-proof scheme $\mathbb{T}$ 
with conclusion $i$ such that $supp(\mathbb{T}) = supp(\PS)$;
\item[(c)] If $2i+ 1 \leq \overline{k}$ and $\sigma (2i) = 1$ then there is no
code $c \in N_{\lfloor k/2 \rfloor}$ of a minimal $P$-proof scheme $\PS$ 
such that $conc (\PS ) = i$, $supp (\PS ) \subseteq O_{\overline{\sg}}$
and $c < \sigma (2i+1)$    (Here $\lfloor \cdot\rfloor$ is the 
number-theoretic ``floor'' function); 
\item[(d)] If $2i +1 \leq  \overline{k}$ and $\sigma (2i) = 0$ then there is no
code $c\in N_{\lfloor k/2\rfloor}$ of a minimal $P$-proof scheme $\mathbb{T}$
such that $concl(\mathbb{T} ) = i$ and $supp (\mathbb{T} ) \subseteq
O_{\overline{\sg}}$; and 
\item[(e)] If $\overline{k} = 2i+1$ and $\sigma(2i) = 1$, then $\sigma(2i+1)=q$
where $q$ is a code for a minimal $P$-proof scheme $\PS$ such that $concl(\PS )
= i$ and there is no number $j < \sigma (k)$ such that $j$ is a code for a
minimal $P$-proof scheme $\mathbb{T}$ with conclusion $i$ such that $supp(\PS)
= supp(\mathbb{T})$. 
\end{compactenum}

The first thing to observe is that each of the conditions (a)-(e) requires that
we check only a bounded number of facts about codes that have an explicit bound
in terms of the code of $\sigma$.  This implies that $T_P$ has a primitive
recursive definition. It is immediate from our conditions defining $T_P$   that
if $\sigma \in T_P$ and $\tau \prec \sigma$, then $\tau\in T_P$. Thus $T_P$ is
a primitive recursive tree. Conditions (a) and (b) ensure that the set of 
all paths $\pi$ through $T_P$ meet the minimal conditions to be of the form
$f_M$ for some stable model.  That is, condition (a) ensures that if $\pi(2i) =
0$, then $\pi(2i+1) =0$.  Condition (b) ensures that if $\pi(2i) =1$, then 
$\pi(2i+1)$ is the code of a minimal $P$-proof scheme with conclusion $i$ 
and there is no smaller code of a minimal $P$-proof scheme of $i$ with the same
support. Conditions (c), (d) and (e) are carefully designed to ensure that
$T_P$ has the properties that we want. First, condition (c) limits the possible
infinite paths through $T_P$.  We claim that if  $\pi$ is an infinite path
through $T_P$ and $\pi(2i)=1$, then $\pi(2i+1)=r$ where $r$ is smallest code 
of minimal $P$-proof scheme with conclusion $i$ whose support does not
intersect $M_{\pi} = \{j:\pi(2j) =1\}$.  That is, if $\pi(2i+1)$ is the code of
minimal $P$-proof scheme with conclusion $i$ whose support is disjoint from
$M_\pi$ which is greater than $r$, then there will be some $k > 2i+1$ such 
that $c \in N_{\lfloor k/2 \rfloor}$ in which case condition (d) 
would not allow $(\pi(0), \ldots, \pi(k+2))$ to be put into $T_P$. Similarly,
if $\pi(2i+1)$ is the code of minimal $P$-proof scheme $\PS$ with conclusion
$i$ whose support is not disjoint from $M_\pi$, then there will be some $k >
2i+1$ such that $supp(\PS) \cap I_{(\pi(0), \ldots, \pi(k))} \neq \emptyset$ in
which case condition (b) would not allow  $(\pi(0), \ldots, \pi(k+2))$ to 
be put into $T_P$.  Likewise, condition (d) ensures that if $\pi(2i)=0$, there
can be no minimal $P$-proof scheme $\PS$ with conclusion $i$ whose support is
disjoint from $M_\pi$ since otherwise for large enough $k$, condition (e) would
not allow $(\pi(0), \ldots, \pi(k))$ to be put into $T_P$. Finally, condition
(e) is designed to ensure that $T_P$ is finitely branching if and only if $P$
has the $\FS$ property or has an explicit initial blocking set.  We note that 
for a node $(\sg(0),\ldots, \sg(2i),\sg(2i+1))$ where $\sg(2i) =1$, $\sg(2i+1)$
can be the code of {\em any} minimal $P$-proof scheme $\PS$ with conclusion $i$
for which there is no smaller number which codes a proof scheme with the same
conclusion and same support.  For example, we do not require  
$supp (\PS ) \cap I_{\sigma} = \emptyset$. However, if  $supp (\PS ) \cap
I_{\sigma} \neq \emptyset$, then condition (b) will ensure that there are no
extensions of $\sg$ in $T$. 

Our next goal is to show that every $f \in [T_P]$ is of the form $f_M$ for a
suitably chosen stable model $M$ of $P$.  It is clear that if $M$ is stable
model of $P$, then for all $k$, $(f_M(0), \ldots, f_M(k))$ satisfies conditions
(a)-(e) so that $f_M \in [T_P]$. Thus $\{f_M: M \in Stab(P)\} \subseteq [T_P]$. 

Next, let us assume that $\beta = (\beta(0),\beta (1),\ldots )$ is an
infinite path through $T_P$ and $M_{\beta} = \{ i: \beta (2i) = 1\}$.
Then we must prove that\\
(I) $M_{\beta}$ is a stable model of $P$ and \\
(II) $f_{(M_{\beta})} = \beta$.

For (I), suppose that $M_{\beta}$ is not a stable model of $P$.  Let
$lm(M_\beta)$ be the least model of Gelfond-Lifschitz transform
$ground(P)_{M_\beta}$ of $ground(P)$ relative to $M_\beta$. Then by Proposition
\ref{keystable}, it must be the case that either 
\begin{compactdesc}
\item[(i)] there is $j\in M_{\beta} \setminus lm(M_{\beta})$, or 
\item[(ii)] there is $j\in lm(M_{\beta})\setminus M_{\beta}$.
\end{compactdesc}

If {\bf (i)} holds, then let $i$ be the least $j\in M_{\beta} \setminus
lm(M_{\beta})$ and consider the string $\beta\res {(2i+3)}= (\beta (0),\ldots
,\beta (2i+3) )$.  For $\beta\res {(2i+3)}$ to be in $T$, it must be the case
that $\beta (2i+1)$ is a code of a minimal proof scheme $\PS$ such that
$concl(\PS ) = i$ and $supp (\PS ) \cap I_{\beta\res {(2i+1)}} = \emptyset$.
But since $i\notin lm(M_{\beta})$, there must be some $n$ belonging to
$M_{\beta}\cap supp(\PS)$. Clearly, it must be the case that $n > i$. Choose
such an $n$. Then $\beta\res 2n\notin T$ because $supp(\PS )\cap I_{\beta\res
2n} \neq \emptyset$, which contradicts our assumption that $\beta \in [T]$.
Thus {\bf (i)} cannot hold. 
 
If {\bf (ii)} holds, then let $i$ be the least $j\in lm(M_{\beta})  \setminus
M_{\beta}$ and consider again  $\beta\res {(2i+3)}$. Since $ i \in
lm(M_\beta)$, there must be a proof scheme $\mathbb{T}$ such $concl(\mathbb{T}
) = j$ and $supp(\mathbb{T}) \cap M_\beta = \emptyset$.  But then there is an
$n > 2i+1$ large enough so that $supp (\mathbb{T} ) \subseteq O_{\beta\res n}$.
But then $\beta\res n$ does not satisfy the condition (e) of our definition to
be in the tree which again contradicts our assumption that $\beta \in [T]$.
Thus {\bf (ii)} also cannot hold so that $M_\beta$ must be a stable model of
$P$. 

Thus we need only to verify claim (II), namely, that $\beta = f_{(M_{\beta})}$.
Now if $\beta \neq f_{(M_{\beta})}$, then it must be that case that for some
$i\in M_{\beta}$, there is a code $c$ of a minimal proof scheme $\PS$ such that
$concl(\PS ) = i$, $supp (\PS )\cap M_{\beta} = \emptyset$ and $c < \beta
(2i+1)$. But then there is an $n > 2i+1$ large enough so that $supp(\PS
)\subseteq O_{\beta\res n }$ and hence $\beta\res n$ does not satisfy condition
$(d)$ of our definition to be in $T$. Hence, if $\beta \neq f_{(M_{\beta})}$,
then $\beta\res n \notin T_P$ for some $n$ and so $\beta \notin [T_P]$. This
completes the proof of (II) and hence part (1) of the theorem holds. 

Next consider parts (2)-(10).  Note that the tree $T_P$ has the property that
if $\beta \in T$ where $\beta$ has length $n$, then \\
($\dag$) for every $i$ such that $2i \leq n$, $\beta (2i) \in
\{ 0,1\}$ and \\
($\ddag$) for every $i$ such that $2i+1 \leq n$, $\beta (2i+1)$ is either $0$
or it is a code of a minimal $P$-proof scheme $\PS$ such that $concl(\PS ) = i$
and no $j < \beta (2i+ 1)$ is the code of a minimal $P$-proof scheme of $i$
with the same conclusion and the same support.

Thus if $P$ has a finite number of supports of minimal $P$-proof schemes for
each $i$, then $T_P$ will automatically be finitely branching.  Next suppose
that  $P$ has the additional property that there is a recursive function $h$
whose value at $i$ encode all the supports of minimal $P$-proof schemes for
$i$.  Say, the possible support of minimal $P$-proof schemes for $i$ are 
$S^i_1, \ldots, S^i_{\ell_i}$. Then for each $1 \leq j \leq \ell_i$, we can
effectively find the smallest code $c^i_j$ of a minimal $P$-proof scheme for 
$i$ with support $S_i$. Thus for each $i$, we can use $h$ to compute $c^i_1,
\ldots, c^i_{\ell_i}$. But then we know  that the possible values of
$\sg(2i+1)$ for any $\sg \in T_P$ must come from $0,c^i_1, \ldots,
c^i_{\ell_i}$ so that  $T_P$ is recursively bounded.  Next observe that 
if $P$ has the $a.a.$ $\FS$ support property, then it will be the case that for
all sufficiently large $i$, there will be only a finite number of supports of
minimal $P$-proof schemes of $i$ so that $T_P$ will be nearly bounded.
Similarly, if $P$ has the $a.a.$ rec. $\FS$ support property, then it will 
be the case that for all sufficiently large $i$, we can effectively  find the
supports of all minimal $P$-proof schemes of $i$ so that as above, we can
effectively find the possible values of $\sg(2i+1)$ and, hence, $T_P$ will be
nearly recursively bounded.

Next, suppose that $P$ does not have the $\FS$ property. Let $i$ be the least
atom such that there exist infinitely many supports of $P$-proof schemes with
conclusion $i$. Now suppose that there is a node $\sg =(\sg(0), \ldots
\sg(2i+1))$ of length $2i+1$ in $T_P$.  It is easy to check that it will also
be the case that $\sg^* =(\sg(0), \ldots, \sg(2i-1),1, r)$ is a node in $T_P$
where $r$ is any code of a minimal $P$-proof scheme $\PS$ of $i$ such that
there is no smaller code $q$ of a minimal $P$-proof scheme $\mathbb{T}$ of $i$
such that $supp(\PS) = supp(\mathbb{T})$.  Thus if $T_P$ has a node of length
$2i+1$, then $T_P$ will not be infinitely branching.  Let us note that if $P$
has a stable model, then $T_P$ has a node of length $2i+1$ so that $T_P$ is
finitely branching if and only if $P$ has the $\FS$ property.  If $T_P$ does
not have any node of length $2i+1$, then it is easy to check that our
conditions ensure that $\{0, \ldots, i-1\}$ is an  explicit initial blocking
set for $P$. Thus $T_P$ is finitely branching if and only $P$ has the $\FS$
property or $P$ has an explicit initial blocking set. 

Let us now suppose that $P$ does not have the $a.a.$ $\FS$ property. Then there
will be infinitely many  $i$ which are codes of ground atoms of $P$ such that
there exist infinitely many supports of $P$-proof schemes with conclusion $i$.
Now suppose that there is a node $\sg =(\sg(0), \ldots \sg(2i+1))$ of length
$2i+1$ in $T_P$.  Then again, $\sg^* =(\sg(0), \ldots, \sg(2i-1),1, r)$ is a
node in $T_P$ where $r$ is any code of a minimal $P$-proof scheme $\PS$ of $i$
such that there is no smaller code $q$ of a minimal $P$-proof scheme
$\mathbb{T}$ of $i$ such that $supp(\PS) = supp(\mathbb{T})$.  Thus if $T_P$
has a node of length $2i+1$, then $T_P$ will have a node of length $2i$ which
has infinitely many successors in $T_P$.  Note that if $P$ has a stable model, 
then $T_P$ has a node of length $2i+1$ for all $i$ so that $T_P$ is nearly
bounded  if and only if $P$ has the $a.a.$ $\FS$ property.  If $T_P$ does not
have any node of length $2i+1$, then it is easy to check that our conditions 
ensure that $\{0, \ldots, i-1\}$ is an initial blocking set for $P$. Thus $T_P$
is nearly bounded if and only $P$ has the $a.a.$ $\FS$ property or $P$ has an
initial blocking set.

Next, assume that $T_P$ is finitely branching. By K\"{o}nig's lemma, either
$T_P$ is finite or $T_P$ has an infinite path.  If $T_P$ has an infinite path, 
then there will be nodes of length $2i+1$ in $T_P$ for all $i$. Hence for each
$i$, there will be nodes of the form $\sg^* =(\sg(0), \ldots, \sg(2i-1),1, r)$
in $T_P$ where $r$ is any code of a minimal $P$-proof scheme $\PS$ of $i$ such
that there is no smaller code $q$ of a minimal $P$-proof scheme $\mathbb{T}$ of
$i$ such that $supp(\PS) = supp(\mathbb{T})$. Thus if $T_P$ is highly
recursive, then for all $i$, we can find all the codes  $r$ of  minimal
$P$-proof schemes $\PS$ of $i$ such that there is no smaller code $q$ of a
minimal $P$-proof scheme $\mathbb{T}$ of $i$ such that $supp(\PS) =
supp(\mathbb{T})$ because we can compute the set of nodes of length $2i+1$ as a
function of $i$.  Thus $T_P$ is highly recursive if and only if $P$ has the rec.
$\FS$ property or $P$ has an explicit initial blocking set.  Similarly, if $P$
has a stable model, then  $T_P$ must have an infinite path so that $T_P$ is
highly recursive if and only if $P$ has the rec. $\FS$ property. 

Next, assume that $T_P$ is nearly bounded. Thus there is an $m \geq 0$ such
that each node of length greater than or equal to $m$ has only finitely many
successors in $T_P$.  If $T_P$ has nodes of length $2i$ for all $i \geq 0$,
there will be nodes of the form $\sg^* =(\sg(0), \ldots, \sg(2i-1),1, r)$  
in $T_P$ where $r$ is a code of a minimal $P$-proof scheme $\PS$ of $i$ such
that there is no smaller code $q$ of a minimal $P$-proof scheme $\mathbb{T}$ of
$i$ such that $supp(\PS) = supp(\mathbb{T})$. Hence if $2i \geq m$, then it
must be the case that there are only finitely many supports of minimal
$P$-proof schemes of the atom $a$ coded by $i$.  Clearly, if $T_P$ has an
infinite path, then there will be nodes of length $2i$ for all $i$, so that $P$
must have the $a.a.$ $\FS$ property. Similarly, if $T_P$ is nearly recursively
bounded and $T_P$ has nodes of length $2i$ for all $i$, then $P$ will have the
$a.a.$ rec. $\FS$ property.  Thus if $T_P$ is nearly bounded, then either there
will be some fixed $n$ such that $T_P$ has no nodes of length $2n$ in which
case $T_P$ has an initial blocking set or $T_P$ has nodes of length $2n$ for
all $n \geq 0$ in which  case $P$ has the $a.a.$ $\FS$ property. Similarly, 
if $T_P$ is nearly recursively bounded, then either there will be some fixed
$n$ such that $T_P$ has no nodes of length $2n$ in which case $T_P$ has an
initial blocking set or $T_P$ has nodes of length $2n$ for all $n \geq 0$ in
which case $P$ has the $a.a.$ rec. $\FS$ property.  Thus $T_P$ is nearly bounded
(nearly recursively bounded) if and only if $P$ has an  initial blocking set or
$P$ has the $a.a.$ $\FS$ property ($a.a.$ rec. $\FS$ property).  In particular,
if $P$ has a stable model, then $T_P$ is nearly bounded (nearly recursively
bounded) if and only if  $P$ has the $a.a.$ $\FS$ property ($a.a.$ rec. $\FS$
property).  Thus parts (2)-(9) of the theorem hold. 

For (10), note that if $P$ is decidable, then for any finite set of ground
atoms $\{a_1, \ldots, a_n\} \subseteq H(P)$ and any finite set of minimal
$P$-proof schemes $\{\PS_1, \ldots, \PS_n\}$ such that $concl(S_i) =a_i$, we
can effectively decide whether there is a stable model of $M$ of $P$ such
that 
\begin{compactdesc}
\item[(A1)] $a_i \in M$ and $\PS_i$ is the smallest minimal $P$-proof scheme $\PS$ 
for $a_i$ 
such that $supp(\PS) \cap M = \emptyset$; and 
\item[(A2)] for any ground atom $b \notin  \{a_1, \ldots, a_n\}$ such 
that the code of $b$ is strictly less than the maximum of the 
codes of $a_1, \ldots, a_n$, $b \notin M$. 
\end{compactdesc}
But this is precisely what we need to decide to determine whether a given node
in $T_P$ can be extended to an infinite path through $T_P$.  Thus if $P$ is
decidable, then $T_P$ is decidable. On the other hand, suppose $T_P$ is
decidable and we are given a set of atoms $\{a_1 < \ldots < a_n\} \subseteq
H(P)$ and any finite set of minimal $P$-proof schemes $\{\PS_1, \ldots,
\PS_n\}$ such that $concl(\PS_i) =a_i$. Then let $\sg =(\sg(0), \ldots,
\sg(2a_n+3))$ be such that $\sg(2a_n+2) = \sg(2a_n+3) =0$ and for $i \leq a_n$,
$\sg(2i) = \sg(2i+1) =0$ if $i \not \in \{a_1 < \ldots < a_n\}$ and $\sg(2i)
=1$ and $\sg(2i+1) = c(\PS_i)$.  Then there is an infinite path of $T_P$ that
passes through $\sg$ if and only if there is a a stable model of $M$ of $P$
such that the conditions {\bf (A1)} and {\bf (A2)} hold.
Thus $P$ is decidable if and only if $T_P$ is decidable. 

\section{Complexity of index sets for finite normal \\ predicate  
logic programs.}
\label{compl}

In this section, we shall prove our results on the complexity of index sets
associated with various properties of finite normal predicate logic programs,
finite normal predicate logic programs which have the $\FS$ property, and
finite normal predicate logic programs which have the recursive $\FS$
property.  We will sometimes call them FSP programs and rec. FSP programs,
respectively.

\begin{theorem} \label{thm:irb}
\begin{compactenum}
\item[(a)]   $\{e: Q_e\ \text{has an initial blocking set}\}$ 
and \\
$\{e: Q_e\ \text{has an explicit initial blocking set}\}$  are $\Sigma^0_2$
complete.

\item[(b)] $\{e: Q_e\ \text{has the rec. $\FS$ property}\}$ 
is $\Sigma^0_3$-complete.
\item[(c)] $\{e: Q_e\ \text{has the $\FS$ property}\}$ is $\Pi^0_3$-complete.
\item [(d)]
$\{e: Q_e\ \text{has the rec. $\FS$ property and is decidable}\}$ 
is $\Sigma^0_3$-complete.
\end{compactenum}
\end{theorem}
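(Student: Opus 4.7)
The plan is to derive the upper bounds by quantifier counting and the lower bounds by composing the effective tree-to-program translation of Theorem~\ref{tree2prog} with the tree index-set completeness results of Theorem~\ref{thm:pirb}.

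For the upper bounds, I would unfold the definitions. The $\FS$ property in~(c) says that each atom has only finitely many inclusion-minimal supports of minimal $P$-proof schemes, equivalently $\forall a\,\exists F\,\forall\PS$ [$\PS$ a minimal $P$-proof scheme for $a \Rightarrow \exists X\in F\,(X\subseteq\mathit{supp}(\PS))$]; with $F$ quantified as a canonical index the inner matrix is recursive, making the whole $\Pi^0_3$. The $\rfs$ property in~(b) adds the existence of a recursive index computing for each atom the canonical index of its (finite) collection of inclusion-minimal supports, giving $\Sigma^0_3$. Part~(d) conjoins with decidability; in the $\rfs$ setting the decision problem becomes tractable, and a standard argument mirroring the complexity of ``$W_e$ is recursive'' places the conjunction in $\Sigma^0_3$. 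For~(a), the initial-blocking-set definition reads $\exists m\,\forall S\subseteq\{0,\dots,m\}$ [(a) recursive $\vee$ (b) $\Sigma^0_1$ $\vee$ (c) $\Pi^0_1$]; the bounded $\forall S$ yields a $\Sigma^0_2$ matrix preserved by $\exists m$, and the ``explicit'' version contributes only a further $\Sigma^0_2$ condition (each atom $\le m$ has $\FS$).

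For the lower bounds in (b), (c), and (d), I would compose the primitive recursive reduction $e \mapsto \text{index of }P_{T_e}$ provided by Theorem~\ref{tree2prog}, parts (3), (2), and (4) respectively, with the $\Sigma^0_3$-completeness of $\{e : T_e \text{ is r.b.}\}$ (Theorem~\ref{thm:pirb}(a)), the $\Pi^0_3$-completeness of $\{e : T_e \text{ is bounded}\}$ (Theorem~\ref{thm:pirb}(d)), and the $\Sigma^0_3$-completeness of $\{e : T_e \text{ is r.b.\ and decidable}\}$ (Theorem~\ref{thm:pirb}(g)). Each equivalence in Theorem~\ref{tree2prog} transfers the hardness directly to the corresponding program index set.

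For the lower bound in~(a), I would reduce $\mathit{Fin}=\{e : W_e \text{ is finite}\}$, a standard $\Sigma^0_2$-complete set. Given $e$, build the primitive recursive tree $T_e$ whose nodes are the strictly increasing sequences $(s_0<\dots<s_{k-1})$ with $|W_{e,s_i}|>i$ for each $i$; then $T_e$ has an infinite path iff $W_e$ is infinite and has bounded depth iff $W_e$ is finite. Apply Theorem~\ref{tree2prog} to $T_e$. If $W_e$ is infinite, $P_{T_e}$ has a stable model $M$, and taking $S=M\cap\{0,\dots,m\}$ violates all three disjuncts of the blocking-set definition by Proposition~\ref{keystable}, so no $m$ witnesses a blocking set. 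If $W_e$ is finite and $T_e$ has no node of length $N+1$, then inspecting the proof of Theorem~\ref{tree2prog} shows that in $P_{T_e}$ the atom $\control(N+1)$ admits only the self-blocking proof scheme from clause~(7), with support $\{\control(N+1)\}$; a short case split on whether $c(\control(N+1)) \in S$ then exhibits an initial blocking set at $m:=c(\control(N+1))$. Since finiteness of $T_e$ also gives $P_{T_e}$ the $\rfs$ property by Theorem~\ref{tree2prog}(3), this blocking set is in fact explicit, settling both variants of~(a) simultaneously. The main obstacle is the casewise verification in the finite case, which however reduces to the same catalogue of minimal $P_T$-proof schemes already assembled in the proof of Theorem~\ref{tree2prog}.
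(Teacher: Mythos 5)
Your treatment of parts (b), (c), (d) and of all the upper bounds is essentially the paper's own argument (write out the definitions for the upper bounds; compose $e\mapsto$ index of $P_{T_e}$ from Theorem \ref{tree2prog} with Theorem \ref{thm:pirb}(a),(d),(g) for hardness), so those are fine. The genuine problem is your lower bound for part (a), specifically for the \emph{explicit} initial blocking set. Your tree $T_e$, with nodes the increasing sequences $(s_0<\dots<s_{k-1})$ satisfying $|W_{e,s_i}|>i$, is \emph{not} finite when $W_e$ is finite and nonempty: every level $k\leq |W_e|$ contains infinitely many nodes (any sufficiently large increasing choice of stages works), so $T_e$ has bounded depth but is infinitely branching. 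Consequently Theorem \ref{tree2prog}(2),(3) gives the opposite of what you claim: $P_{T_e}$ fails even the $\FS$ property, since for $1\leq n\leq |W_e|$ the atom $\control(n)$ has infinitely many inclusion-minimal supports $\{\notpath(c(\tau))\}$, one for each node $\tau$ of length $n$ (and likewise the $\notpath$ atoms via clause (4)). Such atoms sit below your proposed blocking point $m=c(\control(N+1))$, so the first clause of the definition of an \emph{explicit} initial blocking set fails at your $m$, and no argument is given that any other $m$ works. Your case split on $c(\control(N+1))\in S$ does correctly produce a (non-explicit) initial blocking set, and your infinite case (stable model $M$, take $S=M\cap\{0,\dots,m\}$ and apply Proposition \ref{keystable}) is correct, so the reduction proves $\Sigma^0_2$-hardness for ``has an initial blocking set'' but not for ``has an explicit initial blocking set''.

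The gap is fixable, but not by the sentence ``finiteness of $T_e$ gives the $\rfs$ property,'' which is false for your $T_e$. Either replace $T_e$ by a tree that really is finite exactly when $W_e$ is finite --- for instance, nodes the increasing sequences of stages $s$ with $W_{e,s}\setminus W_{e,s-1}\neq\emptyset$, which is literally a finite tree when $W_e$ is finite and has an infinite path when $W_e$ is infinite --- and then your $\control(N+1)$ argument goes through and explicitness follows from Theorem \ref{tree2prog}(3). Alternatively, do what the paper does: it bypasses trees entirely and builds $P_e$ directly from a Horn program computing the predicates $N$ (stages with new elements) and $S$ (consecutive such stages), adding the clauses $E(x)\leftarrow N(x),\neg E(x)$ and $E(x)\leftarrow N(y),S(x,y)$; when $W_e$ is infinite these yield a stable model, and when $W_e$ is finite the last element of $N_e$ produces an inconsistency of the $a\leftarrow\neg a$ type among atoms all of which have only finitely many supports, so the blocking set is automatically explicit. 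Finally, note that your upper-bound sketch for (d) (correctness of the decision procedure) is stated rather loosely; it can be made to work (e.g.\ by also existentially quantifying a recursive bound witnessing the $\rfs$ property, which makes correctness of each answer a $\Pi^0_2$ condition), but as written it is an appeal to analogy rather than an argument.
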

\begin{proof}
In each case, it easy to see that the index set is of the required 
complexity by simply writing out the definition. 

Let $A= \{e: Q_e\ \text{has an explicit initial blocking set}\}$ and 
$Fin$ is the set $\{e:W_e \ \mbox{is} \ \mbox{finite}\}$.  We know that $Fin$
is is $\Sigma^0_2$-complete, \cite{Soare}. Thus to show that $A$ is
$\Sigma^0_2$-complete, we need only to show that $Fin$ is many-one reducible to
$A$. Recall that $W_{e,s}$ is the set of all elements $x$ less than or equal to
$s$ such that $\phi_e(x)$ converges $s$ or fewer steps. It follows that for any
$e$, $N_e = \{s: W_{e,s}-W_{e,s-1} \neq \emptyset\}$ and the set $S_e$ of all 
codes of pairs $(x,y)$ such that $x,y \in N_e$, $x < y$, and there is no $z \in
N_e$ such that $x < z <y$, are recursive sets. Then by Proposition \ref{aux}, we 
can uniformly construct a finite normal predicate logic  Horn program $P_e^{-}$
whose set of atoms is $\{s^n(0): n\geq 0\}$ and which contains two predicates
$N(x)$ and $S(x,y)$ such that $N(s^x(0))$ holds if and only if $x \in N_e$ and 
$S(s^x(0),s^y(0))$ holds if and only if $[x,y] \in S_e$.  Let $E$ be a unary
predicate symbol that does not appear in $P_e^{-}$. Then we let $P_e$ be the
finite normal predicate logic program that consists of 
$P_e^{-}$ and the following two predicate logic clauses:
\begin{compactdesc}
\item[(a)] $E(x) \leftarrow N(x),\neg E(x)$  and 
\item[(b)] $E(x) \leftarrow N(y),S(x,y)$.
\end{compactdesc}
The clauses in {\bf (a)} and {\bf (b)} generate, when grounded, the following 
clauses in $ground(P_e)$:
\begin{compactdesc}
\item[(A)] $E(s^n(0)) \leftarrow N(s^n(0)),\neg E(s^n(0))$ for all $n \geq 0$;
and 
\item[(B)] $E(s^m(0)) \leftarrow N(s^n(0)),S(s^m(0),s^n(0))$ for all 
$m,n \geq 0$.
\end{compactdesc}
Now suppose that $W_e$ is infinite and $N_e = \{n_0 < n_1 < \ldots \}$.  Then
we claim that $P_e$ has a stable model $M_e$ which consists of 
the least model of $P_e^{-}$ plus $\{E(s^{n_i}(0)): i \geq 0\}$.  That is, the
presence of $N(s^n(0))$ in the body of the clauses in ({\bf A}) and the
presence of $N(s^n(0))$ and $S(s^m(0),s^n(0))$ in the body of the clauses in
({\bf B}) ensures that the only atoms of the form $E(a)$ that can possibly be
in any stable model of $P_e$ are of the form $E(s^n(0))$ where $n \in N_e$.
But if $W_e$ is infinite, then the Horn clauses of type ({\bf B}) ensure that 
$\{s^{n_i}(0): i \geq 0\}$ will be in every stable model of $P_e$. This, in 
turn, means  that none of the clauses of type ({\bf A}) for $n \in N_e$ will
contribute to the Gelfond-Lifschitz reduct $(P_e)_{M_e}$. It follows that
$(P_e)_{M_e}$ consists of $P_e^{-}$ plus all the clauses in ({\bf B}) plus all
the clauses of the form $s^n(0) \leftarrow N(s^n(0))$ such that $n \notin N_e$. 
It is then easy to see that $M_e$ is the least model of $(P_e)_{M_e}$ 
so that $M_e$ is a stable model of $P_e$. Thus if $W_e$ is infinite, then 
$P_e$ does not have an explicit initial blocking set. 

Next, suppose that $W_e$ is finite.  Then $N_e$ is finite, say  $N_e = \{n_0 <
\ldots < n_r\}$. Then we will not be able to use a clause of type ({\bf B}) to
derive $E(s^{n_r}(0))$.  Thus the only clause that could possibly derive
$E(s^{n_r}(0))$ would be the clause $$C= E(s^{n_r}(0)) \leftarrow
N(s^{n_r}(0)),\neg E(s^{n_r}(0)).$$ But then there can be no stable model $M$
of $P_e$. That is, if $s^{n_r}(0) \in M$,  then clause $C$ will not be in
$(P_e)_M$ so that there will be no way to derive $E(s^{n_r}(0))$ from
$(P_e)_M$.  On the other hand, if $E(s^{n_r}(0)) \notin M$,  then clause $C$
will contribute the clause $E(s^{n_r}(0)) \leftarrow N(s^{n_r}(0))$ to
$(P_e)_M$ so that $E(s^{n_r}(0))$ will be in the least model of $(P_e)_M$.  It
follows that $\{E(0),E(s(0)), \ldots, E(s^{n_r}(0))\}$ together with all that
atoms of $P_e^{-}$ whose code is less than the code of $E(s^{n_r}(0))$ will be
an explicit initial blocking set for $P_e$.

Thus we have shown that $P_e$ has an explicit initial blocking set if and only
if $W_e$ is finite.  Hence, the recursive function $f$ such that $Q_{f(e)}
=P_e$ shows that $\mathit{Fin}$ is many-one reducible to $A$ and, hence, $A$ is
$\Sigma^0_2$-complete.  The same proof will show that 
$B= \{e: Q_e\ \text{has an initial blocking set}\}$ is $\Sigma^0_2$-complete.

We claim that the completeness of the remaining parts of the theorem 
are all consequences of Theorem \ref{tree2prog}.  That is, recall 
that $T_0,T_1, \ldots $ is an effective list of all primitive recursive 
trees. Then let $g$ be the recursive function such that $Q_{g(e)} = P_{T_e}$
where $P_{T_e}$ is the finite normal predicate logic program constructed from 
$T_e$ as in the proof of Theorem \ref{tree2prog}. Then $g$ shows that 
\begin{compactenum}
\item $\{e:T_e \ \mbox{is r.b.}\}$ is many-one reducible to 
$\{e:Q_e \ \mbox{has the rec. $\FS$ property}\}$; 
\item $\{e:T_e \ \mbox{is bounded}\}$ is many-one reducible to 
$\{e:Q_e \ \mbox{has the $\FS$ property}\}$;
\item $\{e:T_e \ \mbox{is r.b. and decidable}\}$ is many-one reducible to 
$\{e:Q_e \ \mbox{has the rec.}$ $\FS$ property and is decidable$\}$.
\end{compactenum}

Hence the completeness results for parts (b), (c), and (d) immediately 
follow from our completeness results 
for $\{e:T_e \ \mbox{is r.b.}\}$, $\{e:T_e \ \mbox{is bounded}\}$, 
and  $\{e:T_e \ \mbox{is r.b. and decidable}\}$ given in Section \ref{classes}. 
\end{proof}

It is not always the case that the complexity results for finite normal
predicate logic programs match the corresponding complexity for trees.  For
example, K\"{o}nig's Lemma tells us that an infinite finitely branching tree
must have an infinite path through it. It follows that $[T] = \emptyset$ holds
for a primitive recursive finitely branching tree $T$ if and only if $T$ is
finite.  This means the properties that $T$ is bounded and empty and $T$ is
recursively bounded and empty are $\Sigma^0_2$ properties since $T$ being
finite is a $\Sigma^0_2$ predicate for primitive recursive trees. 
K\"{o}nig's Lemma is a form of the Compactness Theorem for propositional 
logic which, we have observed, fails for normal propositional logic programs. 
Indeed, given any finite normal predicate logic program $Q_e$, we can simply
take an atom $a$ which does not occur in $ground(Q_e)$ and add the clause $C =
a \leftarrow \neg a$. Then the program $Q_e \cup \{C\}$ does not have a stable
model but will have the $\FS$ property if and only if $Q_e$ has the $\FS$
property and will have the rec. $\FS$ property if and only if  $Q_e$ has the
rec. $\FS$ property.  Thus there is a recursive function $h$ such that 

\begin{compactenum}
\item $Q_{h(e)}$ does not have stable model, 
\item $Q_e$ has the $\FS$ property if and only if $Q_{h(e)}$ has the $\FS$
property, and 
\item $Q_e$ has the rec. $\FS$ property if and only if $Q_{h(e)}$ has the rec.
$\FS$ property.  
\end{compactenum}

It follows that 
$\{e: Q_e \ \mbox{has the $\FS$ property}\}$ is many-one reducible to 
$\{e: Q_e \ \mbox{has the $\FS$ property}$ and $\mathit{Stab}(Q_e)$ $=
\emptyset\}$ and $\{e: Q_e \ \mbox{has the rec. $\FS$ pro}$-perty$\}$ is
many-one reducible to 
$\{e: Q_e \ \mbox{has the rec. $\FS$ property and}\  \mathit{Stab}(Q_e) =
\emptyset\}$.  Thus it follows from  Theorem \ref{thm:pirb} that   

\begin{compactenum} 
\item $\{e: Q_e \ \mbox{has the $\FS$ property and 
$\mathit{Stab}(Q_e) = \emptyset$} \}$ is $\Pi^0_3$-complete 
and
\item $\{e: Q_e \ \mbox{has the rec. $\FS$ property and}\ \mathit{Stab}(Q_e) =
\emptyset \}$ is $\Sigma^0_3$-complete.
\end{compactenum}

To see that $\{e: Q_e \ \mbox{has the rec. $\FS$ property and} \
\mathit{Stab}(Q_e) = \emptyset \}$ is
$\Pi^0_3$, we can appeal to Theorem \ref{prog2trees} which constructs 
a finitely branching tree $T_{Q_e}$ such that there is a one-to-one effective 
degree preserving correspondence between the stable models of $Q_e$ and
$[T_{Q_e}]$.  It follows that $Q_e$ has the $\FS$ property and no stable models
if and only if $Q_e$ has the $\FS$ property and $T_{Q_e}$ is finite.  This
latter predicate is a  $\Pi^0_3$ predicate because $Q_e$ having the $\FS$
property is $\Pi^0_3$ predicate and $T_{Q_e}$ being finite is a $\Sigma^0_2$
predicate. Similarly, $Q_e$ has the rec. $\FS$ property and has no stable
models if and only if $Q_e$ has the rec. $\FS$ property and $T_{Q_e}$ is finite
which is a  $\Sigma^0_3$ predicate because $Q_e$ having the rec. $\FS$ property
is $\Sigma^0_3$ predicate and $T_{Q_e}$ being finite is a $\Sigma^0_2$
predicate. Thus we have proved the following theorem. 

\begin{theorem}\label{thm:irbb}
\begin{compactenum}
\item[(a)] $\{e: Q_e \ \text{has the rec. $\FS$ property and } \mathit{Stab}(Q_e) =
\emptyset \}$ is
$\Sigma^0_3$-complete; and  
\item[(b)] $\{e: Q_e \ \text{has the $\FS$ property and} \ \mathit{Stab}(Q_e) =
\emptyset \}$ is $\Pi^0_3$-complete.  
\end{compactenum}
\end{theorem}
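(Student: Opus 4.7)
Both halves of the argument are essentially sketched in the paragraphs preceding the theorem; the task is to organize them and verify the complexity bookkeeping. The key tool for the hardness half is the uniform gadget described in that discussion: given any $Q_e$, pick an atom $a \notin H(Q_e)$ and let $Q_{h(e)} = Q_e \cup \{a \leftarrow \neg a\}$. The Gelfond--Lifschitz argument given there shows that $Q_{h(e)}$ never has a stable model, while the new clause contributes only the single minimal proof scheme $\la \la a \leftarrow \neg a, a\ra, \{a\}\ra$ for the fresh atom $a$. Consequently $Q_e$ has the $\FS$ (resp.\ rec.\ $\FS$) property if and only if $Q_{h(e)}$ does, and $h$ is clearly recursive. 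Thus $h$ witnesses a many-one reduction of $\{e : Q_e \text{ has the } \FS \text{ property}\}$ to the set in part (b), and of $\{e : Q_e \text{ has the rec.\ } \FS \text{ property}\}$ to the set in part (a). The hardness halves of (b) and (a) then follow from Theorem \ref{thm:irb}(c) ($\Pi^0_3$-completeness of the $\FS$ property) and Theorem \ref{thm:irb}(b) ($\Sigma^0_3$-completeness of the rec.\ $\FS$ property), respectively.

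For the upper bounds I would route through the tree $T_{Q_e}$ produced by Theorem \ref{prog2trees}. Part (1) of that theorem gives a degree-preserving bijection between $\mathit{Stab}(Q_e)$ and $[T_{Q_e}]$, so $\mathit{Stab}(Q_e) = \emptyset$ is equivalent to $[T_{Q_e}] = \emptyset$. Part (2) tells us that whenever $Q_e$ has the $\FS$ property the tree $T_{Q_e}$ is bounded, hence finitely branching, so by K\"onig's Lemma $[T_{Q_e}] = \emptyset$ is then equivalent to $T_{Q_e}$ being finite. Since the map $e \mapsto T_{Q_e}$ is uniformly primitive recursive, the condition ``$T_{Q_e}$ is finite'' is a $\Sigma^0_2$ predicate in $e$ (``there exists $n$ such that no string of length $\geq n$ lies in $T_{Q_e}$''). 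Conjoining the $\Pi^0_3$ predicate ``$Q_e$ has the $\FS$ property'' (Theorem \ref{thm:irb}(c)) with this $\Sigma^0_2$ predicate yields a $\Pi^0_3$ description, giving the upper bound for (b); conjoining the $\Sigma^0_3$ predicate ``$Q_e$ has the rec.\ $\FS$ property'' (Theorem \ref{thm:irb}(b)) with the same $\Sigma^0_2$ predicate yields a $\Sigma^0_3$ description, giving the upper bound for (a).

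There is no genuinely hard step here: the passage from $P$ to $T_P$ is explicit, K\"onig's Lemma is classical, and the crucial observation is simply that under the $\FS$ hypothesis the equivalence ``$\mathit{Stab}(Q_e) = \emptyset$ iff $T_{Q_e}$ is finite'' holds. This is what collapses the complexity from the $\Sigma^1_1$/$\Pi^1_1$ level (needed in general to detect $[T] = \emptyset$) down to the arithmetical levels $\Pi^0_3$ and $\Sigma^0_3$, at which the matching lower bounds supplied by the gadget already live.
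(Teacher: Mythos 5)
Your proposal is correct and follows essentially the same route as the paper: the hardness half uses the same fresh-atom gadget $a \leftarrow \neg a$ to reduce the $\FS$ (resp.\ rec.\ $\FS$) index sets of Theorem \ref{thm:irb} to the sets in question, and the upper bounds come, exactly as in the paper, from Theorem \ref{prog2trees} together with K\"onig's Lemma, so that under the $\FS$ hypothesis $\mathit{Stab}(Q_e)=\emptyset$ is equivalent to the $\Sigma^0_2$ condition that $T_{Q_e}$ is finite. No gaps.
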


The method of proof for parts (b), (c), and (d) in Theorem \ref{thm:irb} can be
used to prove many  results about properties of stable models of finite normal
predicate logic programs $Q_e$ where $Stab(Q_e)$ is not empty.  That is, one
can prove that the desired index set is in the proper complexity class by
simply writing out the definition or by using Theorem \ref{prog2trees}. For
example, Theorem 2.6 (b) says that $\{e: T_e \ \text{is $r.b.$ and $[T_e] \neq
\emptyset$}\}$ is $\Sigma^0_3$-complete.  We claim that this theorem
immediately implies that $\{e: Q_e \ \text{has the rec. $\FS$ property  and }
\mathit{Stab}(Q_e) \neq \emptyset\}$ is also  $\Sigma^0_3$-complete. First we
claim that the fact that $\{e: Q_e$ has the rec. $\FS$ property and
$\mathit{Stab}(Q_e) \neq \emptyset \}$ is  $\Sigma^0_3$ follows from Theorem
\ref{prog2trees}.  That is, by Theorem \ref{prog2trees} $Q_e$ has the rec.
$\FS$ property and $\mathit{Stab}(Q_e) \neq \emptyset$ if and only if $T_{Q_e}$
is $r.b.$ and $[T_{Q_e}]$ is nonempty. But this latter question is $\Sigma^0_3$
question so the former question is a $\Sigma^0_3$ question. Thus Theorem
\ref{prog2trees} allows us to reduce complexity bounds  about finite normal
predicate logic programs $P$ which have stable models to complexity bounds of
their corresponding trees $T_P$ where $[T_P]$ is nonempty.  Then we can then
use Theorem \ref{tree2prog} and the theorems on index sets for trees given in
Section \ref{classes} to establish the necessary completeness results. For
example,  to show that $\{e: Q_e \ \text{has the rec. $\FS$ property  and 
$\mathit{Stab}(Q_e) \neq \emptyset$} \}$ is $\Sigma^0_3$-complete,  we use
Theorem \ref{tree2prog} and the fact that $\{e:
T_e \ \text{is $r.b.$ and $[T_e]$ is nonempty}\}$ is $\Sigma^0_3$-complete. 
That is, it follows from the proof of  Theorem \ref{tree2prog} that 
there is a recursive function $f$ such that 
$Q_{f(e)} = P_{T_e}$. Hence 
\begin{multline*}
e \in \{h: T_h \ \text{is $r.b.$ and $[T_h]$ is nonempty}\}
\iff  \\
f(e) \in \{g: Q_g \ \text{has the rec. $\FS$ property  and}\ \mathit{Stab}(Q_g)
\neq \emptyset\}.
\end{multline*}
Thus $\{e: Q_e \ \text{has the rec. $\FS$ property  and $\mathit{Stab}(Q_e) \neq
\emptyset$}\}$ is $\Sigma^0_3$-complete. 

One can use the same techniques to prove that the  following theorem  
follows from the corresponding index sets results on 
trees given in Section \ref{classes}. 

\begin{theorem} \label{thm:ire} 
\begin{compactenum}
\item[(a)] $\{e: Q_e \ \text{has the  rec. $\FS$ property and
$\mathit{Stab}(Q_e) \neq \emptyset$} \}$
is \\ 
$\Sigma^0_3$-complete.
\item[(b)] $\{e: Q_e \ \text{has the $\FS$ property  and 
$\mathit{Stab}(Q_e) \neq \emptyset$} \}$ is $\Pi^0_3$-complete.   
\item[(c)] $\{e: \mathit{Stab}(Q_e) \neq \emptyset \}$ is
$\Sigma_1^1$-complete. 
\end{compactenum}
\end{theorem}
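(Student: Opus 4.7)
The plan is to prove each part by sandwiching the index set between the corresponding tree index set in both directions, using Theorem \ref{prog2trees} for upper bounds and Theorem \ref{tree2prog} for lower bounds. Throughout, let $f$ be the recursive function furnished by Theorem \ref{tree2prog} so that $Q_{f(e)} = P_{T_e}$, and let $g$ be the recursive function furnished by Theorem \ref{prog2trees} so that $T_{g(e)} = T_{Q_e}$.

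For part (a), I would first argue the upper bound. By Theorem \ref{prog2trees}(1), $\mathit{Stab}(Q_e) \neq \emptyset$ iff $[T_{Q_e}]$ is nonempty, and by Theorem \ref{prog2trees}(5), under the assumption that $Q_e$ has a stable model, $Q_e$ has the $\rfs$ property iff $T_{Q_e}$ is recursively bounded. Hence $\{e: Q_e \text{ has the } \rfs \text{ property and } \mathit{Stab}(Q_e) \neq \emptyset\}$ is many-one reducible via $g$ to $\{h: T_h \text{ is } r.b. \text{ and } [T_h] \neq \emptyset\}$, which is $\Sigma^0_3$ by Theorem \ref{thm:pire}(b). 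For the lower bound, Theorem \ref{tree2prog}(3) gives that $P_{T_e}$ has the $\rfs$ property iff $T_e$ is recursively bounded, and Theorem \ref{tree2prog}(1) gives $\mathit{Stab}(P_{T_e}) \neq \emptyset$ iff $[T_e] \neq \emptyset$. Therefore $f$ reduces the $\Sigma^0_3$-complete set $\{e: T_e \text{ is } r.b. \text{ and } [T_e] \neq \emptyset\}$ (Theorem \ref{thm:pire}(b)) to our set, establishing $\Sigma^0_3$-completeness.

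Part (b) proceeds identically except that Theorem \ref{prog2trees}(3) and Theorem \ref{tree2prog}(2) replace ``recursively bounded'' by ``bounded,'' and the reference becomes Theorem \ref{thm:pire}(d), giving $\Pi^0_3$-completeness. Part (c) is even more direct: by Theorem \ref{prog2trees}(1) applied via $g$, $\mathit{Stab}(Q_e) \neq \emptyset$ iff $[T_{Q_e}] \neq \emptyset$, so membership is $\Sigma^1_1$ by Theorem \ref{thm:pire}(g); and by Theorem \ref{tree2prog}(1) applied via $f$, the $\Sigma^1_1$-complete set $\{e: [T_e] \neq \emptyset\}$ reduces to $\{e: \mathit{Stab}(Q_e) \neq \emptyset\}$.

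The argument is essentially bookkeeping once Theorems \ref{tree2prog} and \ref{prog2trees} are in hand, so there is no serious obstacle; the only point requiring a touch of care is verifying that the ``if $P$ has a stable model'' hypotheses in parts (3) and (5) of Theorem \ref{prog2trees} are available exactly when we need them, namely in the upper-bound direction where we have already conjoined $\mathit{Stab}(Q_e) \neq \emptyset$ into the defining condition. Once that is noted, the equivalences are biconditional on both the program side and the tree side, which is precisely what allows both the upper-bound reduction via $g$ and the lower-bound reduction via $f$ to go through cleanly.
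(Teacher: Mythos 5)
Your proposal is correct and follows essentially the same route as the paper: the paper likewise uses the program-to-tree map of Theorem \ref{prog2trees} to reduce each index set to the corresponding tree index set of Theorem \ref{thm:pire} for the upper bound, and the tree-to-program map of Theorem \ref{tree2prog} to pull back completeness. Your remark about the ``if $P$ has a stable model'' hypotheses being supplied by the conjunct $\mathit{Stab}(Q_e)\neq\emptyset$ is exactly the observation implicit in the paper's argument.
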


Since  $\{e: Q_e \ \text{$\mathit{Stab}(Q_e)$ is empty} \}$ is the complement 
of the $\Sigma^1_1$-complete set $\{e:  \text{$\mathit{Stab}(Q_e)
\neq \emptyset$} 
\}$, we have the following corollary. 
\begin{corollary} 
$\{e: \ \text{$\mathit{Stab}(Q_e) = \emptyset$} \}$ is $\Pi_1^1$-complete. 
\end{corollary}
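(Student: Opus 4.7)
The plan is to derive the corollary directly from the complementary completeness result of Theorem \ref{thm:ire}(c), which asserts that $A := \{e : \mathit{Stab}(Q_e) \neq \emptyset\}$ is $\Sigma^1_1$-complete. The target set $B := \{e : \mathit{Stab}(Q_e) = \emptyset\}$ is exactly the complement $\omega \setminus A$, so the membership half is immediate: since $A \in \Sigma^1_1$, its complement $B$ lies in $\Pi^1_1$.

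For completeness, I would invoke the standard fact that the many-one complement of a $\Sigma^1_1$-complete set is $\Pi^1_1$-complete. In detail: let $C$ be an arbitrary $\Pi^1_1$ set. Then $\omega \setminus C \in \Sigma^1_1$, so by the $\Sigma^1_1$-completeness of $A$ there is a recursive function $f$ with $n \in \omega \setminus C \iff f(n) \in A$. Taking complements on both sides yields $n \in C \iff f(n) \in B$, so $f$ witnesses $C \leq_m B$. Hence $B$ is $\Pi^1_1$-hard, and combined with $B \in \Pi^1_1$ it is $\Pi^1_1$-complete.

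Alternatively, one could give a direct reduction using Theorem \ref{tree2prog} together with Theorem \ref{thm:pire}(g), which states that $\{e : [T_e] = \emptyset\}$ is $\Pi^1_1$-complete: the recursive function $g$ with $Q_{g(e)} = P_{T_e}$ satisfies $[T_e] = \emptyset \iff \mathit{Stab}(Q_{g(e)}) = \emptyset$ by the degree-preserving correspondence of Theorem \ref{tree2prog}(1), giving the desired reduction directly. Either route works and there is no real obstacle; the only thing to be careful about is confirming that the reduction used to prove Theorem \ref{thm:ire}(c) is a many-one (in fact one-one) reduction so that complementation preserves the reduction, which is clear from the construction in Theorem \ref{tree2prog}.
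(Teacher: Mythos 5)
Your primary argument is exactly the paper's proof: the paper obtains the corollary by observing that $\{e:\mathit{Stab}(Q_e)=\emptyset\}$ is the complement of the $\Sigma^1_1$-complete set $\{e:\mathit{Stab}(Q_e)\neq\emptyset\}$ from Theorem \ref{thm:ire}(c), and complements of $\Sigma^1_1$-complete sets are $\Pi^1_1$-complete, precisely the standard fact you spell out. Your argument is correct and matches the paper's route; the alternative direct reduction via Theorems \ref{tree2prog} and \ref{thm:pire}(g) is a fine extra but unnecessary.
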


Next we want to consider the properties of $Stab(Q_e)$ being infinite or
finite. 

\begin{theorem} \label{thm:ici}
\begin{compactenum}
\item[(a)] $\{e: Q_e\ \text{has the rec. $\FS$ property and $Stab(Q_e)$ is
infinite}\}$ is $D^0_3$-complete 
and $\{e: Q_e\ \text{has the rec. $\FS$ property and $Stab(Q_e)$ is
fini}$-{\em te}$\}$ is $\Sigma^0_3$-complete.  
\item[(b)] $\{e: Q_e\ \text{has the $\FS$ property and $Stab(Q_e)$
is infinite}\}$ is $\Pi^0_4$-complete and \\
$\{e: Q_e\
\text{has the $\FS$ property and $Stab(Q_e)$ finite}\})$ is 
$\Sigma^0_4$-complete. 
\item[(c)] $\{e: Stab(Q_e)\ \text{is infinite}\}$ is 
$\Sigma_1^1$-complete. 
$\{e: Stab(Q_e)\ \text{is
finite}\})$ is $\Pi_1^1$-complete.
\end{compactenum}
\end{theorem}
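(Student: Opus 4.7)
The plan is to derive Theorem \ref{thm:ici} from the corresponding tree index-set results in Theorem \ref{thm:pici} by pushing each complexity bound across the correspondences of Theorems \ref{tree2prog} and \ref{prog2trees}, in the same style already used for Theorem \ref{thm:ire}.

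For the upper bounds, fix the uniform recursive map $e \mapsto T_{Q_e}$ supplied by Theorem \ref{prog2trees}, whose set of infinite paths is in one-to-one degree-preserving bijection with $\Stab(Q_e)$; in particular $\Stab(Q_e)$ is infinite (resp.\ finite) iff $[T_{Q_e}]$ is infinite (resp.\ finite). For the infinite half of (a), Theorem \ref{prog2trees}(4) forces $T_{Q_e}$ to be $r.b.$ whenever $Q_e$ has the rec.\ $\FS$ property, while conversely infinitude of $[T_{Q_e}]$ makes $\Stab(Q_e)$ nonempty so that Theorem \ref{prog2trees}(5) recovers the rec.\ $\FS$ property from $T_{Q_e}$ being $r.b.$ Hence
\[
\{e : Q_e \text{ has rec.\ } \FS \text{ and } \Stab(Q_e) \text{ infinite}\} = \{e : T_{Q_e} \text{ is } r.b.\ \text{and}\ [T_{Q_e}] \text{ is infinite}\},
\]
which is $D^0_3$ by Theorem \ref{thm:pici}(a). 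For the finite half, this identification fails because $[T_{Q_e}]$ may be empty when $Q_e$ has only an explicit initial blocking set; I therefore write the set as the intersection
\[
\{e : Q_e \text{ has rec.\ } \FS\} \;\cap\; \{e : T_{Q_e} \text{ is } r.b.\ \text{and}\ [T_{Q_e}] \text{ is finite}\},
\]
which is an intersection of two $\Sigma^0_3$ sets, by Theorem \ref{thm:irb}(b) and Theorem \ref{thm:pici}(a), hence is $\Sigma^0_3$. Part (b) is handled identically, using parts (2) and (3) of Theorem \ref{prog2trees}, the $\Pi^0_3$ bound on $\{e : Q_e \text{ has } \FS\}$ from Theorem \ref{thm:irb}(c), and Theorem \ref{thm:pici}(c). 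Part (c) is immediate from the bijection combined with Theorem \ref{thm:pici}(e): the ``infinite'' set is $\Sigma^1_1$ and the ``finite'' set is $\Pi^1_1$.

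For completeness, let $f$ be the recursive function with $Q_{f(e)} = P_{T_e}$ provided by Theorem \ref{tree2prog}. Its parts (1), (2), and (3) yield
\[
[T_e] \text{ (in)finite} \iff \Stab(Q_{f(e)}) \text{ (in)finite}, \quad T_e \text{ is } r.b. \iff Q_{f(e)} \text{ has rec.\ } \FS, \quad T_e \text{ bounded} \iff Q_{f(e)} \text{ has } \FS.
\]
Consequently $f$ many-one reduces each of the six tree index sets appearing in parts (a), (c), and (e) of Theorem \ref{thm:pici} to the matching program index set in parts (a), (b), and (c) of Theorem \ref{thm:ici}, transferring the $D^0_3$-, $\Sigma^0_3$-, $\Pi^0_4$-, $\Sigma^0_4$-, $\Sigma^1_1$-, and $\Pi^1_1$-completeness results verbatim.

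The one nontrivial point is the finite subcases of (a) and (b), where the naive equivalence between a program and a tree index set breaks down because $\Stab(Q_e) = \emptyset$ can arise with an explicit initial blocking set rather than a rec.\ $\FS$ (resp.\ $\FS$) property. The intersection representation above absorbs this at no cost, since $\{e : Q_e \text{ has rec.\ } \FS\}$ is already $\Sigma^0_3$ and $\{e : Q_e \text{ has } \FS\}$ is already $\Pi^0_3 \subseteq \Sigma^0_4$, exactly the target classes.
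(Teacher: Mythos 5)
Your proposal is correct, and its hardness half coincides with the paper's: both use the recursive $f$ with $Q_{f(e)} = P_{T_e}$ from Theorem \ref{tree2prog} to pull the completeness results of Theorem \ref{thm:pici} over to programs. Where you genuinely diverge is in the upper bounds. The paper does not work with $T_{Q_e}$ directly; it first pads $Q_e$ into an auxiliary program $R_e$ (adding a new atom $a$ to every clause body together with $a \leftarrow \neg \bar{a}$ and $\bar{a} \leftarrow \neg a$), so that $\Stab(R_e)$ is always nonempty, $\Stab(Q_e)$ is finite iff $\Stab(R_e)$ is, and the $\FS$ (rec.\ $\FS$) property transfers; then parts (3) and (5) of Theorem \ref{prog2trees} apply to $R_e$ without the ``or explicit initial blocking set'' caveat, and the tree bounds of Theorem \ref{thm:pici} give the upper bounds. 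You instead observe that for the ``infinite'' sets the caveat is vacuous (infinitely many paths forces a stable model, so parts (3)/(5) recover the $\FS$/rec.\ $\FS$ property from boundedness of $T_{Q_e}$), and for the ``finite'' sets you absorb the empty case by intersecting with $\{e : Q_e \text{ has the rec.\ } \FS \text{ property}\}$ (resp.\ the $\FS$ set), whose $\Sigma^0_3$ (resp.\ $\Pi^0_3$) bound from Theorem \ref{thm:irb} keeps the intersection inside $\Sigma^0_3$ (resp.\ $\Sigma^0_4$). Both routes are sound; the paper's $R_e$ trick is a single uniform device it reuses for later theorems (e.g.\ the $\card(\Stab(Q_e)) \leq c$ cases and the recursive-cardinality results), whereas your intersection argument is more economical here, needing no program transformation but leaning on the previously established complexity of the $\FS$/rec.\ $\FS$ index sets and on the closure of the relevant classes under recursive preimages and finite intersections.
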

\begin{proof}
To prove the upper bounds in each case, we do the following. Given a finite
normal predicate logic program $Q_e$, let $a$ and $\bar{a}$ be two atoms which
do not occur in $ground(P)$. Then let $R_e$ be the finite normal predicate
logic program which arises from $Q_e$ by adding $a$ to body of every clause in
$Q_e$ and adding the following two clauses:\\
$C_1 = a \leftarrow \neg \bar{a}$ and \\
$C_2 = \bar{a} \leftarrow \neg a$.\\
Then we claim that exactly one of $a$ or $\bar{a}$ must be in every stable
model $M$ of $R_e$. That is, if neither $a$ or $\bar{a}$ are in $M$, then $C_1$
and $C_2$ will contribute $a \leftarrow$ and $\bar{a} \leftarrow$ to $(R_e)_M$
so that both $a$ and $\bar{a}$ will be in the least model of  $(R_e)_M$. 
If both $a$ and $\bar{a}$ are in $M$, then $C_1$ and $C_2$ will contribute
nothing to  $(R_e)_M$ so that neither $a$ nor $\bar{a}$ will be in the least
model of  $(R_e)_M$ since then there will be no clauses of  $(R_e)_M$ with
either $a$ or $\bar{a}$ in the head of the clause. It follows that $R_e$ will
have two types of stable models $M$, namely $M = \{\bar{a}\}$ or $M = M^* \cup
\{a\}$ where $M^*$ is stable model of $Q_e$. The modified program $R_e$ is
guaranteed to have a finite stable model, and in particular $\mathit{Stab}(R_e)
\neq \emptyset$.  Because of the form of stable models of $R_e$, $Stab(Q_e)$ is
finite if and only if $Stab(R_e)$ is finite. Clearly $Q_e$ has the $\FS$ (rec.
$\FS$) property if and only if $R_e$ has the $\FS$ (rec. $\FS$) property.  
By Theorem \ref{prog2trees}, there is a recursive function $g$ such that
$T_{g(e)} = T_{R_e}$ as constructed in the proof of Theorem \ref{prog2trees}.
Then we know $Stab(R_e)$ is finite if and only if $[T_{g(e)}]$ is finite and 
$R_e$ has the $\FS$ (rec. $\FS$) property if and only if $T_{g(e)}$ is
recursively bounded. Then for example, it follows that $\{e:Q_e $\ has the
$\FS$ property and $\mathit{Stab}(Q_e)$ is finite$\}$ is many-one reducible to
$\{e:T_e \ \mbox{is $r.b$ and $[T_e]$ is finite}\}$ which is $\Sigma^0_3$. In
this way, the complexity bounds follows from the complexity bounds in Theorem
\ref{thm:pici}.

To establish the completeness results in each case, we can proceed as follows.
We can use the construction of Theorem \ref{tree2prog} to construct a finite
normal predicate logic program $P_{T_e}$ such that $[T_e]$ is finite if and
only if $Stab(P_{T_e})$ is finite and $T_e$ is bounded ($r.b.$) if and only if
$P_{T_e}$ has the $\FS$ (rec. $\FS$) property. Thus there is a recursive
function $f$ such that $Q_{f(e)} = P_{T_e}$.   Then, for example, it follows
that $f$ shows that $\{e:T_e \ \mbox{is $r.b$  and $[T_e]$ is finite}\}$ is 
many-one reducible to $\{e:Q_e$ has the $\FS$ property and 
$\mathit{Stab}(Q_e)$ is finite$\}$.  Thus $\{e:Q_e$ has the $\FS$ property and 
$\mathit{Stab}(Q_e)$ is finite$\}$ is $\Sigma^0_3$-complete 
since $\{e:T_e \ \mbox{is $r.b$  and $[T_e]$ is finite}\}$ is 
$\Sigma^0_3$-complete.  In this way, we can use completeness results of Theorem
\ref{thm:pici} to establish the completeness of each part of the theorem. 
\end{proof}

By combining the completeness results of Theorem 2.9 with Theorems
\ref{tree2prog}, and \ref{prog2trees}, we can use the same method of proof to
prove the following theorem. 

\begin{theorem} \label{thm:icu} 
$\{e: Stab(Q_e)\ \text{is uncountable}\}$ is $\Sigma^1_1$-complete and \\ 
$\{e: Stab(Q_e)\ \text{is countable}\}$ and $\{e: Q_e\ \text{is countable
infinite}\}$ are $\Pi^1_1$-complete. \\
The same results hold for rec. FSP and FSP programs.   
\end{theorem}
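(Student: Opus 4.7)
The plan is to prove Theorem \ref{thm:icu} by exactly the template used in the proof of Theorem \ref{thm:ici}: combine the one-to-one, degree-preserving correspondences of Theorems \ref{tree2prog} and \ref{prog2trees} with the tree index-set results of Theorem \ref{thm:picu}. The key observation is that since the cardinality properties in question are invariant under bijection, the correspondences between stable models and infinite paths immediately transfer cardinality information in both directions.

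For the upper bounds, I would apply Theorem \ref{prog2trees} to obtain a recursive function $g$ with $T_{g(e)} = T_{Q_e}$ and a one-to-one degree-preserving correspondence between $\mathit{Stab}(Q_e)$ and $[T_{Q_e}]$. Thus $\mathit{Stab}(Q_e)$ is uncountable (respectively countable, countably infinite) iff $[T_{Q_e}]$ is, which by Theorem \ref{thm:picu} is a $\Sigma^1_1$ (respectively $\Pi^1_1$, $\Pi^1_1$) condition. For the FSP- and rec.\ FSP-restricted versions, we conjoin a $\Pi^0_3$ or $\Sigma^0_3$ condition from Theorem \ref{thm:irb}, which is absorbed by the $\Sigma^1_1$ or $\Pi^1_1$ class.

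For the completeness results, I would use Theorem \ref{tree2prog} to produce a recursive $f$ with $Q_{f(e)} = P_{T_e}$, so that $|[T_e]| = |\mathit{Stab}(P_{T_e})|$ and, moreover, $T_e$ is bounded (respectively recursively bounded) iff $P_{T_e}$ has the $\FS$ (respectively $\rfs$) property. Since Theorem \ref{thm:picu} states $\Sigma^1_1$-completeness of $\{e : [T_e]\text{ is uncountable}\}$ and $\Pi^1_1$-completeness of $\{e : [T_e]\text{ is countable}\}$ and of $\{e : [T_e]\text{ is countably infinite}\}$, with the same completeness holding when restricted to bounded or $r.b.$ primitive recursive trees, the reduction $e \mapsto f(e)$ transfers each of these hardness results to the three target index sets, in each of the three program classes (unrestricted, $\FS$, rec.\ $\FS$).

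The only delicate point — and essentially the main obstacle — is that the completeness has to be witnessed by trees that already lie in the restricted category (bounded or $r.b.$), so that $f(e)$ lands in the corresponding restricted program class. Theorem \ref{thm:picu} is stated precisely to supply these restricted versions, so no additional construction is needed; one simply verifies that the function $f$ from Theorem \ref{tree2prog} is a many-one reduction in each case, using parts (2) and (3) of Theorem \ref{tree2prog} to match the boundedness properties and part (1) to match the cardinalities of $[T_e]$ and $\mathit{Stab}(P_{T_e})$.
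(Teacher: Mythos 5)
Your proposal is correct and follows essentially the paper's own route: the paper obtains this theorem exactly as you do, by combining Theorems \ref{tree2prog} and \ref{prog2trees} with the tree index-set results of Theorem \ref{thm:picu}, using ``the same method of proof'' as Theorem \ref{thm:ici}. The only (harmless) deviation is in the upper bounds, where you absorb the FS/rec.\ FS conjunct directly as a $\Pi^0_3$/$\Sigma^0_3$ condition rather than routing everything through the auxiliary program $R_e$ of Theorem \ref{thm:ici}; at the $\Sigma^1_1$/$\Pi^1_1$ level this makes no difference.
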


\begin{theorem} \label{thm:irc}
For every positive integer $c$,
\begin{compactenum} 
\item[(a)] $\{e: Q_e$ has the rec. $\FS$ property and $Card(Stab(Q_e)) > c\}$,\\
$\{e: Q_e$ has the rec. $\FS$ property, and 
$Card(Stab(Q_e)) \leq c\}$, and \\
 $\{e: Q_e$ has the rec. $\FS$ property and $Card(Stab(Q_e)) = c\}$ 
are all \\
$\Sigma^0_3$-complete.
\item[(b)]
$\{e: Q_e$ has the $\FS$ property and $Card(Stab(Q_e)) \leq c\}$ and \\
$\{e: Q_e$ has the $\FS$ property and $Card(Stab(Q_e)) = 1\}$
are both  \\
$\Pi^0_3$-complete.
\item[(c)]
$\{e: Q_e$ has the $\FS$ property and $Card(Stab(Q_e)) > c\}$
and \\
$\{e: Q_e$ has the $\FS$ property and $Card(Stab(Q_e)) = c+1\}$
are both \\
$D^0_3$-complete.
\item[(d)] 
$\{e: Q_e$ has the rec. $\FS$ property, is decidable, and $Card(Stab(Q_e)) >
c\}$, $\{e: Q_e$ has the rec. $\FS$ property, is decidable,  and
$Card(Stab(Q_e)) \leq c\}$, and  $\{e: Q_e$ has the rec. $\FS$ property, is
decidable,  and $Card(Stab(Q_e))$ $= c\}$ are all $\Sigma^0_3$-complete.
\item[(e)]
$\{e:Card(Stab(Q_e)) > c\}$ is $\Sigma_1^1$-complete, while
$\{e:Card(Stab(Q_e)) \leq c\}$ and $\{e:Card(Stab(Q_e)) = c\}$
are both $\Pi^1_1$-complete.
\end{compactenum} 
\end{theorem}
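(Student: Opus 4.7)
The strategy is exactly parallel to the proofs of Theorems \ref{thm:ire}, \ref{thm:ici}, and \ref{thm:icu}: translate each cardinality question about $\mathit{Stab}(Q_e)$ into the corresponding cardinality question about $[T_{Q_e}]$ via Theorems \ref{tree2prog} and \ref{prog2trees}, and then invoke the tree index-set completeness results assembled in Theorem \ref{thm:pirc}. The degree-preserving bijections provided by both theorems guarantee that the cardinality of the path space and the cardinality of the stable model space agree, so all the numerical conditions transfer.

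For the upper bounds in parts (a)--(d), I would argue as follows. Given $Q_e$, form $T_{Q_e}$ by the uniform recursive procedure of Theorem \ref{prog2trees}. If $Q_e$ has the $\FS$ (respectively $\rfs$) property, then by clauses (2) and (4) of Theorem \ref{prog2trees}, $T_{Q_e}$ is bounded (respectively $r.b.$), and the bijection makes $\mathit{Card}(\mathit{Stab}(Q_e))\diamond c$ equivalent to $\mathit{Card}([T_{Q_e}])\diamond c$ for each of $\diamond\in\{>,\leq,=\}$. Combining the $\Pi^0_3$ complexity of $\FS$, the $\Sigma^0_3$ complexity of $\rfs$, and the Theorem \ref{thm:pirc} bounds for bounded/$r.b.$ trees with the appropriate cardinality condition yields the $\Sigma^0_3$, $\Pi^0_3$, and $D^0_3$ upper bounds. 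For part (d), clause (10) of Theorem \ref{prog2trees} gives the additional equivalence of decidability on the two sides, so the upper bound reduces to Theorem \ref{thm:pirc}(f). For (e), the question ``$\mathit{Card}(\mathit{Stab}(Q_e))>c$'' is a $\Sigma^1_1$ statement about the existence of $c+1$ distinct sets of atoms satisfying the stable-model fixed-point condition, and its negation is $\Pi^1_1$.

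For the completeness statements, I would use Theorem \ref{tree2prog} to obtain a recursive function $f$ with $Q_{f(e)}=P_{T_e}$, so that $T_e$ is $r.b.$ (respectively bounded, respectively $r.b.$ and decidable) iff $P_{T_e}$ has the $\rfs$ (respectively $\FS$, respectively $\rfs$ and decidable) property, and the stable models of $P_{T_e}$ biject with $[T_e]$. Then $f$ is a many-one reduction from each of the complete sets in Theorem \ref{thm:pirc}(a), (c), (d), (f), (g) to the corresponding sets in parts (a)--(e) of the present theorem, settling the completeness.

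The main obstacle is the asymmetry between ``$P$ has $\FS$'' and ``$T_P$ is bounded'' flagged by Theorem \ref{prog2trees}(2): a program with an explicit initial blocking set can yield a bounded $T_P$ without having $\FS$. On the upper-bound side this is harmless because I use only the easy direction ($\FS\Rightarrow T_{Q_e}$ bounded) to reduce the tail condition on $\mathit{Card}(\mathit{Stab}(Q_e))$ to $\mathit{Card}([T_{Q_e}])$. On the completeness side one must check that the construction $P_{T_e}$ of Theorem \ref{tree2prog} never produces a spurious initial blocking set; this is immediate because that construction builds an explicit stable model $M_\beta$ whenever $\beta\in[T_e]$ and otherwise has no stable models for a transparent reason (no infinite $\control$-chain), so $P_{T_e}$ has $\FS$ exactly when $T_e$ is bounded, with no extra blocking-set caveat. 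Beyond this one check, the verification is routine bookkeeping in the style of Theorem \ref{thm:ici}.
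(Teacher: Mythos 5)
Your proposal is correct and, in its essentials, follows the paper's own route: upper bounds by passing from $Q_e$ to $T_{Q_e}$ via Theorem \ref{prog2trees} and quoting Theorem \ref{thm:pirc}, completeness by the reduction $Q_{f(e)}=P_{T_e}$ from Theorem \ref{tree2prog}, whose parts (2)--(4) are unconditional equivalences, so your worry about spurious blocking sets on that side is indeed vacuous. The one place you genuinely diverge is the handling of the $\leq c$ (and mixed) upper bounds. The paper flags exactly this point (``one has to be a bit more careful''): because a program with an explicit initial blocking set has a bounded $T_P$ without having the $\FS$ property, the map $e\mapsto T_{Q_e}$ is not a many-one reduction when $\mathit{Stab}(Q_e)$ may be empty, and the paper repairs this by padding $Q_e$ to the program $R_e$ (adding $a\leftarrow\neg\bar a$, $\bar a\leftarrow\neg a$ and $a$ to every body) so that $R_e$ always has a stable model, $\mathit{Card}(\mathit{Stab}(Q_e))\leq c$ iff $\mathit{Card}(\mathit{Stab}(R_e))\leq c+1$, and an honest many-one reduction to the tree index sets results. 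You instead drop the insistence on many-one reductions for the upper bounds and write each index set as the intersection of $\{e: Q_e \text{ has } \FS\ (\rfs)\}$ with the pullback of the appropriate tree set, using only the easy direction $\FS\Rightarrow T_{Q_e}$ bounded and the closure facts $\Sigma^0_3\cap\Sigma^0_3=\Sigma^0_3$, $\Pi^0_3\cap\Pi^0_3=\Pi^0_3$, and $\Sigma^0_3\cap\Pi^0_3\subseteq D^0_3$; this is valid and avoids the auxiliary construction, though it buys only membership in the classes rather than reductions, and the paper's $R_e$ device is what generalizes smoothly to later theorems. Two small glosses: in part (d) with $\leq c$, clause (10) of Theorem \ref{prog2trees} as stated assumes a stable model, so either argue as in its proof (the direction ``$P$ decidable $\Rightarrow T_P$ decidable'' needs no such hypothesis) or simply intersect $\{e: Q_e\ \rfs \text{ and decidable}\}$ with the pullback of Theorem \ref{thm:pirc}(a); and in part (e) your direct remark covers $>c$ and $\leq c$ but not why $=c$ is $\Pi^1_1$ --- there you should fall back on your stated general strategy, i.e.\ $\mathit{Card}(\mathit{Stab}(Q_e))=c$ iff $\mathit{Card}([T_{Q_e}])=c$, which is $\Pi^1_1$ by Theorem \ref{thm:pirc}(g).
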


\begin{proof}
The proofs for this theorem are divided into two cases.  For the cases  where
we are trying to establish the complexity results for properties where
$Card(Stab(Q_e))=c$ or $Card(Stab(Q_e)) \geq c$, we can directly use Theorems
\ref{prog2trees} and \ref{tree2prog}. For example, Theorem \ref{prog2trees}
says that $Q_e$ has $c$ ($> c$) stable models if and only if the tree $T_{Q_e}$
constructed in the proof of Theorem \ref{prog2trees} has $c$ ($> c$) infinite
paths. Moreover, $Q_e$ has the $\FS$ (rec. $\FS$) property if and only if
$T_{Q_e}$ has the $\FS$ (rec. $\FS$) property.  Let $f_1$ be recursive function 
such that $T_{f_1(e)} = T_{Q_e}$. Then, for example,  $f_1$ shows that 
\begin{multline*}
 A=\{e: Q_e \ \text{has the rec. $\FS$ property and is decidable and}\\
 Card(Stab(Q_e)) = c\} 
\end{multline*}
 is many-one reducible to 
\[
B=\{h: T_h \ \text{is $r.b$. and is decidable and}\ Card(Stab(Q_h)) = c\}.
\]
By Theorem \ref{thm:pirc}, we know that $B$ is  is $\Sigma^0_3$ so that $A$ is
$\Sigma^0_3$.  Thus we can reduce the problem of the complexity bounds for the
properties involving $Stab(Q_e) = c$, and $Stab(Q_e) > c$ to the corresponding
properties of trees that $[T_e] =c$ and $[T_e] >c$ that appear in Theorem
\ref{thm:pirc}. 

To establish completeness in each case, we can use Theorem \ref{tree2prog}.
That is, there is a recursive function $f_2$ such that $Q_{f_2(e)} = P_{T_e}$
as constructed in Theorem \ref{tree2prog}.  Then, for example, $f_2$ shows that
\[
C=\{e: P_e \ \text{is $r.b.$ and is decidable and}\ Card([T_e]) = c\}
\]
is many-one reducible to 
\begin{multline*}
D=\{e: Q_e \ \text{has the rec. $\FS$ property and is decidable} \\
\text{and}\ Card(Stab(Q_e)) = c\}. 
\end{multline*}
We know by Theorem \ref{thm:pirc} that  $C$ is $\Sigma^0_3$-complete so that
$D$ is complete for  $\Sigma^0_3$ sets. Thus it follows that   $\{e: Q_e$ has
the rec. $\FS$ property and is decidable and $Card(Stab(Q_e)) = c\}$ is
$\Sigma^0_3$-complete. 

One has to be a bit more careful for the properties that involve the condition
that $Stab(Q_e) \leq c$.   In this case, we can use the techniques of the proof
of Theorem \ref{thm:ici} so we shall use the same notation and definitions as
in the proof of Theorem \ref{thm:ici}.  That is, it is easy to see that 
$Stab(Q_e) \leq c$ if and only if $Stab(R_e) \leq c+1$ and that $Stab(R_e) \leq
c+1$ if and only if $[T_{R_e}] \leq c+1$. But since $Stab(R_e) \neq \emptyset$
by construction, we see that $Q_e$ has the $\FS$ (rec. $\FS$) property if and
only if $R_e$ has the $\FS$ (rec. $\FS$) property if and only if $T_{r_e}$ is
bounded ($r.b.$). Let $g$ be the recursive function such that $T_{g(e)} =
T_{R_e}$. Then, for example, $g$ shows that 
\[
E =\{e: Q_e \ \text{has the $\FS$ property and}\ Card(Stab(Q_e)) \leq c\} 
\]
is many-one reducible to 
\[
F= \{e:T_e \ \mbox{is bounded and} \ [T_e] \leq c+1 \} 
\]
which is $\Pi^0_3$ by Theorem \ref{thm:pirc}. Thus,  $E$ is $\Pi^0_3$.  All the
other complexity bounds that involve the property $Stab(Q_e) \leq c$ can be
proved in a similar manner. 

To establish the corresponding completeness results, we observe that $[T_e]\leq
c$ if and only if $\mathit{Card}(\mathit{Stab}(P_{T_e})) \leq c$ and $T_e$ is
bounded ($r.b.$)  if and only if $P_{T_e}$ has the $\FS$ (rec. $\FS$) property.
Let $h$ be the recursive function such that $Q_{h(e)} = P_{T_e}$. Then $h$
shows that $\{e:T_e \ \mbox{is bounded and} \ [T_e] \leq c\}$ is many-one
reducible to $\{e: Q_e \ \text{has the $\FS$ property and}\ Card(Stab(Q_e))
\leq c\}$. Thus $\{e: Q_e \ \text{has the $\FS$ property and}\ Card(Stab(Q_e))
\leq c\}$ is $\Pi^0_3$-complete. All the other completeness results  that
involve the property $Stab(Q_e) \leq c$ can be proved in a similar manner. 
\end{proof}

Next, we give some index set results concerning the number of recursive stable
models of a finite normal predicate logic program $Q_e$. Here we say that
$Stab(Q_e)$ is {\em recursively empty} if $Stab(Q_e)$ has no recursive elements
and is  {\em recursively nonempty} if $Stab(Q_e)$ has at least one recursive
element. Similarly, we say that a $Stab(Q_e)$ has  {\em recursive cardinality
equal to $c$} if $Stab(Q_e)$ has exactly $c$ recursive members. 

\begin{theorem} \label{thm:rbce}
\begin{compactenum}
\item[(a)] $\{e: Q_e$ has the rec. $\FS$ property and $Stab(Q_e)$ is
recursively nonempty$\}$ is $\Sigma^0_3$-complete, $\{e: Q_e$ has the rec.
$\FS$ property and $Stab(Q_e)$ is recursively empty$\}$ is $D^0_3$-complete,
and $\{e: Q_e$ has the rec. $\FS$ property and $Stab(Q_e)$ is nonempty and
recursively empty$\}$ is $D^0_3$-complete.
\item[(b)] $\{e: Q_e$ has the $\FS$ property and $Stab(Q_e)$ is recursively
nonempty$\}$ is $D^0_3$-complete, $\{e: Q_e$ has the $\FS$ property and
$Stab(Q_e)$ is recursively empty$\}$ is $\Pi^0_3$-complete, and 
$\{e: Q_e$ has the $\FS$ property and $Stab(Q_e)\neq \emptyset$ 
and recursively empty$\}$ is $\Pi^0_3$-complete.
\item[(c)] $\{e: Stab(Q_e)$ is recursively nonempty$\}$ 
is $\Sigma^0_3$-complete, $\{e: Stab(Q_e)$ is recursively empty$\}$ is
$\Pi^0_3$-complete and $\{e: Stab(Q_e) \neq \emptyset$ and recursively
empty$\}$ is $\Sigma_1^1$-complete.
\end{compactenum}
\end{theorem}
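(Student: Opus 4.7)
The plan is to transfer each assertion in Theorem \ref{thm:rbce} to the corresponding assertion about primitive recursive trees in Theorem \ref{thm:prbce}, using the degree-preserving one-to-one correspondences supplied by Theorems \ref{tree2prog} and \ref{prog2trees}. The crucial observation is that since those correspondences preserve Turing degree, a stable model of a program is recursive if and only if the matched infinite path through the associated tree is recursive. Consequently, $\Stab(Q)$ is recursively empty (respectively, recursively nonempty, nonempty and recursively empty) if and only if $[T_Q]$ has the same property, and likewise for $P_T$ versus $T$.

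For the upper bounds, let $g$ be the recursive function from Theorem \ref{prog2trees} satisfying $T_{g(e)} = T_{Q_e}$. By parts (3) and (5) of that theorem, whenever $\Stab(Q_e) \neq \emptyset$ we have that $Q_e$ has the $\FS$ (respectively, $\rfs$) property if and only if $T_{g(e)}$ is bounded (respectively, $r.b.$); combined with the degree-preserving correspondence on paths, this reduces each of the nine index sets in the statement to the matching set of Theorem \ref{thm:prbce} via $g$. For example, $Q_e$ has the $\rfs$ property and $\Stab(Q_e)$ is recursively nonempty iff $T_{g(e)}$ is $r.b.$ and $[T_{g(e)}]$ is recursively nonempty, which is $\Sigma^0_3$; for the ``recursively empty'' cases, ``no recursive stable model'' is $\Pi^0_3$ and intersects cleanly with the $\FS$ or $\rfs$ hypothesis to produce the claimed $D^0_3$ or $\Pi^0_3$ upper bound.

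For the completeness direction, let $f$ be the recursive function from Theorem \ref{tree2prog} with $Q_{f(e)} = P_{T_e}$. Parts (1)--(3) of that theorem give that $[T_e]$ is in degree-preserving bijection with $\Stab(P_{T_e})$, that $T_e$ is bounded iff $P_{T_e}$ has the $\FS$ property, and that $T_e$ is $r.b.$ iff $P_{T_e}$ has the $\rfs$ property. Pushing each completeness result of Theorem \ref{thm:prbce} through $f$ therefore yields a many-one reduction into the corresponding set of Theorem \ref{thm:rbce}. For instance, $\{e : T_e \text{ is bounded and } [T_e] \text{ is nonempty and recursively empty}\}$ is $\Pi^0_3$-complete, so its $f$-image witnesses that $\{e : Q_e \text{ has the } \FS \text{ property, } \Stab(Q_e) \neq \emptyset, \text{ and } \Stab(Q_e) \text{ is recursively empty}\}$ is $\Pi^0_3$-hard; the other eight completeness claims are obtained by exactly the same template from the respective parts of Theorem \ref{thm:prbce}.

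The main obstacle is the bookkeeping required to align the nine program-level assertions with the nine tree-level assertions and to select, in each case, the correct clause of Theorem \ref{prog2trees} --- in particular, using the ``if $P$ has a stable model'' versions of parts (3) and (5) in the recursively nonempty cases so that the possibility of an explicit initial blocking set does not interfere with the $\FS$-versus-bounded (respectively, $\rfs$-versus-$r.b.$) equivalence. Once this matching is made, every completeness claim follows verbatim from the corresponding result in Theorem \ref{thm:prbce}, and no construction beyond those already supplied by Theorems \ref{tree2prog} and \ref{prog2trees} is required.
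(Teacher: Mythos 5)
Your proposal is correct, and your hardness direction coincides with the paper's: push the completeness results of Theorem \ref{thm:prbce} through the map $e \mapsto f(e)$ with $Q_{f(e)} = P_{T_e}$ from Theorem \ref{tree2prog}, whose equivalences (bounded iff $\FS$, $r.b.$ iff rec.\ $\FS$) hold unconditionally. Where you genuinely diverge is in the upper bounds. The delicate cases are the two sets of the form ``(rec.)\ $\FS$ and $\Stab(Q_e)$ recursively empty'' with no nonemptiness hypothesis: there the map $e \mapsto T_{Q_e}$ is \emph{not} a many-one reduction to the matching tree set, since a program with an explicit initial blocking set but without the (rec.)\ $\FS$ property still yields a bounded (resp.\ $r.b.$) tree with $[T_{Q_e}]$ empty, hence recursively empty; so your blanket claim that all nine sets reduce via $g$ is literally false for those two cases. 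However, your fallback for exactly those cases --- writing each as the intersection of the $\Sigma^0_3$ (resp.\ $\Pi^0_3$) index set from Theorem \ref{thm:irb} with the $\Pi^0_3$ predicate ``$Q_e$ has no recursive stable model'' --- is correct and yields the claimed $D^0_3$ and $\Pi^0_3$ bounds, and the direct reduction via $T_{Q_e}$ does work for every case that includes (recursive) nonemptiness, as well as for all of part (c), so no gap remains. The paper instead repairs the reduction itself: starting from a nonempty $r.b.$ tree with no recursive paths (Jockusch--Soare), it builds a rec.\ FSP program $V$ whose stable models are all nonrecursive, and splices $V$ and $Q_e$ together with switch atoms $a,\bar a$ to obtain $S_e$, which always has a stable model, has the (rec.)\ $\FS$ property iff $Q_e$ does, and has the same recursive stable models up to adding $a$; then $e \mapsto T_{S_e}$ is a genuine many-one reduction to Theorem \ref{thm:prbce} in every case. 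Your route is more elementary for this particular theorem; the paper's $S_e$ construction buys a single uniform reduction that it reuses verbatim for Theorems \ref{thm:bcc}--\ref{pthm:ips} (recursive cardinality and perfectness), where a direct definability argument would be more awkward, e.g.\ for perfectness.
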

\begin{proof}

We say that a finite normal predicate logic program $Q_e$ has an {\em isolated
stable model} $M$, if there is a finite set of ground atoms $a_1, \ldots, a_n,
b_1, \ldots, b_m$ such that $a_i \in M$ for all $i$ and $b_j \notin M$ for all
$j$ and there is no other stable model $M'$ such $a_i \in M'$ for all $i$ and
$b_j \notin M'$ for all $j$. Thus isolated stable models are determined by a
finite amount of positive and negative information.  We say that a finite
predicate logic program $Q_e$  is {\em perfect} if $Stab(Q_e)$ is nonempty 
and it has no isolated elements.  

To prove the upper bounds in each case, we do the 
following. 
Jockusch and Soare \cite{JS72a} constructed a recursively bounded  primitive
recursive tree such that $[T] \neq \emptyset$ and $[T_e]$ has no recursive
elements. It then follows from Theorem \ref{thm:basis1} that $[T]$ can have no
isolated elements so that $[T]$ is perfect. By Theorem \ref{tree2prog}, there
is a finite normal predicate logic program $U$ such that $U$ has the rec. $\FS$
property and there is a one-to-one degree preserving correspondence between
$[T]$ and $Stab(U)$.  Thus $Stab(U)$ has no recursive or isolated elements. Now 
suppose that we are given a finite normal predicate logic program $Q_e$. Then
make a copy $V$ of the finite normal predicate logic program $U$ such that $V$
has no predicates which are in common with $Q_e$. Let $a$ and $\bar{a}$ be two
atoms which do not appear in either $V$ or $Q_e$ and let $S_e$ be the finite
normal predicate logic program which arises from $U$ and $Q_e$ by adding $a$ to
the body of every clause in $Q_e$, adding $\bar{a}$ to  the body of every
clause in $V$, and adding the following two clauses:\\
$C_1 = a \leftarrow \neg \bar{a}$ and \\
$C_2 = \bar{a} \leftarrow \neg a$.\\
Then, as before, we claim that exactly one of $a$ or $\bar{a}$ must be in every 
stable model $M$ of $S_e$. That is, if neither $a$ or $\bar{a}$ are in $M$,
then $C_1$ and $C_2$ will contribute $a \leftarrow$ and $\bar{a} \leftarrow$ to
$(S_e)_M$ so that both $a$ and $\bar{a}$ will be in the least model of
$(S_e)_M$.  If both $a$ and $\bar{a}$ are in $M$, then $C_1$ and $C_2$ will
contribute nothing to  $(S_e)_M$ so that neither $a$ nor $\bar{a}$ will be in
the least model of  $(S_e)_M$ since then there will be no clauses of  $(S_e)_M$
with either $a$ or $\bar{a}$ in the head of the clause. It follows that $S_e$
will have two types of stable models $M$, namely $M = M_1 \cup \{\bar{a}\}$ or
$M = M_2 \cup \{a\}$ where $M_1$ is stable model of $V$ and $M_2$ is stable
model of $Q_e$.  Since $V$ has the rec. $\FS$ property, is perfect, and 
has no recursive stable models, it follows that  
\begin{compactenum}
\item $Q_e$ has the rec. $\FS$ ($\FS$) property if and only if 
$S_e$ has the rec. $\FS$ ($\FS$) property, 
\item $Q_e$ is perfect if and only if $S_e$ is perfect,  and 
\item the only recursive stable models of $S_e$ are of the form 
$M \cup \{a\}$ where $M$ is a recursive stable model of 
$Q_e$. 
\end{compactenum}

By Theorem \ref{prog2trees}, there is a recursive function $k$ such that
$T_{k(e)} = T_{S_e}$ as constructed in the proof of Theorem \ref{prog2trees}
such that $T_{k(e)}$ is bounded ($r.b.$) if and only if $S_e$ has the $\FS$
(rec. $\FS$) property and there is an effective one-to-one degree preserving
correspondence between $Stab(S_e)$ and $[T_{k(e)}]$.  It follows that
$T_{k(e)}$ is bounded ($r.b.$) if and only if $Q_e$ has the $\FS$ (rec. $\FS$)
property and there is an effective one-to-one degree preserving correspondence
between the recursive elements of $Stab(S_e)$ and the recursive elements of
$[T_{k(e)}]$.

Then for example, it follows that $\{e:Q_e$ has the rec. $\FS$ property and is
recursively empty$\}$ is many-one reducible to $\{e:T_e$ is $r.b$ and $[T_e]$
is recursively empty$\}$ which is $D^0_3$. Thus 
\[
\{e:Q_e \ \mbox{has the rec. $\FS$ property and is recursively empty}\}
\]
is $D^0_3$.  In this way, the upper bounds on the complexity of each index set
in the theorem follow from the corresponding complexity bound of the
corresponding property of trees in Theorem \ref{thm:prbce}. 

The completeness results for each part of the theorem follow 
from Theorem \ref{tree2prog} and the corresponding completeness 
results in Theorem \ref{thm:prbce} as before. 
\end{proof}

The same method of proof can be used to prove the following 
theorems. 

\begin{theorem} \label{thm:bcc} Let $c$ be a positive integer.
\begin{compactenum}
\item[(a)] $\{e: Q_e$ has the rec. $\FS$ property and $Stab(Q_e)$ has
recursive cardinality $> c\}$ is $\Sigma^0_3$-complete, $\{e: Q_e$ has the rec.
$\FS$ property and $Stab(Q_e)$ has recursive cardinality $\leq c\}$ is
$D^0_3$-complete, and $\{e: Q_e$ has the rec. $\FS$ property and $Stab(Q_e)$
has  recursive cardinality $=c \}$ is $D^0_3$-complete. 
\item[(b)] $\{e: Q_e$ has the $\FS$ property and $Stab(Q_e)$ has recursive 
  cardinality $> c\}$ is $\Pi^0_3$-complete, 
$\{e: Q_e$ has the $\FS$ property and $Stab(Q_e)$ has recursive cardinality
 $\leq c\}$ is $D^0_3$-complete, and 
$\{e: Q_e$ has the $\FS$ property and $Stab(Q_e)$ has recursive cardinality
 $=c \}$ is $D^0_3$-complete. 
\item[(c)] $\{e:$ $Stab(Q_e)$ has recursive cardinality $>c\}$ is 
$\Sigma^0_3$-complete, 
$\{e:$ $Stab(Q_e)$ has recursive cardinality $\leq c\}$ is
$\Pi^0_3$-complete, and 
 $\{e:$ $Stab(Q_e)$ has recursive cardinality $= c\}$ 
is $D^0_3$-complete.  
\item[(d)] $\{e: Q_e$ is decidable and has the rec. $\FS$ property and 
$Stab(Q_e)$ has recursive cardinality $> c\}$ is $\Sigma^0_3$-complete, 
$\{e: Q_e$ is decidable and has the rec. $\FS$ property and $Stab(Q_e)$ has
recursive cardinality $\leq c\}$ is $D^0_3$-complete, and  
$\{e: Q_e$ is decidable and has the rec. $\FS$ property and 
$Stab(Q_e)$ has  recursive cardinality $=c \}$ is $D^0_3$-complete. 
\end{compactenum}
\end{theorem}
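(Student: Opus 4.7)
The plan is to follow the same template used for Theorems \ref{thm:irc}, \ref{thm:ici}, and \ref{thm:rbce}: reduce each part to the corresponding tree-index-set statement in Theorem \ref{thm:pbcc} via the two correspondences of Theorems \ref{prog2trees} and \ref{tree2prog}. For the upper bounds, the natural obstruction is that $Q_e$ may have no stable models at all, in which case the counts on the tree side do not match. I would handle this exactly as in the proof of Theorem \ref{thm:rbce}: use the fixed ``shield'' program $V$ (a perfect, rec.\ $\FS$ program obtained from a Jockusch--Soare $r.b.$ primitive recursive tree having no recursive paths) together with the switch clauses $C_1 = a \leftarrow \neg \bar a$ and $C_2 = \bar a \leftarrow \neg a$. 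The resulting padded program $S_e$, formed by copying $V$ on disjoint predicates and adding $a$ to the body of every clause of $Q_e$ and $\bar a$ to the body of every clause of $V$, satisfies three key features that make the reduction go through: (i) $Q_e$ has the ($\mathrm{rec.}$) $\FS$ property iff $S_e$ does, (ii) the recursive stable models of $S_e$ are exactly $M \cup \{a\}$ for $M$ a recursive stable model of $Q_e$ (plus the non-recursive models $M' \cup \{\bar a\}$ coming from $V$), and (iii) $S_e$ always has at least one stable model.

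Applying Theorem \ref{prog2trees} to $S_e$, I get a primitive recursive function $k$ with $T_{k(e)} = T_{S_e}$ such that $T_{k(e)}$ is $r.b.$ (respectively bounded) iff $Q_e$ has the rec.\ $\FS$ (respectively $\FS$) property, and the recursive paths through $T_{k(e)}$ are in effective one-to-one degree-preserving correspondence with the recursive stable models of $S_e$, hence with the recursive stable models of $Q_e$ up to the shift by $\{a\}$. Thus, for each of the properties ``recursive cardinality $> c$'', ``$\leq c$'', or ``$= c$'', membership of $e$ in the program index set reduces many-one to membership of $k(e)$ in the corresponding tree index set from Theorem \ref{thm:pbcc}, giving the stated $\Sigma^0_3$, $D^0_3$, and $\Pi^0_3$ upper bounds in parts (a)--(d). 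For part (d), decidability of $Q_e$ transfers to $T_{k(e)}$ by part (10) of Theorem \ref{prog2trees}, once one observes that adding the shield program $V$ and the two switch clauses preserves decidability.

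For the completeness half, I would instead use Theorem \ref{tree2prog}: let $f$ be the recursive function with $Q_{f(e)} = P_{T_e}$. By parts (1)--(4) of that theorem, $T_e$ is $r.b.$ (respectively bounded, respectively $r.b.$ and decidable) iff $P_{T_e}$ has the rec.\ $\FS$ property (respectively $\FS$ property, respectively rec.\ $\FS$ property and is decidable), and the one-to-one degree-preserving correspondence $M \leftrightarrow f_M$ restricts to a bijection between recursive paths of $T_e$ and recursive stable models of $P_{T_e}$. Therefore $f$ reduces each tree-side index set in Theorem \ref{thm:pbcc} to the matching program-side index set in parts (a)--(d), inheriting the completeness. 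In particular, the $\Sigma^0_3$-completeness of ``$r.b.$ and recursive cardinality $> c$'' gives part (a) first clause, the $D^0_3$-completeness of the ``$\leq c$'' and ``$= c$'' tree versions give the $D^0_3$-completeness of the corresponding program versions, and similarly for parts (b), (c), and (d).

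The main obstacle is the ``$\leq c$'' clauses: here the shield construction strictly adds one stable model (the one containing $\bar a$), so I must be careful to align the counts --- $\mathit{Card}(\mathit{Stab}(Q_e)) \leq c$ becomes $\mathit{Card}(\mathit{Stab}(S_e)) \leq c+1$, and likewise on the recursive side one needs that the unique ``$\bar a$-model'' has no recursive element, which is exactly why $V$ was chosen to have no recursive stable models. Once that bookkeeping is done the reduction produces the correct tree property, and Theorem \ref{thm:pbcc} finishes the argument; the rest of the proof is routine and parallels the cases already handled in Theorems \ref{thm:irc} and \ref{thm:rbce}.
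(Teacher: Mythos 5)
Your overall architecture is exactly the paper's: the paper proves this theorem by declaring it ``the same method of proof'' as Theorem \ref{thm:rbce}, i.e.\ upper bounds via the shield program $S_e$ (the copy $V$ of the Jockusch--Soare-based program with no recursive stable models, the switch clauses $C_1,C_2$) followed by Theorem \ref{prog2trees} and the tree bounds of Theorem \ref{thm:pbcc}, and completeness via $Q_{f(e)}=P_{T_e}$ from Theorem \ref{tree2prog}. For parts (a)--(c) your argument goes through, but note that your final ``bookkeeping'' paragraph conflates the two padding constructions: the $+1$ shift ($\mathit{Card}(\mathit{Stab}(Q_e))\leq c$ versus $\leq c+1$) belongs to the program $R_e$ of Theorems \ref{thm:ici}/\ref{thm:irc} and to \emph{total} cardinality; the shield $S_e$ adds uncountably many stable models but \emph{zero} recursive ones, so for the recursive-cardinality statements of this theorem no shift is needed at all --- $\mathit{Stab}(S_e)$ and $\mathit{Stab}(Q_e)$ have exactly the same recursive cardinality. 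If you literally reduced ``recursive cardinality $\leq c$'' to ``recursive cardinality of $[T_{k(e)}]\leq c+1$'' the reduction would be wrong; fortunately the correct reduction is the simpler one with no adjustment.

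The genuine gap is in part (d). Your claim that ``adding the shield program $V$ and the two switch clauses preserves decidability'' is false: $V$ comes from a nonempty $r.b.$ tree with no recursive paths, hence that tree is not decidable, and by the second half of the proof of Theorem \ref{tree2prog}(4) the program $V$ is not decidable; queries about atoms of $H(V)$ (whose stable models in $S_e$ are exactly the $M_1\cup\{\bar a\}$ with $M_1\in \mathit{Stab}(V)$) then make $S_e$ undecidable no matter what $Q_e$ is. Consequently $T_{k(e)}$ is never decidable, and the proposed equivalence between ``$Q_e$ is decidable, has the rec.\ $\FS$ property, and \dots'' and the corresponding decidable-tree index set fails on every decidable $Q_e$. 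To repair part (d): for the clauses ``$>c$'' and ``$=c$'' (with $c\geq 1$) a recursive stable model exists, so $\mathit{Stab}(Q_e)\neq\emptyset$ and you may apply Theorem \ref{prog2trees} directly to $Q_e$ (parts (5) and (10)) with no padding; for ``$\leq c$'' either write out the definition (``decidable and rec.\ $\FS$'' is $\Sigma^0_3$ by Theorem \ref{thm:irb}(d), while ``recursive cardinality $\leq c$'' is $\Pi^0_3$, giving $D^0_3$) or pad with $R_e$, which adds only the single \emph{recursive} stable model $\{\bar a\}$ and does preserve decidability, so the recursive count shifts by exactly one. A related, smaller point: Theorem \ref{thm:pbcc} as stated has no decidable-tree clause, so for the hardness half of (d) you should invoke the decidable $r.b.$ analogues from \cite{CR1} (as in Theorem \ref{thm:pirc}(f)) together with Theorem \ref{tree2prog}(4), rather than citing \ref{thm:pbcc} itself.
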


\begin{theorem} $\{e: \text{$Stab(Q_e)$ has finite recursive
    cardinality}\}$ is $\Sigma^0_4$-complete and  
$\{e: \text{$Stab(Q_e)$ has infinite recursive
    cardinality}\}$ is $\Pi^0_4$-complete.  
The same results are  true for programs which 
have the rec. $\FS$ property  and the $\FS$ property. 
\end{theorem}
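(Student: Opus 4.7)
The plan is to derive this theorem from the preceding tree-level theorem (on finite/infinite recursive cardinality of $[T_e]$) by composing with the recursive functions $f$ and $g$ furnished by Theorems \ref{tree2prog} and \ref{prog2trees}, exactly as was done in the proofs of Theorems \ref{thm:rbce} and \ref{thm:bcc}. Let $g$ be the recursive function with $T_{g(e)} = T_{Q_e}$ of Theorem \ref{prog2trees} and let $f$ be the recursive function of Theorem \ref{tree2prog} with $Q_{f(e)} = P_{T_e}$. The two one-to-one degree-preserving correspondences these theorems provide automatically restrict to bijections between the recursive stable models on the program side and the recursive infinite paths on the tree side, since the oracle Turing machines witnessing the correspondence compute recursive objects from recursive oracles (equivalently, no oracle) on either side. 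Consequently $\Stab(Q)$ has finite (resp.\ infinite) recursive cardinality iff the corresponding $[T]$ does.

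For the upper bounds, applying $g$ as a many-one reduction transfers the $\Sigma^0_4$ upper bound of the tree theorem to the unrestricted index set $\{e : \Stab(Q_e) \text{ has finite recursive cardinality}\}$, and symmetrically for the $\Pi^0_4$ upper bound in the infinite case. For the $\FS$ and rec.\ $\FS$ variants we simply intersect with the predicates of Theorem \ref{thm:irb}; since $\Pi^0_3, \Sigma^0_3 \subseteq \Delta^0_4$, both $\Pi^0_3 \cap \Sigma^0_4$ and $\Sigma^0_3 \cap \Sigma^0_4$ remain in $\Sigma^0_4$, and analogously for $\Pi^0_4$, so the bounds persist. For completeness, parts (1)--(3) of Theorem \ref{tree2prog} give that $T_e$ is bounded (resp.\ recursively bounded) iff $P_{T_e}$ has the $\FS$ (resp.\ rec.\ $\FS$) property. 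The tree theorem, together with its extension to r.b.\ and bounded trees, then yields via $f$ the $\Sigma^0_4$-hardness of
\[
\{e : Q_e \text{ has the rec.\ } \FS \text{ property and } \Stab(Q_e) \text{ has finite recursive cardinality}\};
\]
since rec.\ $\FS$ implies $\FS$ and every program lies in the unrestricted family, the single reduction $f$ simultaneously establishes $\Sigma^0_4$-hardness of all three versions. The $\Pi^0_4$-completeness assertions for the infinite recursive cardinality case are proved identically from the $\Pi^0_4$-complete tree index set.

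The proof requires no genuinely new construction: it is almost pure bookkeeping on top of Theorems \ref{tree2prog}, \ref{prog2trees}, and the preceding tree theorem. The one substantive point, and the only place where care is needed, is the preservation-of-recursiveness observation in the correspondence; but this is immediate from the fact that the effective degree-preserving correspondence is witnessed by oracle machines whose computations on recursive inputs produce recursive outputs. No ``initial blocking set'' workaround is needed here, because the empty-stable-model case is already absorbed into the statement (finite recursive cardinality includes recursive cardinality zero), so the delicate issues that arose in Theorems \ref{thm:irbb} and \ref{thm:ici} do not appear. I therefore expect no serious technical obstacle in executing the proof.
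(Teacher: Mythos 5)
Your overall route is the paper's own: the paper proves this theorem by ``the same method'' as Theorems \ref{thm:rbce} and \ref{thm:bcc}, i.e.\ upper bounds by transferring along Theorem \ref{prog2trees} to the tree-level theorem on finite/infinite recursive cardinality, and hardness by transferring back along Theorem \ref{tree2prog}; your observation that the one-to-one degree-preserving correspondences carry recursive stable models to recursive paths and conversely is exactly the point that makes both transfers work. Your handling of the $\FS$ and rec.\ $\FS$ upper bounds by intersecting with the $\Pi^0_3$ and $\Sigma^0_3$ index sets of Theorem \ref{thm:irb} is a legitimate (and slightly simpler) substitute for the paper's device of passing through the auxiliary program $S_e$ of Theorem \ref{thm:rbce} so as to invoke the ``$\FS$ iff bounded'' clauses of Theorem \ref{prog2trees}; since you never need those clauses, no initial-blocking-set issue arises, as you say.

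The one step that does not stand as written is the claim that the single reduction witnessing $\Sigma^0_4$-hardness of $\{e: Q_e \text{ has the rec.\ } \FS \text{ property and } \Stab(Q_e) \text{ has finite recursive cardinality}\}$ ``simultaneously establishes'' hardness of the $\FS$ and unrestricted versions because rec.\ $\FS$ implies $\FS$. Many-one hardness does not transfer along subset inclusions of index sets: from $x \in S \iff (Q_{r(x)}$ has the rec.\ $\FS$ property and $\Stab(Q_{r(x)})$ has finite recursive cardinality$)$ you cannot conclude $x \in S \iff \Stab(Q_{r(x)})$ has finite recursive cardinality, unless you also know that $Q_{r(x)}$ has the rec.\ $\FS$ property for \emph{every} $x$ --- equivalently, that the tree-level hardness witness produces recursively bounded trees on every input, which the bare statement of the tree theorem does not give you (if for some $x \notin S$ the tree were merely bounded with finitely many recursive paths, the reduction would fail for the $\FS$ and unrestricted sets). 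The repair is immediate and is what the paper's method amounts to: apply $f$ separately to the r.b., the bounded, and the unrestricted versions of the tree-level theorem, using clauses (1)--(3) of Theorem \ref{tree2prog} to obtain three reductions, one for each class of programs (and dually for the $\Pi^0_4$ statements about infinite recursive cardinality). With that adjustment your argument is complete and coincides with the paper's.
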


\begin{theorem} \label{pthm:ips} 
\begin{compactenum} 
\item[(a)]
$\{e: Q_e\ \text{has the rec. $\FS$ property and $Stab(Q_e)$ is
perf}$-$\text{ect}\}$ is $D^0_3$-complete.
\item[(b)]
$\{e: Q_e\ \text{has the $\FS$ property and $Stab(Q_e)$ is perfect}\}$
is $\Pi^0_4$-complete. 
\item[(c)]
$\{e: Stab(Q_e)\ \text{is perfect}\}$
is $\Sigma^1_1$-complete.
\end{compactenum} 
\end{theorem}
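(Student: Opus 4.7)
The plan is to apply, one final time, the reduction strategy of this section: derive upper bounds via Theorem \ref{prog2trees} and completeness via Theorem \ref{tree2prog}, invoking in each case the corresponding tree classification in Theorem \ref{thm:pips}.

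The crucial preliminary is to observe that the one-to-one degree-preserving correspondences $M \mapsto f_M$ of Theorem \ref{prog2trees} and $\beta \mapsto M_\beta$ of Theorem \ref{tree2prog} are in fact homeomorphisms between $\Stab(Q)$ (with the product topology inherited from $2^{H(Q)}$) and $[T]$ (with the product topology inherited from $\omega^\omega$); equivalently, a stable model $M$ is isolated (in the sense of the proof of Theorem \ref{thm:rbce}) if and only if its image path is isolated in $[T]$. For $M \mapsto f_M$, the inverse map is trivially continuous since $f_M(2i) = \chi_M(i)$, and $M \mapsto f_M$ is continuous because $f_M(2i+1) = q_M(i)$ depends on $M$ only through the finite union of supports of proof schemes of code at most $q_M(i)$. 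For $\beta \mapsto M_\beta$, inspection of the definition \eqref{Mbeta} shows that membership of any atom of $H(P_T)$ in $M_\beta$ is determined by a bounded initial segment of $\beta$, while $\beta$ is read off $M_\beta$ through the $\ipath$ atoms. Hence either correspondence carries perfect classes to perfect classes in both directions.

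For the upper bounds in (a) and (b) a small subtlety arises, since Theorem \ref{prog2trees} equates the $\rfs$ (resp.\ $\FS$) property with recursive boundedness (resp.\ boundedness) of $T_{Q_e}$ only when $Q_e$ has a stable model. To handle this uniformly I would use the $U$-padding trick from the proof of Theorem \ref{thm:rbce}: fix once and for all a finite normal predicate logic program $U$ with the $\rfs$ property whose stable model set $\Stab(U)$ is perfect (such a $U$ exists by combining the Jockusch--Soare construction with Theorem \ref{tree2prog}), and given $Q_e$ form $S_e$ by joining a disjoint copy of $U$ to $Q_e$ via the $a,\bar a$ switch as in that proof. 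Then $\Stab(S_e)\neq\emptyset$, $Q_e$ has $\rfs$ (resp.\ $\FS$) iff $S_e$ does, and $\Stab(Q_e)$ is perfect iff $\Stab(S_e)$ is perfect, since $\Stab(S_e)$ splits into a $U$-family and a $Q_e$-family separated by the atoms $a,\bar a$, the $U$-family is perfect because $\Stab(U)$ is, and an element of the $Q_e$-family is isolated in $\Stab(S_e)$ iff the corresponding element of $\Stab(Q_e)$ is isolated. Combining this with the homeomorphism, the recursive function $e \mapsto$ an index of $T_{S_e}$ reduces (a) to $\{e: T_e \text{ is }r.b.\text{ and }[T_e] \text{ is perfect}\}$, which is $D^0_3$ by Theorem \ref{thm:pips}(a), and reduces (b) to $\{e: T_e \text{ is bounded and } [T_e]\text{ is perfect}\}$, which is $\Pi^0_4$ by Theorem \ref{thm:pips}(c). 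For (c) no padding is needed: $\Stab(Q_e)$ is perfect iff $[T_{Q_e}]$ is perfect, yielding $\Sigma^1_1$ by Theorem \ref{thm:pips}(e).

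Completeness would follow directly from Theorem \ref{tree2prog}: given a primitive recursive tree $T_e$, form $P_{T_e}$ and let $h$ be the recursive function with $Q_{h(e)} = P_{T_e}$. By parts (2)--(3) of Theorem \ref{tree2prog} together with the homeomorphism, $h$ many-one reduces the $D^0_3$-complete set $\{e: T_e\text{ is }r.b.\text{ and }[T_e]\text{ is perfect}\}$ to (a), the $\Pi^0_4$-complete set $\{e: T_e\text{ is bounded and }[T_e]\text{ is perfect}\}$ to (b), and the $\Sigma^1_1$-complete set $\{e: [T_e]\text{ is perfect}\}$ to (c). The main obstacles I anticipate are the homeomorphism verification at the start and the bookkeeping required to confirm that the $U$-padding preserves the $\rfs$/$\FS$ and perfect properties as claimed.
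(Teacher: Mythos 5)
Your proposal is correct and follows essentially the same route as the paper, which proves this theorem by ``the same method'' as Theorem \ref{thm:rbce}: pad $Q_e$ with the program $U$ obtained from the Jockusch--Soare tree via Theorem \ref{tree2prog} to form $S_e$, get the upper bounds by passing to $T_{S_e}$ (respectively $T_{Q_e}$ for part (c)) through Theorem \ref{prog2trees} and Theorem \ref{thm:pips}, and get completeness by the reduction $Q_{h(e)}=P_{T_e}$ from Theorem \ref{tree2prog}. The only addition in your write-up is the explicit check that the correspondences $M\mapsto f_M$ and $\beta\mapsto M_\beta$ preserve isolated points (the homeomorphism observation), a step the paper leaves implicit, and your verification of it is correct.
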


\section{Index set results for $a.a.$ FSP and  $a.a.$ rec. FSP
programs.}\label{aa}

In this section, we shall use our results from the previous section 
to prove results about index sets of $a.a.$ FSP and  $a.a$ rec. FSP programs.
Recall Section \ref{intro}, discussion after Proposition \ref{p.req}) that a
finite predicate logic program $P$ has the {\it almost always finite support
($\afs$) property} if for all but finitely many atoms $a \in H(P)$, there are
only finitely many inclusion-minimal supports of minimal $P$-proof schemes for
$a$. 

First we shall prove index set results for finite normal predicate logic
programs which have the $a.a.$ rec. $\FS$ property.

\begin{theorem}\label{thm:aarFSP}
\begin{compactenum}
\item[(a)] $\{e: Q_e\ \text{has the $a.a.$ rec. $\FS$ property}\}$ is
$\Sigma^0_3$-complete.

\item[(b)] $\{e: Q_e$ has the $a.a.$ rec. $\FS$ property and 
$\mathit{Stab}(Q_e)$ is nonempty$\}$ and $\{e: Q_e$ has the $a.a.$ rec. $\FS$
property and $\mathit{Stab}(Q_e)$ is empty$\}$ are $\Sigma^0_3$-complete.

\item[(c)] $\{e: Q_e \ \text{has the $a.a.$ rec. $\FS$ property and}\
Card(Stab(Q_e)) > c\}$, $\{e: Q_e$ has the $a.a.$ rec. $\FS$ property
and\ $Card(Stab(Q_e)) \leq c\}$, and $\{e: Q_e$ has the $a.a.$ rec. $\FS$
property and $Card(Stab(Q_e)) = c\}$ are all $\Sigma^0_3$-complete.

\item[(d)] $\{e: Q_e$ has the $a.a.$ rec. $\FS$ property and $Stab(Q_e)$ is
infinite$\}$ is $D^0_3$-complete and $\{e: Q_e$ has the $a.a.$ rec. $\FS$
property and $Stab(Q_e)$ is finite$\}$ is $\Sigma^0_3$-complete.

\item[(e)] $\{e: Q_e\ \text{ has the a.a. rec. $\FS$ property and $Stab(Q_e)$
is uncountable}\}$ is $\Sigma^1_1$-complete and $\{e: Q_e\ \text{ has the
a.a. rec. $\FS$ property and $Stab(Q_e)$}$ is countable$\}$ and $\{e: Q_e\
\text{ has the $a.a.$ rec. $\FS$ property and $Stab(Q_e)$}$ is countably
infinite$\}$ are $\Pi^1_1$-complete.

\item[(f)] $\{e: Q_e$ has the $a.a.$ rec. $\FS$ property and $Stab(Q_e)$ is
recursively nonempty$\}$ is $\Sigma^0_3$-complete, $\{e: Q_e$ has the $a.a.$
rec. $\FS$ property and $Stab(Q_e)$ is recursively empty$\}$ is
$D^0_3$-complete, and $\{e: Q_e$ has the $a.a.$ rec. $\FS$ property and 
$Stab(Q_e)$ is nonempty and recursively empty$\}$ is $D^0_3$-complete.

\item[(g)] $\{e: Q_e$ has the $a.a.$ rec. FPS and $Stab(Q_e)$ has
recursive cardinality $ > c\}$ is $\Sigma^0_3$-complete, $\{e: Q_e$ has the
$a.a.$ rec. $\FS$ property and $Stab(Q_e)$ has recursive cardinality $\leq c\}$
is $D^0_3$-complete, and $\{e: Q_e$ has the $a.a.$ rec. $\FS$ property and
$Stab(Q_e)$ has  cardinality $=c \}$ is $D^0_3$-complete. 

\item[(h)] $\{Q_e:$ has the $a.a.$ rec. $\FS$ property and $Stab(Q_e)$ has
$\{e:$ has the $a.a.$ rec. $\FS$ property and $Stab(Q_e)$ has infinite
recursive cardinality$\}$ is $\Pi^0_4$-complete.

\item[(i)] $\{e: Q_e$ has the $a.a.$ rec. $\FS$ property and $Stab(Q_e)$  is
perfect$\}$ are $D^0_3$-complete.
\end{compactenum}
\end{theorem}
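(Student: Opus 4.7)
The plan is to transfer each item to the corresponding index set on primitive recursive trees via the two correspondence theorems of Section \ref{proofs}, exactly as was done in Section \ref{compl} for the rec. $\FS$ case. On the tree side the relevant ingredients are already established: by Theorems \ref{thm:pirb}--\ref{thm:pips} (and the remark there that, in the complexity classes occurring, the index set results for $n.r.b.$ and $a.a.r.b.$ trees coincide with each other and in several cases with the $r.b.$ results), the analogues of (a)--(i) for the class of $a.a.r.b.$ (equivalently $n.r.b.$) trees already have the stated complexities. So what I must do is to propagate these through Theorems \ref{tree2prog} and \ref{prog2trees}.

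For the upper bounds, I would apply Theorem \ref{prog2trees}: given $Q_e$, let $T_{Q_e}$ be the associated primitive recursive tree. Part (8) of that theorem says $Q_e$ has the $\arfs$ property or an explicit initial blocking set iff $T_{Q_e}$ is $n.r.b.$, and parts (1)--(7) translate any question about $\mathit{Stab}(Q_e)$ (nonemptyness, cardinality, recursive cardinality, perfectness) into the corresponding question about $[T_{Q_e}]$. The ``or has an initial blocking set'' disjunct is harmless because, as in Theorem \ref{thm:irb}(a), $\{e:Q_e$ has an explicit initial blocking set$\}$ is $\Sigma^0_2$, which is absorbed in every class $\Sigma^0_3$ and above. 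Composing with the recursive function $e \mapsto $ (index of $T_{Q_e}$) reduces each of (a)--(i) to the correct tree index set from Section \ref{classes}, yielding the required upper bound.

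For the completeness half, I would push the known tree completeness results through Theorem \ref{tree2prog}: given a primitive recursive tree $T_e$, form $P_{T_e}$; then $T_e$ is $r.b.$ iff $P_{T_e}$ has the rec. $\FS$ property, which in particular implies the $a.a.$ rec. $\FS$ property, and $[T_e]$ corresponds one-to-one and degree-preservingly to $\mathit{Stab}(P_{T_e})$. This gives direct many-one reductions from $r.b.$ tree problems (known to have the same completeness degree as the corresponding $a.a.r.b.$ tree problems) to the program-side index sets in (a)--(i). For the statements in (b), (d), (g), (h), (i) that require $\mathit{Stab}(Q_e)\neq\emptyset$ or talk about recursive cardinality, I would pre-process exactly as in the proofs of Theorems \ref{thm:ici} and \ref{thm:rbce}: introduce two fresh atoms $a,\bar a$, the clauses $a\leftarrow\neg\bar a$ and $\bar a\leftarrow\neg a$, and sprinkle $a$ (respectively $\bar a$) into the bodies of the two disjoint copies of programs we wish to combine. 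This guarantees a stable model, preserves the $a.a.$ rec. $\FS$ status, and only shifts cardinalities by $1$. For the $D^0_3$ and $\Pi^0_4$ statements about recursive (non)emptyness and infinite recursive cardinality in (f)--(h), I would further adjoin, via the $R_e$ construction, the program $U$ obtained by feeding the Jockusch--Soare $r.b.$ primitive recursive tree (perfect, nonempty, no recursive path) through Theorem \ref{tree2prog}; this block is already $r.b.$, hence $a.a.$ rec. $\FS$ and perfect with no recursive stable models, and contributes exactly what is needed.

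The part I expect to take the most care is not the routine translation but making sure that the $a.a.r.b.$/$n.r.b.$ versions of the tree completeness arguments from \cite{CR1} really do produce witnesses with the required boundedness property rather than just boundedness; however the authors have already addressed this in the proof of Theorem \ref{thm:pirb}, remarking that the $r.b.$ construction there is automatically $a.a.r.b.$ and that all subsequent completeness constructions in that reference can be modified in the same way. Granting that remark, every item (a)--(i) follows by one of the two schemes above, using Theorem \ref{thm:pirb}(b)--(c) for (a) and (b), Theorem \ref{thm:pirc}(b) for (c), Theorem \ref{thm:pici}(b) for (d), Theorem \ref{thm:picu} for (e), Theorem \ref{thm:prbce}(b) for (f), Theorem \ref{thm:pbcc}(b) for (g), the $\Sigma^0_4$/$\Pi^0_4$ theorem on (in)finite recursive cardinality (restricted to $a.a.r.b.$ trees) for (h), and Theorem \ref{thm:pips}(b) for (i).
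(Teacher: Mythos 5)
Your upper-bound half is essentially the paper's argument: reduce via Theorem \ref{prog2trees} to index sets for nearly recursively bounded trees, using the $R_e$ trick of Theorem \ref{thm:ici}/\ref{thm:irc} and the $S_e$ construction of Theorem \ref{thm:rbce} where the property does not presuppose $\mathit{Stab}(Q_e)\neq\emptyset$ or concerns recursive cardinality/perfectness, and absorbing the $\Sigma^0_2$ ``initial blocking set'' disjunct. That part is fine.

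The completeness half has a genuine gap. You propose the map $T_e\mapsto P_{T_e}$ of Theorem \ref{tree2prog} as a many-one reduction from, say, $\{e:T_e \text{ is } r.b. \text{ and } \card([T_e])=c\}$ to $\{e:Q_e \text{ has the } \arfs \text{ property and } \card(\mathit{Stab}(Q_e))=c\}$, but you only justify one implication: $T_e$ r.b. $\Rightarrow$ $P_{T_e}$ has the rec.\ $\FS$ property $\Rightarrow$ $P_{T_e}$ has the $\arfs$ property. For a many-one reduction you also need the converse, namely that if $P_{T_e}$ has the $\arfs$ property (and $\mathit{Stab}$ has the stated property) then $T_e$ is recursively bounded; Theorem \ref{tree2prog} says nothing about the $a.a.$ properties of $P_{T_e}$, so this does not follow from anything you cite. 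It can in fact be proved by re-examining the construction (an infinite level of $T$ makes \emph{infinitely many} atoms $\notpath(c(\sigma))$ have infinitely many inclusion-minimal supports, so $a.a.$ $\FS$ forces finite branching, and the supports of the atoms $\control(n)$ then let an $a.a.$ effective support procedure recover a recursive bound after hard-coding finitely many levels), but that supplementary argument is exactly what your proposal omits, and without it the claimed reductions for (a)--(i) are not established. The paper avoids this issue by a different construction: from $Q_e$ it builds $Y_e$, consisting of infinitely many indexed copies $Q_e^{\bar n}$ of $Q_e$, so that failure of the rec.\ $\FS$ property for $Q_e$ is amplified into failure of the $\arfs$ property for $Y_e$ (hence $Q_e$ has rec.\ $\FS$ iff $Y_e$ has $\arfs$), and then adds synchronizing clauses to get $Z_e$ with $\mathit{Stab}(Z_e)$ in effective one-to-one degree-preserving correspondence with $\mathit{Stab}(Q_e)$; the recursive $\ell$ with $Q_{\ell(e)}=Z_e$ then many-one reduces the already-proved complete rec.\ $\FS$ index sets of Section \ref{compl} to the corresponding $\arfs$ index sets. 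Either supply the missing converse for your $P_{T_e}$ route or adopt the $Y_e/Z_e$ amplification; as written, the hardness claims are unproved.
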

\begin{proof}
Let $f$ be the recursive function such that $T_{Q_e}=T_{f(e)}$ where $T_{Q_e}$
is as constructed in the proof of Theorem \ref{prog2trees}.  Then $f$ shows
that $\{e: Q_e$ has the $a.a.$ rec. $\FS$ property and $Stab(Q_e)$ is
nonempty$\}$ is many-one reducible to $\{e: [T_e]$ is nearly $r.b.$ and is
nonempty$\}$ which is $\Sigma^0_3$. Thus $\{e: Q_e\ \text{has the $a.a.$ rec. $\FS$
property and $Stab(Q_e)$ is nonempty}\}$ is $\Sigma^0_3$.  In this way, we can establish
the upper bound on the complexity of the index set for any  property of finite
normal predicate $a.a.$ FSP logic programs $Q_e$ where the property is
restricted to cases where $Stab(Q_e) \neq \emptyset$ from the corresponding
complexity of the corresponding property for nearly recursively bounded trees. 

For the other upper bounds, first, it is easy to see that $A=\{e:Q_e$ has the
$a.a.$ rec. $\FS$ property$\}$ is $\Sigma^0_3$ by simply writing out the
definition.  To see that $B = \{e:Q_e$ has the $a.a.$ rec. $\FS$ property and
$Stab(Q_e)$ is empty$\}$ is $\Sigma^0_3$, note that $e \in B$ if and only if $e
\in A$ and either (i) $Q_e$ has an initial blocking set or (ii) $Q_e$ does not
have an initial blocking set and $T_{Q_e}$ as constructed in the proof of
Theorem \ref{prog2trees} is nearly recursively bounded and $[T_{Q_e}]=
\emptyset$.  Since the predicate `$Q_e$ has an initial blocking set' is
$\Sigma^0_2$ and the predicate `$T_e$ is nearly recursively bounded and
$[T_e]= \emptyset$'  is a $\Sigma^0_3$ predicate, it follows that $B$ is
$\Sigma^0_3$.  To see that $C = \{e:Q_e$ has the $a.a.$ rec. $\FS$ property
and $Card(Stab(Q_e)) \leq c\}$ is $\Sigma^0_3$ for any $c \geq 1$, we can use 
the program $R_e$ constructed in the proof of Theorem \ref{thm:irc}.  That is,
$e \in C$ if and only if $R_e$ has the $a.a.$ rec. $\FS$ property and $Card(S_e)
\leq c+1$. Now by Theorem \ref{prog2trees}, $R_e$ has the $a.a.$ $\FS$ property
and $Card(S_e) \leq c+1$ if and only if $T_{R_e}$ is nearly recursively bounded
and $Card([T_{R_e}]) \leq c+1$.  But $\{e: T$ is nearly $r.b.$ and
$Card([T_{R_e}]) \leq c+1\}$ is $\Sigma^0_3$ so that $C$ is $\Sigma^0_3$. A
similar proof will show that $D = \{e:Q_e$ has the $a.a.$ rec. $\FS$ property
and is finite$\}$ is $\Sigma^0_3$ and $E=\{e:Q_e$ has the $a.a.$ rec. $\FS$
property and is countable$\}$ is $\Sigma^1_1$. 

Finally, for the upper bounds on the complexity for the index sets in parts
(g), (h), and (i), we can use the program $S_e$ constructed from $Q_e$ as in 
the proof of the Theorem \ref{thm:rbce}.  That is, it is easy to see that $Q_e$
has the $a.a.$ rec. $\FS$ property if and only if $S_e$ has the $a.a.$ rec.
$\FS$ property and that the cardinality of the set of recursive stable models of
$Q_e$ equals the cardinality of the  set of recursive stable models of $S_e$.
Moreover, $Stab(Q_e)$ is perfect if and only if $Stab(S_e)$ is perfect.  But $S_e$ has the
$a.a.$ $\FS$ property if and only if the tree $T_{S_e}$ as constructed in the
proof of Theorem \ref{prog2trees} is nearly recursively bounded. Let $g$ be
the recursive function such that $T_{g(e)} = T_{S_e}$.  Then the question
whether $e$ lies in the desired index set in parts (g), (h), and (i),  
can be reduced to the problem of whether $g(e)$ lies in the corresponding index
set for nearly recursively bounded trees. Thus the upper bounds the complexity 
of these index sets  follow from the complexity of the corresponding index 
sets for nearly recursively bounded trees in Section \ref{classes}. 

The completeness for each of the index sets in our theorem can be proved as
follows. Given a finite normal predicate logic program $Q_e$, we construct 
a finite normal predicate logic program $Y_e$ as follows.  Let $L_e$ denote the
underlying language of of $Q_e$ and $L_e^*$ be the language which contains 0,
$s$, and a predicate $R^*(z,x_1, \ldots, x_n)$ for every predicate $R(x_1,
\ldots, x_n)$ and a predicate $A^*(x)$ for every propositional atom $A$ in $L$
where none of $R^*$, and $A^*$ occur in $L_e$.  To ease notation, we shall let
$\bar{0} =0$ and $\bar{n} =s^n(0)$ for each $n \geq 1$.  Then by Proposition
\ref{aux}, there is a finite normal predicate logic Horn program $Q^{-}$ with a
recursive least model $M^{-}$ whose language contains the constant symbol $0$
as well as all the constant symbols of $L_e$ and the function symbol $s$ and
all the function symbols from $L_e$ and whose set of predicate symbols are
disjoint from the language $L_e^*$ which includes the predicates $\num(\cdot)$,
$\noteq(\cdot,\cdot)$, and $term(\cdot)$ such that for any ground terms $t_1$
and $t_2$: 
\begin{compactenum}
\item $\num(t_1)$ holds in  $M^{-}$ if and only if $t_1 = \bar{n}$ for some 
$n \geq 0$,
\item $\noteq(t_1,t_2)$ holds in  $M^{-}$ if and only if there exist 
natural numbers $n$ and $m$ such that $n \neq m$ and $t_1 = \bar{n}$ and 
$t_2 =\bar{m}$, and 
\item $\term(t_1)$ holds in model $M^{-}$ if and only if $t_1$ is a 
ground term in $L_e$. 
\end{compactenum}
Moreover, we can assume that $Q^-$ has the rec. $\FS$  property.  Then let $Y_e$
be the program $Q^{-}$ plus all clauses $C^*(x)$ that arise from clauses $C \in
Q_e$ by adding the predicate $\num(x)$ to the body where $x$ the first variable
of the language that does not occur in $C$, adding the predicate $\term(t)$ to
the body for each term that occurs in $C$, and by replacing each predicate
$R(t_1, \ldots, t_n)$ that occurs in $C$ by $R^*(x,t_1, \ldots, t_n)$ and each
propositional atom $A$ that occurs in $C$ by $A^*(x)$.  The idea is that as $x$
varies over $\{\bar{n}: n \geq 0\}$, these clauses will produce infinitely many
copies of the program $Q_e$. More precisely, we let $Q_e^{n}$ denote the set of
all clauses of the form $C^*(\bar{n})$. $Q_e^{n}$ is essentially an exact copy
of $Q_e$ except that we have extended all predicates and propositional atoms to
have an extra term corresponding to $\bar{n}$ and each clause contains the
predicate $num(\bar{n})$ and $\term(t)$ in the body for each term in the
original  clause.  Since none of the clauses $C^*(x)$ have any predicates from
$Q^{-}$ in their heads, it will be the case that in every stable model $M$ of
$Y_e$, $M$ restricted to the ground atoms of $Q^{-}$ will just be $M^{-}$.
Thus, in particular, 
\begin{compactenum}
\item $\num(t_1)$ holds in $M$ if and only if $t_1 = \bar{n}$ for some 
$n \geq 0$,
\item $\noteq(t_1,t_2)$ holds in $M$ if and only if there exist 
natural numbers $n$ and $m$ such that $n \neq m$ and $t_1 = \bar{n}$ and 
$t_2 =\bar{m}$, and 
\item $\term(t_1)$ holds in $M$ if and only if $t_1$ is a 
ground term in $L_e$. 
\end{compactenum}

Now, if $\PS$ is any $ground(Q_e)$-proof scheme, then we let $\PS^{n}$ be the
result of  adding $\num(\bar{n})$ to each clause in $\PS$ and $\term(t)$ to
each clause if $t$ occurs in $\PS$ and replacing each predicate $R(t_1, \ldots,
t_n)$ that occurs in $\PS$ by $R^*(\bar{n},t_1, \ldots, t_n)$ and each
propositional atom $A$ that occurs in $\PS$ by $A^*(\bar{n})$.  It is easy to
see that the all minimal $ground(Y_e)$-proof schemes that derive atoms outside
of $ground(Q^{-})$ must consist of an interweaving of the pairs from minimal  
$ground(Q^{-})$-proof schemes of $\num(\bar{n})$  and $\term(t)$ for each term
$t$ in $L_e$ that occurs in the proof scheme of the form $\PS^{\bar{n}}$ with
the pairs for some $ground(Q_e)$-proof scheme $\PS^{\bar{n}}$.  It follows that
if  $Q_e$ has the rec. $\FS$ ($\FS$) property, then  $Y_e$ has the rec. $\FS$
($\FS$) property. However, if $Q_e$ does not have the rec. $\FS$ property, then
it cannot be that $Y_e$ has the $a.a$ rec.  $\FS$ property since if we could
effectively find  all the inclusion-minimal supports of minimal $Y_e$-proof
schemes for all but finitely many atoms, then there would be some $n$ in which
we could find all the inclusion-minimal supports of minimal $Q^{\bar{n}}$-proof schemes 
for any atom which contains $\bar{n}$, which
would allow us to effectively find all the inclusion-minimal supports of
minimal $Q_e$-proof schemes for any ground atom of $L$. Similarly, if $Q_e$
does not have the $\FS$ property, then the $Y_e$ does not have the  $a.a.$
$\FS$ property. Thus $Q_e$ has the rec. $\FS$ ($\FS$) property if and only if
$Y_e$ has the $a.a.$ rec. $\FS$ ($\FS$) property.  

Next we want to add a finite number of predicate clauses to $Y_e$ to produce a
finite normal predicate logic program $Z_e$ which restricts the stable models
to be essentially the same relative to the atoms of $ground(Q^{\bar{n}})$ for
all $n \geq 0$. To this end, we let $a$ be an atom that does not appear in
$Y_e$ and for each predicate $R^*(z,x_1, \ldots, x_n)$ of $Y_e$, we add a clause
\begin{multline*}
C_{R^*} = a \leftarrow R^*(y,x_1, \ldots, x_n),
\neg R^*(z,x_1, \ldots, x_n), \noteq(y,z), \\
\term(x_1), \ldots, \term(x_n), \neg a 
\end{multline*}
and for each propositional atom $A$ of $L_e$, we add a clause 
\[
C_{A^*} = a \leftarrow A^*(y),\neg A^*(z), 
\noteq(x,y), \neg a.
\]
First, we observe that $a$ cannot belong to any stable model of $M$ of $Z_e$.
That is, if $a \in M$, that none of the clauses $C_{R^*}$ and $C_{A^*}$ will
contribute anything to $ground(Z_e)_M$. Thus no clauses with $a$ in the head
will be $ground(Z_e)_M$ so that $a$ will not be in the least model of $M$ and
$M \neq ground(Z_e)_M$.

Now suppose that $M$ is a stable model of $Z_e$ and $a \notin M$.  Then it is
easy to see from the form of our rules that for any predicate $R(x_1, \ldots,
x_n)$ of $L_e$, $M$ can only contain ground atoms of the form $R^*(t_0,t_1,
\ldots,t_n)$ where $t_0 = \bar{n}$ for some $n \geq 0$ and $t_1, \ldots, t_n$
are ground terms of $L_e$. Similarly, for each propositional atom $A$ of $L_e$
and ground term $t$, $A(t)$ in $M$ implies $t = \bar{n}$ for some $n \geq 0$.
We claim that for any predicate $R(x_1, \ldots, x_n)$ and any ground terms
$t_1, \ldots, t_n$ in $L_e$, either $D_{R,t_1, \ldots, t_n} =
\{R^*(\bar{n},t_1, \ldots, t_n): n \geq 0\}$ is contained in $M$ or is entirely
disjoint from $M$. That is, if there is an $n \neq m$ such that
$R^*(\bar{n},t_1, \ldots, t_n) \in M$ but $R^*(\bar{m},t_1, \ldots, t_n) \notin
M$, then the clause $C_{R^*}$ will contribute the clause 
\[
\bar{C}_{R^*} = a \leftarrow R^*(\bar{n},t_1, \ldots, t_n),
\noteq(\bar{n},\bar{m})
\]
to  $ground(Z_e)_M$ so that $a$ would be in $M$ since $M$ is a model of
$ground(Z_e)_M$ and, hence, $M$ is not a stable model of $Z_e$.  Similarly, for
each propositional atom $A$ in $L_e$ either $D_{A} = \{A^*(\bar{n}): n \geq
0\}$ is contained in $M$ or is entirely disjoint from $M$. That is, if there is
an $n \neq m$ such that $A^*(\bar{n}) \in M$ but $A^*(\bar{m}) \notin M$, then
the clause $C_{A^*}$ will contribute the clause 
\[
\bar{C}_{A^*} = a \leftarrow A^*(\bar{n}),\noteq(\bar{n},\bar{m})
\]
to  $ground(Z_e)_M$ so that $a$ would  be in $M$ and $M$ is not a stable model
of $Z_e$. It follows that the stable models of $Z_e$ are in one-to-one
correspondence with the stable models of $Q_e$. That is, if $U$ is a stable
model of $Q_e$, then there is a stable model $V(U)$ of $Z_e$ such that
\begin{compactenum}
\item $M^{-} \subseteq V(U)$; 
\item for all predicate symbols $R(x_1, \ldots, x_n)$ in $L_e$, and ground
terms $t,t_1, \ldots, t_n$ in $L_e^*$, $R^*(t,t_1, \ldots, t_n) \in V(U)$ if
and only  if $t = \bar{m}$ for some $m \geq 0$, $t_1,\ldots,t_n \in L_e$, and 
$R(t_1, \ldots, t_n) \in U$; and 
\item for all  propositional atoms $A$ in $L_e$ and ground terms $t$ in
$L_e^*$, $A^*(t) \in V(U)$ if and only $t = \bar{m}$ for some $m \geq 0$ and $A
\in U$. 
\end{compactenum}
In addition, it is easy to prove by induction on the length of proof schemes
that every stable model of $V$ of $Z_e$ is of the form $V(U)$ where 
\begin{compactenum}
\item for all predicate symbols $R(x_1, \ldots, x_n)$ and ground terms $t_1,
\ldots, t_n$ in $L_e$, $R(t_1, \ldots, t_n) \in U$ if and only if
$R(\bar{0},t_1 \ldots, t_n) \in V$; and 
\item for all propositional atoms $A$ in $L_e$, $A \in U$ if and only if 
$A^*(\bar{0}) \in V$. 
\end{compactenum}
It follows that there is an effective one-to-one degree preserving
correspondence between the $Stab(Q_e)$ and $Stab(Z_e)$. Now let $\ell$ be a
recursive function such that $Q_{\ell(e)} = Z_e$.  We observe that our theorem
states that the complexity of every property of finite normal predicate logic
programs which have $a.a.$ rec. $\FS$ property is the same as the corresponding
complexity of the same property of finite normal predicate logic programs with
just the rec. $\FS$ property.  For example, in Section \ref{proofs}, we proved
that for every positive integer $c$, $X= \{e: Q_e$ has the rec. $\FS$ property
and $Card(Stab(Q_e)) = c\}$ is $\Sigma^0_3$-complete while we want to prove 
that $Y=\{e: Q_e$ has the $a.a.$ rec. $\FS$ property and $Card(Stab(Q_e)) =
c\}$ is $\Sigma^0_3$-complete. Now $\ell$ shows that $X$ is many-one reducible
to $Y$ so that, since we have already shown that $Y$ is $\Sigma^0_3$, it must
be the case that $Y$  is $\Sigma^0_3$-complete.  All the other completeness
results follows from the corresponding completeness results in the same manner. 
\end{proof}

\begin{theorem}
\begin{compactenum}
\item[a.] $\{e: Q_e \ \text{has the $a.a.$ $\FS$ property}\}$ is
$\Sigma^0_4$-complete. 
\item[b.] $\{e: Q_e$ has the $a.a.$ $\FS$ property and $Stab(Q_e)$ is empty$\}$
and $\{e: Q_e$ has the $a.a.$ $\FS$ property and $Stab(Q_e)$ is nonempty$\}$
are $\Sigma^0_4$-complete.  
\item[c.] For any positive integer $c$, $\{e: Q_e$ has the $a.a.$ $\FS$
property and $\mathit{Card}(Stab$ $(Q_e)) > c\}$, $\{e: Q_e$ has the $a.a.$
$\FS$ property and $\mathit{Card}(Stab(Q_e)) \leq c\}$, and $\{e: Q_e$ has the
$a.a.$ $\FS$ property and $\mathit{Card}(Stab(Q_e)) =c\}$ are
$\Sigma^0_4$-complete. 
\item[d.] $\{e: Q_e$ has the $a.a.$ $\FS$ property and $Stab(Q_e)$ is
finite$\}$ and $\{e: Q_e$ has the $a.a.$ $\FS$ property and $Stab(Q_e)$ is
infinite$\}$ are $\Sigma^0_4$-complete.  
\item[e.] $\{e: Q_e$ has the $a.a.$ $\FS$ property and $Stab(Q_e)$ is
countable$\}$ and  $\{e: Q_e$ has the $a.a.$ $\FS$ property and $Stab(Q_e)$ is
countably infinite$\}$ are $\Pi^1_1$-complete and $\{e: Q_e$ has the $a.a.$
$\FS$ property and $Stab(Q_e)$ is uncountable$\}$ are $\Sigma^1_1$-complete.  
\item[f.] $\{e: Q_e$ has the $a.a.$ $\FS$ property and $Stab(Q_e)$ is
recursively empty$\}$, $\{e: Q_e$ has the $a.a.$ $\FS$ property and $Stab(Q_e)$
recursively nonempty$\}$, and $\{e: Q_e$ has the $a.a.$ $\FS$ property and
$Stab(Q_e)$ is nonempty and recursively empty$\}$ are $\Sigma^0_4$-complete. 
\item[g.] For every positive integer $c$, $\{e: Q_e$ has the $a.a.$ $\FS$
property and $Stab(Q_e)$ has recursive cardinality c$\}$, $\{e: Q_e$ has the
$a.a.$ $\FS$ property and $Stab(Q_e)$ has recursive cardinality  $\leq c\}$,
and $\{e: Q_e$ has the $a.a.$ $\FS$ property and $Stab(Q_e)$ has recursive
cardinality $= c\}$ are $\Sigma^0_4$-complete.
\end{compactenum}
\end{theorem}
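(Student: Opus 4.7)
The plan is to mirror the proof of the preceding theorem on $a.a.$ rec.\ $\FS$ programs, but everywhere substituting the $\Sigma^0_4$ ``$a.a.b.$'' index-set completeness results from Theorems \ref{thm:pirb}(e), \ref{thm:pire}(f), \ref{thm:pirc}(e), \ref{thm:pici}(d), \ref{thm:prbce}(d), \ref{thm:pbcc}(d), and \ref{thm:picu} in place of the $\Sigma^0_3$ ``$a.a.r.b.$'' results. Recall that the paper observes that the index-set complexities for $n.b.$ and $a.a.b.$ trees agree, so we may route everything through the ``nearly bounded'' formulation, which is what Theorem \ref{prog2trees} delivers.

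For the upper bounds, part (a) is $\Sigma^0_4$ by simply unwinding the definition. For parts (b)--(g), I would invoke Theorem \ref{prog2trees}: $Q_e$ has the $a.a.$ $\FS$ property iff $Q_e$ has an initial blocking set or $T_{Q_e}$ is nearly bounded. Since the first disjunct is $\Sigma^0_2$ and the second is $\Sigma^0_4$, the combined condition is $\Sigma^0_4$, and Theorem \ref{prog2trees}(1) routes cardinality, recursive-cardinality, and perfectness properties of $Stab(Q_e)$ into the corresponding properties of $[T_{Q_e}]$. To eliminate the ``initial blocking set'' nuisance when convenient and to handle the $\leq c$ and recursive-empty cases, I would reuse the $R_e$ construction (adjoin fresh atoms $a,\bar a$ and clauses $a\lar\neg \bar a$, $\bar a\lar\neg a$, with $a$ inserted into the body of every original clause) and the $S_e$ construction (disjoint merge with a fixed program whose stable models form a perfect, recursively empty $\Pi^0_1$ class), exactly as in the proofs of Theorems \ref{thm:ici}, \ref{thm:irc}, and \ref{thm:rbce}. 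Both transformations visibly preserve the $a.a.$ $\FS$ property, so the program-side question reduces to the corresponding tree-side question.

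For the completeness direction I need a reduction from $a.a.b.$ tree index sets to the program index sets. The key step is to promote Theorem \ref{tree2prog} to the ``$a.a.$'' level: show that the construction $P_T$ can be tweaked so that $T$ is nearly bounded iff $P_T$ has the $a.a.$ $\FS$ property while leaving all other guarantees of Theorem \ref{tree2prog} intact. The construction in its original form does not quite achieve this, because $\control(n)$ acquires infinitely many supports whenever $T$ has infinitely many length-$n$ nodes, even if each individual node of length $n$ has only finitely many children (consider $T=\{\emptyset\}\cup\{(i):i\in\omega\}\cup\{(i,j):j<2\}$, which is $a.a.b.$ yet makes $\control(1)$ infinitary). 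I would remedy this by replacing clause (6) with a per-node version
\[
\control(c(\sigma)) \lar \ipath(c(\tau)),\ \mathit{child}(c(\sigma),c(\tau)),
\]
where $\mathit{child}$ is a new primitive-recursive predicate of $P_{T,0}^{-}$ holding precisely on pairs $(c(\sigma),c(\tau))$ with $\tau$ an immediate successor of $\sigma$ in $T$, and retaining clause (7) reinterpreted as ranging over codes of nodes of $T$. Under this modification the supports of $\control(c(\sigma))$ biject with the immediate successors of $\sigma$ in $T$, so $P_T$ fails $a.a.$ $\FS$ at exactly the infinitely-branching nodes of $T$, giving the desired equivalence.

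With this strengthened Theorem \ref{tree2prog}, parts (a)--(g) follow by composing with the recursive function $f$ such that $Q_{f(e)}=P_{T_e}$: the completeness results of Theorems \ref{thm:pirb}(e), \ref{thm:pire}(f), \ref{thm:pirc}(e), \ref{thm:pici}(d), \ref{thm:prbce}(d), \ref{thm:pbcc}(d), and \ref{thm:picu} transfer to the program side in exactly the manner of the proofs in Section \ref{compl}; the $\Pi^1_1/\Sigma^1_1$ cases of part (e) follow from Theorem \ref{thm:picu} together with the same reduction. The main obstacle I anticipate is verifying that the rewired $\control$ clauses still conspire with clauses (4) and (5) so that every stable model of $P_T$ encodes an infinite path and so that the absence of any path still blocks the existence of a stable model via the self-defeating clause (7); a careful rerun of the ``least model of the Gelfond--Lifschitz reduct'' argument from the proof of Theorem \ref{tree2prog}, node by node, should suffice, but the details must be checked so that the mismatch exhibited by my example above is genuinely eliminated and not merely relocated.
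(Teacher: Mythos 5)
Your upper-bound argument is essentially the paper's (reduce through Theorem \ref{prog2trees}, with the $R_e$ and $S_e$ modifications for the $\leq c$ and recursive-cardinality/empty cases), but the completeness direction has a genuine gap. Your fix of the $\control$ clauses repairs only the $\control$ atoms; it does nothing about the $\notpath$ atoms. Via clause (4), every minimal proof scheme of $\notpath(c(\sigma))$ built from $\ipath(c(\tau))$ with $|\tau|=|\sigma|$, $\tau\neq\sigma$, has support $\{\notpath(c(\tau))\}$, and via clause (5) the same happens for every shorter $\tau$ incomparable with $\sigma$; these are pairwise incomparable singleton supports, hence all inclusion-minimal. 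So in your own test tree $T=\{\emptyset\}\cup\{(i):i\in\omega\}\cup\{(i,j):j<2\}$, which is nearly bounded, each of the infinitely many atoms $\notpath(c((i)))$ still has infinitely many inclusion-minimal supports, and the modified $P_T$ fails the $\afs$ property. In general your rewired program has the $\afs$ property iff every level of $T$ is finite, i.e.\ iff $T$ is bounded, not iff $T$ is nearly bounded; composing with the $\Sigma^0_4$-complete $a.a.b.$ tree index sets therefore does not transfer $\Sigma^0_4$-hardness, and no local tinkering with clause (6) can, because the level-wide supports created by clauses (4) and (5) are exactly what forces uniqueness of the path and cannot simply be removed.

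The paper does not attempt a general ``$T$ nearly bounded iff $P_T$ has the $\afs$ property'' strengthening of Theorem \ref{tree2prog}. Instead it reduces an arbitrary $\Sigma^0_4$ set $S$ directly: it builds trees $T_{\psi(a,e)}$ consisting of a copy of a fixed \emph{bounded} (indeed r.b.) tree $T_e$ grafted above the node $(0)$ --- with $T_e$ chosen in each part so that $[T_e]$ realizes the desired stable-model property --- together with a ``$\mathit{NotFirst0}$'' part made of finite-depth trees $V_{h(a,k)}$ which is nearly bounded iff $a\in S$. In the program $Q_{a,e}$ the path clauses (1)--(7) are guarded by $\firstO$, so all stable models, and all the level-dependent $\notpath$/$\control$ supports, live on the bounded part (hence only finitely many supports per atom there); off the $0$-branch, $\notpath$ is derived by a Horn clause with empty support; and a new predicate $in(\cdot)$ is added whose atoms $in(c((m,n,0)))$ acquire infinitely many supports exactly at the infinitely branching nodes of the $\mathit{NotFirst0}$ part. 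Thus the $\afs$ property of $Q_{a,e}$ is decoupled from its stable models and is equivalent to $a\in S$, which is the idea your proposal is missing. (Your treatment of part (e) also leans on the flawed transfer; the paper instead reuses the program-copying reduction $Z_e=Q_{\ell(e)}$ from Theorem \ref{thm:aarFSP}, which is available there because the $\FS$ and $\afs$ completeness classes coincide at the $\Sigma^1_1/\Pi^1_1$ level.)
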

\begin{proof}
To establish the upper bounds for each of the index sets described in the
theorem, we can use the same strategy as we did in Theorem \ref{thm:aarFSP}.
That is, by Theorem \ref{prog2trees}, $Q_e$ has the $a.a.$ $\FS$ property  and
has a stable model if and only if  $T_{Q_e}$ is nearly bounded and   $[T_{Q_e}]
\neq \emptyset$.  Let $f$ be the recursive function such that
$T_{Q_e}=T_{f(e)}$.  Then $f$ shows that 
\[
A= \{e: Q_e \ \mbox{has the $a.a.$ $\FS$ property and $\mathit{Stab}(Q_e)$
is nonempty}\}
\]
 is many-one reducible to 
\[
B= \{h:T_h \ \mbox{is nearly bounded and $[T_h]$ is nonempty}\} 
\]
which is $\Sigma^0_4$. Thus $A$ is $\Sigma^0_4$.  In this way, we can establish
the upper bounded on the complexity of the index set for any  property of
finite normal predicate logic programs $Q_e$ which have the $a.a.$ rec. $\FS$
property where the property is restricted to cases such that $Stab(Q_e) \neq
\emptyset$ from the complexity of the corresponding property for nearly
recursively bounded trees. 

For the other upper bounds, first, it is easy to see that $\bar{A}=\{e:Q_e$ has
the $a.a.$ $\FS$ property$\}$ is $\Sigma^0_4$ by simply writing out the
definition.  To see that $\bar{B} = \{e:Q_e$ has the $a.a.$ rec. $\FS$ property
and $[T_{Q_e}]$ is empty$\}$ is $\Sigma^0_4$, note that $e \in \bar{B}$ if and
only if $e \in \bar{A}$ and either (i) $Q_e$ has an initial blocking set or
(ii) $Q_e$ does not have an initial blocking set and $T_{Q_e}$ as constructed
in Theorem \ref{prog2trees} is nearly bounded and $[T_{Q_e}]= \emptyset$.
Since the predicate `$Q_e$ has an initial blocking set' is $\Sigma^0_2$ and
the predicate `$T_e$ is nearly bounded and $[T_e]= \emptyset$' is a
$\Sigma^0_4$ predicate, it follows that $\bar{B}$ is $\Sigma^0_4$.  To see that 
$\bar{C} = \{e:Q_e$ has the $a.a.$ $\FS$ property and $Card(Stab(Q_e)) \leq c\}$
is $\Sigma^0_4$ for any $c \geq 1$, we can use the program $R_e$ constructed in
the proof of Theorem \ref{thm:irc}.  That is, $e \in \bar{C}$ if and only if
$R_e$ has the $a.a.$ $\FS$ property and $Card(\mathit{Stab}(R_e)) \leq c+1$.
Now by Theorem \ref{prog2trees}, $R_e$ has the $a.a.$ $\FS$ property and 
$\mathit{Card}(\mathit{Stab}(R_e)) \leq c+1$ if and only if $T_{R_e}$ is nearly
bounded and $Card([T_{R_e}]) \leq c+1$.  But $\{e: T$ is nearly bounded and
$Card([T_{R_e}]) \leq c+1\}$ is $\Sigma^0_4$ so that $\bar{C}$ is $\Sigma^0_4$.
A similar proof will show that $\bar{D} = \{e:Q_e$ has the $a.a.$ $\FS$
property and is finite$\}$ is $\Sigma^0_4$ and $\bar{E}=\{e:Q_e$ has the $a.a.$
$\FS$ property and is countable$\}$ is $\Sigma^1_1$. 

Finally, for the upper bounds on the complexity for the index sets in parts (f)
and (g), we can use the program $S_e$ constructed from $Q_e$ in the proof of
Theorem \ref{thm:rbce}.  That is, it is easy to see that $Q_e$ has the $a.a.$
$\FS$ property if and only if $S_e$ has the $a.a.$ $\FS$ property and that the 
cardinality of the set of recursive stable models of $Q_e$ equals the
cardinality of the set of recursive stable models of $S_e$. Moreover, the set
of stable models of $Q_e$ is perfect if and only if the set of stable models of
$S_e$ is perfect.  But $S_e$ has the $a.a.$ $\FS$ property if and only if  the
tree $T_{S_e}$ as constructed in Theorem \ref{prog2trees} is nearly recursively
bounded. Let $g$ be the recursive function such that $T_{g(e)} = T_{S_e}$.
Then the question whether $e$ lies in the desired index set in parts (f), (g),
and (h) can be reduced to the problem of whether $g(e)$ lies in the
corresponding index set for nearly bounded trees. Thus the upper bounds for the
complexity  of these index sets follow from the complexity of the corresponding
index sets for nearly bounded trees in Section \ref{classes}. 

For the completeness results in part (e) of the theorem, we can follow the same
strategy as in the proof of Theorem \ref{thm:aarFSP}.  By Theorem
\ref{thm:icu}, we know that  $X= \{e: Q_e$ has the $\FS$ property and
$Stab(Q_e)$ is uncountable$\}$ is $\Pi^1_1$-complete while we want to prove
that $Y=\{e: Q_e$ has the $a.a.$ $\FS$ property and $Card(Stab(Q_e)$ is
uncountable$\}$ is  $\Pi^1_1$-complete. Now the recursive function $\ell$ such 
that $Z_e = Q_{\ell(e)}$ constructed in the proof of Theorem \ref{thm:aarFSP} 
shows that $X$ is many-one reducible to $Y$ so that $Y$ is  $\Pi^1_1$-complete.
All the other completeness results in part (e)  of our theorem follow from the
corresponding completeness results in Theorem \ref{thm:icu} in the same manner. 

Unfortunately, we cannot  follow that same strategy as in Theorem
\ref{thm:aarFSP} in the remaining parts of theorem because the completeness
results for finite normal predicate logic programs with the $\FS$ property do
not match the completeness results for finite normal predicate logic programs
with $a.a.$ $\FS$ property. Instead we shall outline the modifications that are
needed to prove an analogue of Theorem \ref{tree2prog} that can be used to
prove the completeness result for finite normal predicate logic programs which
have the $a.a.$ $\FS$ property from the corresponding completeness results for 
nearly bounded trees. 

First, let us recall the construction of the trees that we  used to prove part
(d) of Theorem \ref{thm:pirb}.  We defined a primitive recursive function
$\phi(e,m,s) = (least \ n > m)(n \notin W_{e,s} \setminus \{0\})$. For any
given $e$, we let $V_e$ be the tree such that $(m), (m,0),(m,1) \in U_e$ for
all $m \geq 0$ and $(m,s+2) \in V_e$ if and only if $m$ is the least element
such that $\phi(e,m,s+1) > \phi(e,m,s)$.  This is only a slight modification of
the tree $U_e$ defined in that the proof of part (d) of Theorem \ref{thm:pirb}
in that we have ensured that $(m,0),(m,1) \in V_e$ are always in $U_e$ and so
that we are forced to shift the remaining nodes to right by one.  It will still
be that case that if $W_e \setminus \{0\}$ is cofinite, then there is exactly
one node in $V_e$ that has  infinitely many successors and $V_e$ is  bounded
otherwise.  Clearly there is a recursive function $f$ such that $T_{f(e)} =
V_e$. But then 
\[
e \in \omega \setminus \mathit{Cof} \iff T_{f(e)} \ \text{is bounded}.
\]
where $\mathit{Cof} = \{e: \omega \setminus W_e$ is finite$\}$. 

Next let $S$ be an arbitrary complete $\Sigma^0_4$ set and suppose that
$a \in S \iff (\exists k) (R(a,k))$ where $R$ is $\Pi^0_3$. By the usual
quantifier methods, we may assume that $R(a,k)$ implies that $R(a,j)$ for all
$j>k$.  By the $\Pi^0_3$-completeness of the set $\{e: T_e \ \text{is
bounded}\}$, there is a recursive function $h$ such that $R(a,k)$ holds if and
only if $V_{h(a,k)}$ is bounded and such that $V_{h(a,k)}$ is $a.a.$ bounded
for every $a$ and $k$.  Now we can define a recursive function $\psi$ so that
\[
T_{\psi(a,e)} = \{(0)\} \cup \{(k+1)^\smallfrown \sigma: \sigma \in
V_{h(a,k)}\} \cup \{0^\smallfrown \sigma: \sigma \in T_e\}.
\]
Thus we have two parts of the tree $T_{\psi(a,e)}$.  That is,  above the node
(0), we have a copy of $T_e$ and we shall call this part of the tree
$First0(T_{\psi(a,e)})$.  We shall refer to the remaining part of
$T_{\psi(a,e)}$ as $NotFirst0(T_{\psi(a,e)})$. Now if $a \in S$, then
$V_{h(a,k)}$ is bounded for all but finitely many $k$ and is nearly bounded for
the remainder.  Thus $NotFirst0(T_{\psi(a,e)})$  is nearly bounded. If $a
\notin S$, then, for every $k$, $V_{h(a,k)}$ is not bounded, so that
$NotFirst0(T_{\psi(a,e)})$  is not nearly bounded.  Thus $a \in S$ if and only
if $NotFirst0(T_{\psi(a,e)})$ is nearly bounded.  Hence if $T_e$ is $r.b.$ or
bounded, then $a \in S$ if and only if $T_{\psi(a,e)}$  is nearly bounded.

Next we describe a finite normal predicate logic program $Q_{a,e}$ such 
that there is a one-to-one effective correspondence between $Stab(Q_{a,e})$ 
and $[T_{\psi(a,e)}]$.  Our construction will just be a slight modification of 
the construction in Theorem \ref{tree2prog}. First we shall need some
additional predicates on sequences. That is, we let the predicate
$\firstO(c(\sigma))$ be true if and only if $\sigma$ is a sequence which starts
with 0 and the predicate  $\notfirstO(c(\sigma))$ be true if and only if
$\sigma$ is a nonempty sequence which does not starts with 0.  We let the
predicate $\thirdO(c(\sigma))$ be true if and only if $\sigma$ is a sequence of
length $\geq 3$ whose third element is 0 and we let the predicate
$\notthirdO(c(\sigma))$ be true if and only if $\sigma$ is a sequence of length
$\geq  3$ whose third element is not 0.  We shall also require a predicate
$\lengthonetwo( \cdot)$ which holds only on  codes  of sequences of length 1 or
2 and $\length3( \cdot)$ which holds only on codes of  sequences of length 3.
Finally, we shall need a predicate $\Agreeonetwo(\cdot, \cdot) $ which holds
only on pairs of codes $(c(\sigma),c(\tau))$ where $\sigma$ and $\tau$ are 
of length 3 and $\sigma$ and $\tau$ agree on there first two entries. 

As in the proof of Theorem \ref{tree2prog}, there exists the following three 
finite normal predicate logic programs such that the set of ground terms in
their underlying  language are all of the form $s^n(0)$ where $0$ is a constant
symbol and $s$ is a unary function symbol.  We shall use $n$ has an
abbreviation for the term $s^n(0)$. 
\begin{compactenum}
\item[(I)]  A finite predicate logic Horn program $P_0$ such that for a
predicate $\tree(\cdot )$ of the language of $P_0$, the atom $\tree(n)$ belongs
to the least Herbrand model of $P_0$ if and only if $n$ is a code for a finite
sequence $\sigma$ and $\sigma\in T_{\psi(a,e)}$.  
\item[(II)]  A finite predicate logic Horn program $P_1$ such that for a
predicate $seq(\cdot)$ of the language of $P_1$, the atom $seq(n)$ belongs to
the least Herbrand model of $P_1$ if and  only if $n$ is the code of a finite
sequence $\alpha \in \omega^{< \omega}$. 
\item[(III)]  A finite predicate logic Horn program $P_2$ which correctly
computes the following recursive predicates on codes of sequences.
\begin{compactdesc}
\item[(a)] $\samelength (\cdot ,\cdot )$. This succeeds if and only if both
arguments are the codes of sequences of  the same length.
\item[(b)] $\diff (\cdot ,\cdot )$. This succeeds if and only if the arguments
are codes of sequences which are different.
\item[(c)] $\shorter (\cdot ,\cdot )$. This succeeds if and only both arguments
are codes of sequences and the first sequence is shorter than the second
sequence.
\item[(d)] $\length (\cdot ,\cdot )$. This succeeds when the first argument is
a code of a sequence and the second argument is the length of that sequence.
\item[(e)] $\notincluded (\cdot ,\cdot )$. This succeeds if and only if both
arguments are codes of sequences and the first sequence is not the initial
segment of the second sequence.
\item[(f)] $\firstO(\cdot)$. This succeeds if and only if the argument is the
code of a sequence which starts with 0. 
\item[(g)] $\notfirstO(\cdot)$. This succeeds if and only if the argument is
the code of a nonempty sequence which  does not start with 0. 
\item[(h)] $\thirdO(\cdot)$. This succeeds if and only if the argument is the
code of a sequence of length $\geq 3$ whose third element is 0. 
\item[(i)] $\notthirdO(\cdot)$. This succeeds if and only if the argument 
is the code of a sequence of length $\geq 3$ whose third element is not 0. 
\item[(j)] $\Agreeonetwo (\cdot, \cdot )$. This succeeds if and only if the 
arguments are codes of a sequences of length $3$ which agree on the first two
elements. 
\item[(k)] $\lengthonetwo(\cdot )$. This succeeds if and only if the argument
is  a code  of a sequence  of length 1 or 2. 
\item[(l)] $\lengththree(\cdot )$. This succeeds if and only if the argument
is  a code  of a sequence  of length 3.
\item[(m)] $\num(\cdot)$. This succeeds if and only if the argument is either
$0$ or $s^n(0)$ for some $n \geq 1$.
\item[(n)] $\mathit{greater0}(\cdot)$. This succeeds if and only if the
argument is $s^n(0)$ for some $n \geq 1$.  
\end{compactdesc}
\end{compactenum}
Now let $P^{-}$ be the finite normal predicate  logic program which is the
union of programs $P_0\cup P_1\cup P_2$.  We denote its language by ${\cal
L}^{-}$ and we let $M^{-}$ be the least model of $P^{-}$.  By Proposition
\ref{aux}, we can assume that this program $P^{-}$ is a Horn program and for
each ground atom $b$ in the Herbrand base of $P^{-}$, we can explicitly
construct the set of all $P^{-}$-proof schemes of $b$. In particular, $\tree
(n) \in M^{-}$ if and only if $n$ is the code of node in $T_{\psi(a,e)}$. 

Our final program $P_T$ will consist of $P^{-}$ plus clauses (1)-(12) given
below. We assume that these additional clauses do not contain any of predicates
of the language ${\cal L}^{-}$ in the head. However, predicates from ${\cal
L}^{-}$ do appear in the bodies of clauses (1) to (12). Therefore, whatever
stable model of the extended program we consider, its trace on the set of
ground atoms of ${\cal L}^{-}$ will be $M^{-}$. In particular, the meaning of
the predicates of the language ${\cal L}^{-}$ listed above will always be the
same.

We are now ready to write the additional clauses which, together with the
program $P^{-}$, will form the desired program $Q_{a,e}$.  First of all, we
select three new unary predicates:
\begin{compactenum}
\item[(i)] $\ipath (\cdot )$, whose intended interpretation in any given stable
model $M$ of $Q_{a,e}$ is that it holds only on the set of codes of sequences
that lie on an infinite path  through $T_{\psi(a,e)}$ that starts with 0.  
This path  will correspond to the path encoded by the stable model of $M$,
\item[(ii)] $\notpath (\cdot )$, whose intended interpretation in any 
stable model $M$ of $Q_{a,e}$ is the set of all codes of sequences which are in
$T_{\psi(a,e)}$ but do not satisfy $\ipath( \cdot)$, and 
\item[(iii)] $\control (\cdot )$, which will be used to ensure that 
$\ipath( \cdot)$ always encodes an infinite path through $T_{\psi(a,e)}$.
\end{compactenum}
Next we include the same seven sets of clauses as we did in Theorem
\ref{tree2prog} to make sure that stable models $Q_{a,e}$ code paths through
the tree $T_e$ which sit above the node 0.  This requires that we modify those
clauses so that we restrict ourselves to the sequences that satisfy
$\firstO(X)$. 

This given, the first seven clauses  of our program are the following.\\
\ \\
(1) $\ipath (X) \longleftarrow \firstO(X),\tree(X),\ \neg \notpath (X)$\\
(2) $\notpath (X) \longleftarrow \firstO(X),\tree(X),\ \neg \ipath (X)$\\
(3) $\ipath (c(0))\longleftarrow $ \\
(4) $\notpath (X) \longleftarrow \firstO(X), \tree(X),\ \ipath (Y),$\\  
\mbox{}\ \hspace*{.2in} $\firstO(Y), \tree (Y), \samelength (X,Y), \diff (X,Y)$\\
(5) $\notpath (X) \longleftarrow \firstO(X),\tree(X),\  \firstO(Y), \tree(Y),\
\ipath (Y)$, \\ $\shorter (Y,X),$  
$ \notincluded (Y,X)$\\
(6) $\control (X)\longleftarrow \firstO(Y), \ipath (Y),\ \length (Y,X)$\\
(7) $\control (X) \longleftarrow \mathit{greater0}(X),\num(X),  \neg \control
(X)$\\ 
\ \\
Next we add the clauses involving an additional predicate $in(X)$ which is used
to ensure that the final program $Q_{a,e}$ has the {\it a.a.} $\FS$ property if
and only if the tree $T_{\psi(a,e)}$ is nearly bounded. \\
\ \\
(8) $\ipath (0) \longleftarrow$  \\
(9) $\notpath(X) \longleftarrow \notfirstO(X),\tree(X)$\\
(10) $in(X) \longleftarrow \notfirstO(X),\tree(X),\lengthonetwo(X)$\\
(11) $in(X) \longleftarrow \notfirstO(X),\tree(X),\length3(X),
\thirdO(X)$,\\  
\mbox{} \hspace*{.2in}
$\notfirstO(Y),\tree(Y),\length3(Y),\notthirdO(Y),\neg in(Y)$,\\
(12) $\control(0) \longleftarrow$\\
\ \\
Clearly, $Q_{a,e}  = P^{-} \cup \{ (1),\ldots ,(12)\}$ is a finite
predicate logic program. 

As in the proof of Theorem \ref{tree2prog}, we can establish 
establish a ``normal form'' for the stable models of $Q_{a,e}$. Each such
model must contain $M^{-}$, the least model of $P^{-}$. In fact,
the restriction of a stable model of $P_T$ to $H(P^{-})$ is
$M^{-}$. 
Given any $\beta =  
(0,\beta {(1)}, \beta {(2)}, \ldots ) \in \omega^{\omega}$, 
we let 
\begin{eqnarray*}\label{aaMbeta}
M_{\beta} = & M^{-} & \cup \{ \control (n) : n \in \omega\} 
\cup \{\ipath(0)\} \\
&& \cup  
\{ \ipath (c((0,\beta {(1)}, \ldots, \beta {(n)})) : n \geq 1 \}\\
&&  \cup
\{ \notpath (c(\sigma)) :\sigma \in T_{\psi(a,e)} \ \mbox{and} \ 
\sigma \not \prec \beta\}\\
&& \cup \{ in(c((m,n))): m > 0 \ \mbox{and} \ n \geq 0\} \\
&& \cup \{in(c((m,n,0))): m > 0 \ \mbox{and} \ n \geq 0\}.
\end{eqnarray*}
We claim that $M$ is a stable model of $Q_{a,e}$ if and only if 
$M = M_{\beta}$ for some $\beta \in [T_{\psi(a,e)}]$. 

First, let us consider the effect of the clauses (8)-(12).  Clearly, clause (8)
forces that $\ipath(0)$ must be in every stable model of $Q_{a,e}$  and the
clauses  in (9) force  that $\notpath(c(\sigma))$ is in every stable model of
$Q_{a,e}$  for all $\sigma \in T_{\psi(a,e)}$ which do not start with 0. Since
all the clauses (1)-(6) require $\firstO(c(\sigma))$ to be true, the only
minimal $Q_{a,e}$-proof schemes for  $\notpath(c(\sigma))$ for $\sigma \in
T_{\psi(a,e)}$ which do not start with 0 must use the Horn clause of type (9). 
Thus the minimal $Q_{a,e}$-proof schemes with conclusion $\notpath(c(\sigma))$
where $\sg$ does not start with 0 consist of the set of pairs of a  minimal
$P^{-}$-proof schemes of $\tree(c(\sg))$ followed by the tuple $\langle
c(\sg),(9)^*\rangle$ where $(9)^*$ is the ground instance of (9) where $X$ is
replaced by $c(\sg)$.  Thus support of such a proof-scheme is $\emptyset$.
Thus all the minimal $Q_{a,e}$- proof schemes of $\notpath(c(\sigma))$, where
$\sg$ does not start with 0, have empty support.  Similarly, $in(c(\sigma))$
can be derived only using clause (10) if $\sigma$ has length 1 or 2 so that all
minimal $Q_{a,e}$-proof schemes of $in(c(\sigma))$, where $\sg$ has length 1 or
2, have empty support.  Clause (12) is the only way to derive $\control(0)$ so
that the only minimal $Q_{a,e}$-proof scheme of $\control(0)$ uses clause (12)
and has empty support. 

The only way to derive $in(\sigma)$ for $\sigma$ of length 3 is via an instance
of clause (11).  Such clauses will allow us  to derive $in(c((m,n,0)))$ for any
$m >0$ and $n \geq 0$ with a proof scheme whose support is of the form
$\{in(c((m,n,p)))\}$ for some $p > 0$ where $(m,n,p) \in T_{\psi(a,e)}$. Since
we always put $(m,n,1) \in T_{\psi(a,e)}$, there is at least one such proof
scheme but there could be infinitely many of such proof schemes if  $(m,n,p)
\in T_{\psi(a,e)}$ for infinitely many $p >0$.  It then follows from our
definition of $T_{\psi(a,e)}$ that there will be finitely  many $m > 0$ and $n
\geq 0$ such that $in(c((m,n,0)))$ has infinitely many proof schemes if and
only if the tree $\mathit{NotFirst0}(T_{\psi(a,e)})$ is nearly bounded, which
occurs  if and only if $a \in S$.  Now, if $T_e$ is bounded, then we can use
the same argument that we used in Theorem \ref{tree2prog} to show that there
are only finitely many minimal $Q_{a,e}$-proofs schemes for the ground
instances of predicates in the heads of such clauses for $\sg \in T_{a,e}$ that 
start with 0.  It follows that if $T_e$ is bounded, then $a \in S$ if and only
if $Q_{a,e}$ has the $a.a.$ $\FS$ property. 

We can use the same arguments that we used in  Theorem \ref{tree2prog} to show
that the clauses (1)-(7) force that the only stable models of $Q_{a,e}$ are
$M_\beta$ where $\beta =  (0,\beta {(1)}, \beta {(2)}, \ldots ) \in
\omega^{\omega}$ and $(\beta {(1)}, \beta {(2)}, \ldots ) \in [T_e]$.  The only
difference is that the clause (12) allows us to derive $\control(0)$ directly. 
Thus if $T_e$ is bounded, then there will be an effective one-to-one degree
preserving correspondence between $Stab(Q_{a,e})$ and $[T_{\psi)a,e)}]$ and
$Q_{a,e}$ has the $a.a.$ $\FS$ property if and only if $a \in S$.  

The $\Sigma^0_4$-completeness results for the remaining parts of theorem can
all be proved by the following type  argument. Suppose, for example,  that we
want to prove that 
\begin{multline*}
A = \{e:Q_e \ \mbox{has the $a.a.$ $\FS$ property and
$Stab(Q_e)$}  \\
\text{is nonempty and recursively empty}\}
\end{multline*}
is $\Sigma^0_4$-complete. Then we know that there exists a recursively bounded
tree $T$ which is nonempty but which has no recursive paths (Jockusch and Soare
\cite{JS72a}.) Thus let us fix $e$ such that  $T_e$ is recursively bounded and
$[T_e]$ is nonempty and has no recursive elements.  Then for our $\Sigma^0_4$
predicate $S$, we have the property that $a \in S$ if and only if
$T_{\psi(a,e)}$ is nearly bounded and $[T_{\psi(a,e)}]$ is nonempty and has no
recursive elements.  But then $T_{\psi(a,e)}$ is nearly bounded and
$[T_{\psi(a,e)}]$ is nonempty and has no recursive elements if and only
$Q_{a,e}$ is $a.a.$ bounded and $Stab(Q_{a,e})$ is nonempty and has no
recursive elements. Now if $g$ is the recursive function such that $Q_{g(a)} =
Q_{a,e}$, then $a \in S$ if and only if $g(a) \in A$. Thus $A$ is complete for
$\Sigma^0_4$ sets. 
\end{proof}

\section{Conclusions}\label{concl}

In this paper, we have determined the complexity of various index sets
associated with properties of the set of stable models of finite normal logic
programs. In particular, we determined the complexity of the index sets
associated with various properties on the cardinality or recursive cardinality
of the set of stable models of a program relative to all finite normal
predicate logic programs as well as to all finite predicate logic programs that
have the $\FS$ (rec. $\FS$, $a.a$ $\FS$, $a.a.$ rec. $\FS$) property. The
results of this paper refine and extend earlier results on index sets for
finite predicate logic programs that appeared in \cite{MNR}.

In most cases, we showed that the problem of finding the complexity of such
index sets can be reduced to problem of finding the corresponding complexity of
an index set associated with the cardinality or recursive cardinality of the
set of infinite paths through primitive recursive trees, bounded primitive
recursive trees, and recursively bounded primitive recursive trees. However,
due to the fact that there is no analogue of the compactness theorem for the
stable model semantics of logic programs, there are a few cases where there is
is difference between the complexity of an index set associated with the
property of logic programs which have  no stable models and the corresponding
index set associated with the property of primitive recursive trees which have
no infinite paths. 

Nevertheless, we have shown that there is a close connection with the problem
of finding stable models of finite predicate logic programs and the problem of
finding infinite paths through primitive recursive trees. In fact, our original
definitions of the finite support property and recursive finite support
property were motivated by trying to find the analogue in logic programs of
bounded and recursively bounded trees.  Moreover, in this paper, we defined the
new concept of decidable logic programs based on finding an analogue of
decidable trees. Thus while the computation of the stable model semantics of
logic programs may, at the first glance, look different from the classical
Turing-machine based computations, our results show once more the unity of
underlying concepts and abstractions so beneficial to both Computer Science and
Computability Theory.

\section*{Acknowledgements}
During the work on this paper D. Cenzer was partially supported by NSF grant
DMS-65372. V.W. Marek was partially supported by
the following grants and contracts: Image-Net: Discriminatory Imaging and
Network Advancement for Missiles, Aviation, and Space, United States Army SMDC
contract, NASA-JPL Contract, Kentucky Science and Engineering Foundation grant,
and NSF ITR: Decision-Theoretic Planning with Constraints grant.
J.B. Remmel was partially supported by NSF grant DMS 0654060.

\end{document}